\documentclass[12pt]{article}

\usepackage[margin=1in]{geometry}
\usepackage{amsmath, amssymb, amsthm}
\usepackage{mathtools}
\usepackage{hyperref}
\usepackage[round,authoryear]{natbib}
\bibpunct{(}{)}{;}{a}{}{,}
\usepackage{enumitem}
\usepackage{booktabs}
\usepackage{array}
\usepackage{tikz}
\usetikzlibrary{arrows.meta,positioning,calc,decorations.pathreplacing}

\hypersetup{
	 unicode=true,
  colorlinks=true,
  linkcolor=blue,
  citecolor=blue,
  urlcolor=blue
}

\usepackage{macros_prx_econ}

\newenvironment{crusoe}[1][]{%
 \par\medskip
 \noindent\textbf{Running example: Crusoe\ifx\\#1\\\else\ --- #1\fi.}%
 \par\nobreak\smallskip
}{%
 \par\medskip
}

\title{{\Large A Formalization of Austrian Economics}\\[6pt]
\small\emph{Praxeological Foundations:\\
The Base System and Its Derived Theorems}}
\author{%
  Rafa{\l} Komendarczyk\thanks{Tulane University, Department of Mathematics. Corresponding author: \texttt{rako@tulane.edu}.}%
  ,\ Walter E. Block\thanks{Loyola University New Orleans, J.A. Butt College of Business. \texttt{wblock@loyno.edu}.}%
  ,\ John Levendis\thanks{Tulane University, Connolly Alexander Institute for Data Science (CAIDS). \texttt{jlevendis@tulane.edu}.}\\
  and Frank J. Tipler\thanks{Tulane University, Department of Mathematics. \texttt{tipler@tulane.edu}.}%
}
\date{\today}

\begin{document}
\maketitle

\begin{abstract}
This paper presents an axiomatization of Ludwig von Mises' praxeology in many-sorted first-order logic, 
isolating the foundational layer. We introduce a formal language with five sorts ({\sf Actors}, {\sf Actions},
 {\sf Ends}, {\sf Things}, {\sf Times}) 
and six primitive relations ({\em Acts}, {\em Avail}, {\em EndOf},
{\em Use}, a preference order, and a time order), together
with a base axiom system organized into three layers: the structure
of action itself, the actor's preference order together with its
demonstration in choice, and material scarcity. The base system captures purposeful action in its bare praxeological form.

Working entirely within the base system we derive the core classical Misesian propositions as Hilbert-style
theorems: the asymmetry of demonstrated preference, the existence of opportunity cost, the structural
scarcity of time, the subjectivity of opportunity cost, the law of diminishing marginal utility, 
and the increasing marginal disutility of labor. Where a theorem requires structure beyond 
the praxeological core -- as with diminishing marginal utility -- the additional premises are made explicit; 
identifying these hidden premises is one of the methodological payoffs of the approach.

A self-contained \Lean{} companion encodes the language as \Lean{}
type classes and constructs a concrete infinite-time Robinson Crusoe
model whose acceptance by the type-checker is a constructive
consistency proof of the full base theory.
\end{abstract}

\newpage

\tableofcontents

\section{Introduction}
\label{sec:intro}

\subsection{Mises' programme and the role of formalization}
\label{subsec:mises_programme}

In \emph{Human Action} Ludwig von Mises set out to ground
economic theory in a small number of propositions about purposeful
action --- propositions whose evidence, he held, is implicit in
the very fact that human beings act at all.  The resulting
discipline, \emph{praxeology}, was advanced with little use of mathematical notation.  Critics
of the school sometimes treat the absence of equations as a
feature of the substance of praxeology rather than as a
historical accident of its presentation.  As we observe in Section~\ref{subsec:historical} below, this
perception is mistaken: nothing in the foundation of praxeology
militates against formal analysis.\footnote{This is also the
methodological position of \citet{Hudik2014}, developed in
Section~\ref{subsec:historical}; the present paper is a
constructive realisation of that diagnosis at the foundational
layer.}  What it does militate against
is a particular interpretive use of formal analysis ---
\emph{testing} synthetic-a-priori propositions by statistical
means.  Praxeological propositions are not refutable by
econometric tests, but they are perfectly susceptible to
mathematical formulation.  Rothbard's \emph{Man, Economy, and
State}~\citep{Rothbard2009} is the most extensive systematic
working-out of this programme; the present formalization draws
on both Mises's verbal axiomatics and Rothbard's illustrative
reconstructions of it, with citations at the relevant theorems.

The present paper carries out one such formulation.
We axiomatize Mises' praxeology in many-sorted first-order
logic \citep{Enderton2001}.  The base theory $\Tprx$\footnote{The
subscript ``$\mathrm{prx}$'' abbreviates \emph{praxeology}:
$\Tprx$ is the base praxeological theory, and $\Lprx$
(Section~\ref{sec:language}) is the formal language it is
written in.} captures the structure of
purposeful action. 
Within
this axiomatization,  classical Misesian propositions can be
derived as formal theorems, and the question whether the
policy-maker's allocation problem admits an algorithmic solution
becomes a question of computability theory.

The motivation is not to replace verbal praxeology but to
\emph{verify} its propositions on the grounds of first-order
logic.  The Hilbert-style format used throughout
makes every inference explicit: every step is either an instance
of an axiom, an application of a definition, or a logical
deduction from previously established statements.  Where a
result requires premises beyond the praxeological core --- as
with diminishing marginal utility --- this becomes visible in
the proof itself.
The exposition therefore doubles as a diagnostic of the
\emph{premise structure} of standard Austrian arguments.

Mises is emphatic that ``Human action is necessarily always
rational; the term `rational action' is therefore pleonastic''
\citep[Ch.~I, \S4]{Mises1949}.\footnote{``Rational'' here is in
the formal sense of \axref{O0}--\axref{O4} (a strict total order
on choice-relevant ends, grounded in choice) and the
(MU)-rationality of
allocation; it does not import the colloquial connotation of
cool deliberation.  \citet[ch.~1,
pp.~6--7]{Rothbard2009} preempts that misreading directly:
``The individual may make a decision to act hastily, or after
great deliberation\ldots\ none of these courses affects the
fact that action is being taken.''}  In Mises's verbal-deductive
tradition the substantive structure is packed into the single
word \emph{rational}, and the reader unpacks it from context.
That structure has four pieces --- a strict ranking of ends,
homogeneity of units, scarce means, and a schedule of allotments
respecting the ranking.
The argument is verbally precise.

It is not, however, mathematically precise, and that is the
gap the present formalization is designed to close.  Written
symbolically, every implicit piece of ``rational'' must become a
stated premise, or the inference is invalid: ``Crusoe rationally
allocates his five buckets of water'' carries the verbal
argument, but the formal claim that an additional unit serves a
less urgent end follows only once those four pieces appear as
separately stated axioms.

The body of this paper is therefore an exercise in
decomposition: ``rational'' in $\Lprx$ is not a single primitive
but a structure distributed across axiomatic layers.  The
structural background is given by
\axref{P1}--\axref{P6} and \axref{T1}--\axref{T6} (\emph{action and time
structure}).  The substantive rationality content is then
unpacked into distinct facets: the action choice axiom \axref{C1}
(\emph{rationality of choice}); the preference axioms
\axref{O0}--\axref{O4} --- grounding in choice plus a strict
total order on the ends at each time
(\emph{rationality of preference}); the existence-of-scarcity
axiom \axref{S1} (\emph{rationality of scarcity recognition});
and, where required for downstream theorems, further
enrichments such as \axref{MU0}--\axref{MU4} for diminishing
marginal utility (\emph{rationality of allocation}) or
\axref{LB1} for the disutility of labor (\emph{rationality of
labor}).  Each is a labeled piece of what Mises packed into
the one word \emph{rational}.

The diagnostic value of the formalization is therefore that the labeled pieces
are now more clearly visible.  This is the
same move Hilbert made for Euclid: Greek geometry was obviously
correct; the axiomatisation made the structure of ``obviously''
mathematically tractable.

Mises grounds praxeology in a single irreducible proposition:
\emph{human action is purposeful behavior} ---
the Action Axiom of \emph{Human Action} Ch.~I~\S1
\citep[p.~11]{Mises1949}.  The axiom is presented as
apodictically certain: any attempt to deny it is itself an
act, and therefore an instance of what is being denied.  In
Mises' analysis, action requires three prerequisites in the
actor's situation:
\begin{enumerate}[label=(\roman*), itemsep=2pt]
\item \emph{felt uneasiness} --- the actor perceives his
present state as suboptimal;
\item \emph{an imagined better state} --- the actor conceives
of a state of affairs preferable to the present one; and
\item \emph{the expectation of efficacy} --- the actor expects
that purposeful behavior has the power to bring the better
state about, or at least to alleviate the uneasiness.
\end{enumerate}
If any of the three is absent, the actor remains at rest:
``a man perfectly content with the state of his affairs would
have no incentive to change things'' \citep[Ch.~I, \S2, pp.~13--14]{Mises1949}.
We do not promote prerequisites (i)--(iii) to primitive
predicates: doing so would not affect any theorem here, so they
are recorded only verbally.

\subsection{Scope and roadmap}
\label{subsec:scope}

\noindent
The rest of this introduction sets the stage:
Section~\ref{subsec:historical} gives the historical and
methodological context --- why mathematics has been sparse in
Austrian economics, what changes when praxeology is restated in
first-order logic, and what the paper offers a mathematician
unfamiliar with praxeology --- and
Section~\ref{subsec:objections} anticipates two objections to
the base system before the formal development begins.

\medskip
\noindent
The body then proceeds as follows.

\begin{itemize}[itemsep=3pt]
\item Section~\ref{sec:language} introduces the formal language
$\Lprx$ --- five sorts and six primitive relations --- and the
base axiom system $\Tprx$ in three layers: action structure
(T1--T6, P1--P6, C1), preference and its demonstration in choice
(O0--O4), and material scarcity \axref{S1}.
\item Section~\ref{subsec:crusoe} works the apparatus through a
Robinson Crusoe model over $\mathbb{N}$-time, with a folded
state diagram that lets time run for as many rounds as needed.
\item Section~\ref{sec:derived_base} gives Hilbert-style proofs
of the core Misesian theorems derivable in $\Tprx$ alone (or in
$\Tprx$ plus explicit auxiliary premises), in three subsections:
the praxeological core (Subsection~\ref{subsec:derived_core}:
the asymmetry of demonstrated preference, Theorem~\ref{thm:asymm};
the existence of opportunity cost, Theorem~\ref{thm:opp_cost};
the structural scarcity of time, Theorem~\ref{thm:time_scarcity};
and the subjectivity of opportunity cost,
Theorem~\ref{thm:subjectivity}); diminishing marginal utility
with the five hidden premises \axref{MU0}--\axref{MU4} made
explicit (Subsection~\ref{subsec:derived_dmu}); and the
increasing marginal disutility of labor
(Subsection~\ref{subsec:derived_labor}).
\item Section~\ref{sec:outlook} sketches the directions for
future research that extend the present paper.
\item Two appendices support the development: a \Lean{} companion
(Appendix~\ref{sec:lean}) that encodes the axiom system as \Lean{}
type classes, proves the foundational theorems inside \Lean{}, and
constructs two concrete instances --- a finite three-period
snapshot of the Crusoe model of Section~\ref{subsec:crusoe} and
the full $\mathbb{N}$-time model itself --- whose acceptance by
the type-checker is a constructive consistency proof of the full
base theory $\Tprx$; and Appendix~\ref{app:dmu_proof}, the full
proof of Theorem~\ref{thm:DMU} (DMU) sketched in
Subsection~\ref{subsec:derived_dmu}.
\end{itemize}

\noindent
Throughout, axioms and theorems are followed by remarks mapping
the formalism to the original Misesian concepts.

\medskip
\noindent
What this paper does \emph{not} contain --- production,
ownership, exchange, money, and the transition map; the
policy-maker, the Mises Inexpressibility theorem, and the
undecidability of the global policy problem --- is sketched
as future research in Section~\ref{sec:outlook}.

\subsection{Formalization and Austrian methodology}
\label{subsec:historical}

The figures most closely identified with the Austrian school
---
\citet{Mises1949}, \citet{Menger1871},
\citet{BohmBawerk1889}, \citet{Hayek1945},
and \citet{Rothbard2009} ---
wrote almost exclusively in prose; their successors have
produced an extensive secondary literature defending that
choice on what are advanced as principled rather than
stylistic grounds.  The objections range from the classical
methodological
\citep{Mises1977,Rothbard1960,Rothbard1976,Hazlitt1959,LeoniFrola1977}
to the contemporary
\citep{Herbener1996,BarnettBlock2010,MurphyWutscherBlock2010}.
The common thread is that mathematical formulation imports
assumptions foreign to the praxeological core --- continuity,
indifference, cardinality, optimisation under unrestricted
quantification over states of the world --- and that an
Austrian theorem rewritten in symbols ceases, in some
substantive sense, to be Austrian.  That Murray N.\ Rothbard,
himself a capable mathematician, nonetheless cast \emph{Man,
Economy, and State} entirely in prose \citep{Rothbard2009} is
taken to show that the absence of equations from the canonical
literature reflects something more than historical accident.

This view has been challenged by three contemporary Austrian
authors, spanning the methodological spectrum.

\medskip
\noindent\textbf{Hud\'{i}k.}\;
\citet{Hudik2012,Hudik2014} treats the verbal style
of Austrian exposition as a contingent feature of the
school's presentation rather than as a constitutive
commitment of its method.  On his account mathematics is, in
the relevant respect, a language: more precise than English,
and the common language of mainstream economics.  Translating
an Austrian argument from prose into symbols is therefore
comparable to translating a book from English into French
--- the argument is preserved, while the medium changes.
Austrians who reject mathematics, on this view, are defending
a writing convention while paying real isolation costs in the
broader profession.  Citing \citet{Moorhouse1993},
Hud\'{i}k concludes that there is ``no major methodological
gulf between praxeology and neo-classical mathematical
economics,'' and recommends that Austrians write in symbols
when symbols are the clearer choice.

\medskip
\noindent\textbf{Linsbichler.}\;
\citet{Linsbichler2023} offers a
measured defense of formalization under a translation
paradigm.  He grants that translation is not always neutral:
a poor translation can introduce assumptions foreign to the
original, and a paper full of equations is not automatically
clearer than one written in words.  Some Austrian worries
about over-mathematisation are accordingly well founded.
Linsbichler denies, however, that there is a
\emph{principled} Austrian objection to formal methods: such
methods are tools, usable well or badly, and incompatibility
with Mises' programme is not entailed by their adoption.  His
practical recommendation is that Austrian economists who
already employ formal methods should stop apologising for the
practice, and that the school should make peace with the fact
that some Austrian work will be written in symbols.

\medskip
\noindent\textbf{Nguyen.}\;
\citet{Nguyen2026} denies the linguistic premise that
both Hud\'{i}k and Linsbichler share.  Mathematics is not, in
Nguyen's view, a language like English or French; it is
something different in kind.  The argument runs through the
mid-twentieth-century debate between Rudolf Carnap and Kurt
G\"odel.  Carnap held that mathematics is the syntax of a
language --- a system of rules for manipulating symbols, with
mathematical truth identified with the output of those rules.
G\"odel's incompleteness theorems show that any formal system
rich enough for elementary arithmetic contains true statements
that the system cannot prove; mathematical truth therefore
exceeds syntactic derivability, and any rendering of
mathematics as ``the syntax of a language'' is incomplete on
its own terms.  Applied to economics, this means that
rewriting an Austrian argument in symbols is doing something
more than translating: it commits the user to assumptions not
present in the verbal original, and reimports the
truth-versus-provability gap that G\"odel identified.  The
translation defense of Austrian formalization accordingly
rests, on this account, on a mistaken philosophy of
mathematics.  Nguyen's own conclusion is not anti-formalist:
he advocates that Austrians employ mathematical logic and the
theory of part-whole relations, but on grounds independent of
the translation thesis, and recommends giving up the strongest
form of the synthetic-a-priori doctrine of praxeology.

\medskip
\noindent\textbf{The present paper's position.}\;
The three positions agree on a single practical
recommendation --- that Austrian work in symbols is worth
producing --- while disagreeing on the philosophical grounds.
Our view, taken in this paper, is to adopt Linsbichler's
diagnosis as the operating assumption.  Mises' praxeological programme is axiomatised in
many-sorted first-order logic, and the mathematical content
is advanced not as a translation in Hud\'{i}k's sense (which
Nguyen has shown to be philosophically under-supported) but
as a \emph{Hilbert-style verification} of the underlying
verbal arguments: every premise is written, every step is
named, and the derivation of any Misesian theorem reduces to
a finite application of the inference rules of first-order
logic.  Where the formalization imports substantive
assumptions beyond Mises' verbal programme, those assumptions
appear in the axiom list and can be inspected, debated, and
rejected by the reader; the diagnostic value of the
formalization, as we observed in
Section~\ref{subsec:mises_programme}, lies precisely in this
visibility.

The classical Austrian objection that formalization imports
econometric methodology --- that mathematical analysis
inevitably entails \emph{testing} synthetic-a-priori
propositions by statistical means --- is left intact.  When
mainstream economists perform statistical work on subjects
that admit of synthetic-a-priori reasoning, such as rent
control or the minimum wage, they typically claim to be
\emph{testing} the basic economic laws involved; if the
statistical findings do not bear the laws out, the laws are
held to be falsified.  Austrians maintain the opposite: an
econometric analysis that fails to detect unemployment among
workers whose productivity falls below a mandated minimum
wage shows that the analysis is flawed, not that the law of
demand for labor has been refuted.  What is known on the
basis of the logic of the situation remains valid; the
legitimate function of empirical work in this domain is to
\emph{illustrate}, not to test, the foundations of
economics.  The present formalization likewise verifies,
rather than tests, the propositions of praxeology, and
nothing in the apparatus assigns a refuting role to
statistical evidence.

A further benefit of the formalization accrues to readers
trained in mathematical logic.  For the first time, the
praxeological foundations of Austrian economics can be
introduced to that audience in the language of first-order
logic, with a stand-alone consistency proof available in the
form of the \Lean{} kernel's acceptance of the Crusoe model
(Appendix~\ref{sec:lean}).

\subsection{Anticipating two objections to the base system}
\label{subsec:objections}

The base system is deliberately thin, and a reader may take its
thinness for a defect.  Two worries are worth meeting head-on.

\medskip
\noindent\emph{Is this merely neoclassical revealed-preference
theory in discrete dress?}  It is not.  The preference primitive is
purely ordinal: \axref{O2}--\axref{O4} carry no cardinal measure and
no indifference relation.  And the order carries a separate time
index, with no axiom tying $\pref{a}{t}$ to $\pref{a}{s}$ for distinct
times: the base system imposes no \emph{global} utility
$u:\mathsf{Ends}\to\mathbb{R}$ --- a single scale holding at all times.  The constancy of preference
across observations that the Samuelsonian apparatus requires is
exactly the assumption the base system refuses: what choice puts on
record is Rothbard's \emph{demonstrated} preference, not the
constancy-laden \emph{revealed} preference the recovery program needs
--- the point developed, with its Rothbardian pedigree, at
Definition~\ref{def:revpref} and Remark~\ref{rem:constancy}.

\medskip
\noindent\emph{Does reasoning about roads not taken, and about
plans aimed at the future, need a richer logic than plain
first-order logic?}  At the base layer it does not.  Opportunity
cost is the end of an available action that was not performed
(Theorem~\ref{thm:opp_cost}), and availability is already a
primitive relation, so the option foregone is named directly, with
the relations the system already has --- no extra logical
machinery for possibility or tense (modal or temporal logic) is
required.  Uncertainty, expectation, and the gap
between an intended and a realized end are real phenomena; they are
deliberately left out of the base layer
(Remarks~\ref{rem:endof_intended} and~\ref{rem:causality}) and
belong to the production and profit-and-loss enrichments sketched
in Section~\ref{sec:outlook}.

\medskip
\noindent A third, positive point has already been made in
\S\ref{subsec:mises_programme}: splitting ``rational'' into
separately named axioms, whose roles are tabulated
(Remark~\ref{rem:dmu_method}), means a failed derivation points to
the missing assumption rather than hiding it inside a word --- the
diagnostic payoff of the approach.

\section{The Formal Language \texorpdfstring{$\Lprx$}{Lprx}}
\label{sec:language}

\subsection{Sorts and primitives}
\label{subsec:primitives}

We work in classical many-sorted first-order logic with equality \citep{Enderton2001,Hodges1997}.\footnote{Readers
more at home with set theory than with the logician's vocabulary
will find the apparatus entirely familiar: a \emph{sort} is a set,
a \emph{primitive relation} a subset of a Cartesian product of
those sets, a \emph{structure} a choice of both, and
``$S \models \varphi$'' the statement that $\varphi$ is true of
that structure --- nothing added, nothing lost.
Appendix~\ref{app:fol} gives a self-contained primer and glossary
for the full vocabulary.} The language
$\Lprx$ has five sorts and six primitive relations.  We use Latin
letters ($a, b, p, \ldots$) for actors and Greek letters ($\alpha,
\beta, \gamma, \ldots$) for actions throughout, so that the actor
$a$ and the action $\alpha$ are visually distinct even though both
appear in the same expression.

\medskip
\begin{center}
\begin{tabular}{lll}
\toprule
\textbf{Sort} & \textbf{Variables} & \textbf{Meaning} \\
\midrule
$\mathsf{Actors}$  & $a, b, p, \ldots$ \emph{(Latin)}       & Agents who act purposefully \\
$\mathsf{Actions}$ & $\alpha, \beta, \ldots$ \emph{(Greek)} & Purposeful acts \\
$\mathsf{Ends}$    & $E, F, G, \ldots$       & Goals toward which actions are directed\\
        &                         & (atomic or composite; see Rem.~\ref{rem:composite_ends}) \\
$\mathsf{Things}$  & $x, y, z, \ldots$       & Objects in the given universe\\
        &                         & (become means only via $\Use$;
                                    cf.\ Rem.~\ref{rem:means_via_use}) \\
$\mathsf{Times}$   & $t, s, r, \ldots$       & Moments at which actions occur \\
\bottomrule
\end{tabular}
\end{center}

The signature has six primitive relations:
\begin{align*}
&\Acts(a,\alpha,t) &&\text{``actor $a$ performs action $\alpha$ at time $t$''}\\
&\Avail(a,\alpha,t) &&\text{``action $\alpha$ is available to $a$ at time $t$''}\\
&\EndOf(\alpha,E) &&\text{``action $\alpha$ is directed toward end $E$''}\\
&\Use(\alpha,x) &&\text{``action $\alpha$ employs thing $x$''}\\
&E \pref{a}{t} F &&\text{``actor $a$ at time $t$ ranks end $E$ above end $F$''}\\
&t < s &&\text{``time $t$ strictly precedes time $s$''}
\end{align*}

These primitives translate directly from
Mises' account of action \citep{Mises1949}. Actors are purposeful agents. Actions
are the central category: every action is directed toward an end (goal) and
employs means (things). The time ordering is irreversible --- a fundamental
feature of the praxeological system distinguishing it from atemporal logic.
The preference primitive $\pref{a}{t}$ is the actor's \emph{scale of
values} --- the valuation component of Mises's state vector.
It is primitive in the
\emph{language} but hidden to \emph{observation}: the only access to
it is through choice, as Mises insists --- ``the scale of values or
wants manifests itself \emph{only} in the reality
of action'' \citep[p.~95]{Mises1949}.  Layer~2 of
Section~\ref{subsec:axioms} makes both halves formal: a
\emph{demonstrated-preference record} (Rothbard) derived from $\Acts$
(Definition~\ref{def:revpref}) and a grounding axiom \axref{O0}
requiring the primitive order to agree with that record.

\medskip
The next four remarks record what the primitives \emph{intend} but do
not formally encode --- meta-level conventions about the intended
interpretation of each sort and relation.  None of them adds an axiom
to $\Tprx$; each pins down semantic content that future enrichments
might internalise.

\begin{remark}[The Actions sort\footnote{i.e. the set of possible actions, \citep{Enderton2001}} excludes reflexive behavior]
\label{rem:actions_purposeful}
The Actions sort is intended to range only over \emph{conscious,
purposeful} action.  Reflexes, autonomic nervous-system responses,
and other unconscious bodily reactions are deliberately outside the
sort, even though an external observer might describe them as
``things the body does''.  Mises is explicit:
``conscious or purposeful behavior is in sharp contrast to
unconscious behavior, i.e., the reflexes and the involuntary
responses of the body's cells and nerves to stimuli''
\citep[Ch.~I, \S1]{Mises1949}.  This is not a formal axiom of
$\Tprx$ --- the language has no predicate that distinguishes
conscious from unconscious behavior --- but a constraint on the
intended populations of the sort when specifying a model.
\end{remark}

\begin{remark}[$\EndOf$ records intended, not realized, ends]
\label{rem:endof_intended}
The relation $\EndOf(\alpha, E)$ records the end the actor
\emph{aims at} when performing $\alpha$, not an objective
guarantee that $\alpha$ in fact produces $E$.  Mises is again
explicit: ``a means is every thing which acting men consider as
such'' \citep[Ch.~IV, \S1]{Mises1949}.  The praxeological framework
treats man as ``weak and subject to error''; the gap between an
action's intended end and its realized outcome is the formal home
of profit and loss, but at the base layer of $\Tprx$ this gap is
silent: $\Acts(a, \alpha, t)$ records that $a$ performed $\alpha$
intending the end of $\alpha$, not that the intention succeeded.
The fallibility apparatus that activates this gap is introduced in
future work on extensions as a small enrichment.
\citet[ch.~1, p.~7]{Rothbard2009} ties the gap directly
to uncertainty: ``the omnipresence of uncertainty introduces the
ever-present possibility of error in human action.  The actor may
find, after he has completed his action, that the means have been
inappropriate to the attainment of his end.''  A profit/loss apparatus
in future research (Section~\ref{sec:outlook}) formalizes the consequences.
\end{remark}

\begin{remark}[Actions are individuated by meaning]
\label{rem:action_individuation}
$\EndOf(\alpha, E)$ and $\Use(\alpha, x)$ carry no actor or time
argument: an action has the same end and the same inputs wherever
it occurs.  This is not the claim that one \emph{bodily movement}
serves a single end for all actors at all times --- walking can be
exercise, commuting, or escape --- but a consequence of how the
Actions sort is individuated.  Actions in $\Lprx$ are individuated
\emph{by their meaning}: walking-as-exercise and
walking-as-commuting aim at different ends and are therefore
\emph{distinct elements} of the sort.  This is Mises's own action
concept --- an action is defined by the meaning the actor gives it,
not by the physiology \citep[Ch.~I, \S1]{Mises1949} --- and it is
what makes the global $\EndOf(\alpha, E)$ harmless: a different
intention is already a different action.  The actor- and
time-indexed alternative $\EndOf(a, \alpha, t, E)$,
$\Use(a, \alpha, t, x)$ is recovered by taking the Actions sort to
consist of fully individuated tokens (actor--movement--time
triples); we keep the lighter signature at the base layer, where
meaning-individuation makes the extra indices redundant, and add
finer indexing only where an enrichment needs it.
\end{remark}

\begin{remark}[``Things'' become ``means'' only via $\Use$]
\label{rem:means_via_use}
$\Lprx$ has a sort for Things but no sort for \emph{means}.
The reason is that \emph{being-a-means is not an intrinsic
property of any thing in isolation}: it is a relational fact
that holds between an action and a thing.  Mises is explicit:
``Means are not in the given universe; in this universe there
exist only things.  A thing becomes a means when human reason
plans to employ it for the attainment of some end''
\citep[Ch.~IV, \S1]{Mises1949}.  Promoting Means to a primitive
sort would force this relational, action-dependent property into
a unary predicate on its own, contrary to Mises's analysis.

\smallskip
The relational framing of $\Use(\alpha, x)$ captures the
correct dependency directly: a thing $x$ enters the
praxeological apparatus as a \emph{means} exactly when there is
an action $\alpha$ with $\Use(\alpha, x)$ --- i.e., when some
action employs it.  Different actions make different things
into means; the same thing serves as a means under one action
and as inert material under another.

\smallskip
The same relational framing absorbs Mises's notion of
\emph{services} without adding apparatus: a service is, in our
terms, the use one makes of a thing \citep[Ch.~VII, \S1]{Mises1949}.  In
$\Lprx$, services do not require a separate sort; the relation
$\Use(\alpha, x)$ \emph{is} the service relation, recording the
particular use that action $\alpha$ makes of thing $x$.
\end{remark}

\begin{remark}[Causality is presupposed, not formalized]
\label{rem:causality}
Every instance of $\Acts(a, \alpha, t)$ silently carries a
commitment by actor $a$ that action $\alpha$ stands in some
causal relation
to its end --- otherwise $a$ would not act on it.  Mises insists
on this: ``acting requires and presupposes the category of
causality'' and ``where man does not see any causal relation, he
cannot act'' \citep[Ch.~I, \S5]{Mises1949}.  The base apparatus
$\Tprx$ does not contain a causality predicate or a belief-state
relation; the causal hypothesis is hidden inside the actor's
adoption of a particular $(\Use, \EndOf)$ pattern.  Making this
hypothesis explicit would require an epistemic enrichment
beyond the scope of the present paper.
\end{remark}

\subsubsection*{The choice menu}
\label{subsec:choice_menu}

The availability relation $\Avail$ determines, at every
$(a, t)$, the set of actions the actor is in a position to
perform.  This set --- the actor's \emph{choice menu} at $t$ ---
is the central object on which every later definition and
theorem in the paper hinges, and we fix notation for it now:
\begin{equation}\label{eq:choice_menu}
  \Cmenu_{a,t} \;:=\; \{\alpha \in \mathsf{Actions}
  : \Avail(a, \alpha, t)\}.
\end{equation}
We refer throughout to $\Cmenu_{a,t}$ as the choice menu of
actor $a$ at time $t$.

The choice menu
$\Cmenu_{a,t}$ of \emph{actions} should not be confused with
the end menu $\EndAt(a,t)$ of \emph{ends} introduced at
\eqref{eq:choice-menu-E_at} in
Section~\ref{subsec:axioms} ($\EndAt(a,t)$ is the image
$\EndOf(\Cmenu_{a,t})$ of the action menu under
$\EndOf$, restricted to choice-relevant ends).

\subsubsection*{Structures and models}

\begin{definition}[Structure and model]\label{def:model}
	A \emph{structure} for $\Lprx$ is an assignment of:
	\begin{enumerate}[label=(\roman*), itemsep=2pt]
		\item a nonempty set (a \emph{domain}) for each of the five sorts
		--- Actors, Actions, Ends, Things, Times;
		\item an interpretation of each primitive relation ($\Acts$,
		$\Avail$, $\EndOf$, $\Use$, $\succ$, $<$) as a relation on the
		appropriate domains.
	\end{enumerate}
	A structure is a \emph{model of\/ $\Tprx$} if every axiom of
	$\Tprx$ is true under this interpretation\footnote{A set of axioms does not describe a single
	intended object but rather delineates a \emph{class} of structures
	that satisfy those axioms.  This is the Hilbert--Tarski perspective
	\citep{Hilbert1899,Tarski1959}: the axioms of Euclidean geometry do
	not define ``the'' plane; they characterize every structure in which
	the axioms hold.  The same perspective governs $\Lprx$.}.  We write
	$S \models \Tprx$ to mean that the structure $S$ is a model of
	$\Tprx$.\footnote{This is the standard set-theoretic semantics
	for many-sorted first-order logic; Appendix~\ref{app:fol} is a
	self-contained primer, and \citet{Hodges1997,Enderton2001} give
	full treatments.}
\end{definition}

A structure for $\Lprx$ is a
complete description of a possible economic world: who the actors
are, what actions exist, which are available to whom, what ends
they serve, and so on.  The axioms $\Tprx$ constrain which such
worlds are \emph{admissible} --- they rule out, for example,
worlds where actors perform unavailable actions \axref{P2} or pursue
actions with no end \axref{P3}.  A model of $\Tprx$ is an admissible
economic world.  Theorems of $\Tprx$ are propositions true in
\emph{every} admissible world, not just in some particular
intended economy.

\subsection{Core axioms}
\label{subsec:axioms}

Each axiom below formalizes a feature of the Misesian account of
purposeful action.  We present them in three layers: the structure of
action itself, the actor's preference order together with its
demonstration in choice, and material scarcity.

\subsubsection*{Layer 1: Action structure}

\begin{remark}[Time is a praxeological category]
\label{rem:time_praxeological}
The time-sort of $\Lprx$ is not imported from physics; it is
constituted by action.  Mises:
``the idea of time is a praxeological category''
\citep[Ch.~V, \S2]{Mises1949}, and ``it is acting that provides
man with the notion of time'' \citep[Ch.~V, \S2]{Mises1949}.  This justifies treating
time as an axiomatised primitive sort rather than as an external
parameter borrowed from a physical model: the structure imposed
on $\mathsf{Times}$ by axioms \axref{T1}--\axref{T6} below is the structure
that the categorial form of action itself impresses on it
(directedness, irreversibility, the episodic ``sooner / later''
ordering of an actor's actions).
\end{remark}

\noindent\textbf{Time-order axioms.}\footnote{The axiom list is
deliberately not minimal.  \axref{T5} follows from \axref{T1} and
\axref{T2} (if $t<s$ and $s<t$ then $t<t$ by transitivity,
contradicting irreflexivity); \axref{T4} follows from \axref{T6}
together with the non-emptiness of the time domain (any time has a
successor by \axref{T6}); and \axref{P5} below is interderivable
with the action choice axiom \axref{C1} (instantiating \axref{P5} at
$s=t$ gives \axref{C1}; conversely \axref{C1} forces $t \neq s$
whenever two distinct actions are performed by one actor).  Each
is nonetheless stated separately because each carries its own
praxeological content and its own textual anchor in Mises; we do
not claim the list is independent.}  Time in the praxeological
system is directed and irreversible --- the past cannot be
revisited, and action proceeds in sequential episodes.
\begin{enumerate}[label=(T\arabic*), leftmargin=2.5em]
\item\label{ax:T1} \textbf{Irreflexivity.} $\forall t\;\neg(t<t)$.
\item\label{ax:T2} \textbf{Transitivity.} $\forall t\,s\,r\;(t<s\land s<r\Rightarrow t<r)$.
\item\label{ax:T3} \textbf{Trichotomy.} $\forall t\,s\;(t<s\lor t=s\lor s<t)$.
\item\label{ax:T4} \textbf{Nontriviality.} $\exists t\,s\;(t<s)$.
\item\label{ax:T5} \textbf{Irreversibility.} $\forall t\,s\;(t<s\Rightarrow\neg(s<t))$.
\hfill\emph{Time is directed; the past cannot be revisited.}
\item[(T0)]\label{ax:T0} \textbf{First time.} $\exists t_0\,\forall t\;(t_0\leq t)$.
\hfill\emph{Economic history has a beginning.}
\item[(T6)]\label{ax:T6} \textbf{Discreteness.}\footnote{As with
any first-order theory of discrete linear order,
\axref{T0}--\axref{T6} do not characterize $(\mathbb{N},<)$ up to
isomorphism: nonstandard models with limit-like tails (e.g.\
$\omega + \zeta$) also satisfy them.  Nothing in this paper or its
sequels depends on categoricity --- the theorems hold in every
model, the \Lean{} consistency model interprets $\mathsf{Times}$ as
$\mathbb{N}$ (Appendix~\ref{sec:lean}), and the computability
results of future research are statements about the model class
whose example economies are constructed over standard time.}
$\forall t\;\exists s\;(t<s\land\neg\exists r\,(t<r\land r<s))$.
\hfill\emph{Each moment has an immediate successor; action occurs in episodes.}
\end{enumerate}

\noindent\textbf{Incidence axioms.}  These capture the basic structure
of purposeful action as Mises describes it: every actor has options,
can only perform what is available, and every action is directed toward
exactly one end.
\begin{enumerate}[label=(P\arabic*), leftmargin=2.5em]
\item\label{ax:P1} $\forall a\,t\;\exists\alpha\;\Avail(a,\alpha,t)$.
\hfill\emph{The choice menu $\Cmenu_{a,t}$ is nonempty over every $(a, t)$.}
\item\label{ax:P2} $\forall a\,\alpha\,t\;(\Acts(a,\alpha,t)\Rightarrow\Avail(a,\alpha,t))$.
\hfill\emph{Only available actions can be performed.}
\item\label{ax:P3} $\forall\alpha\;\exists E\;\EndOf(\alpha,E)$.
\hfill\emph{Every action is directed toward some end.}
\item\label{ax:P4} $\forall\alpha\,E\,F\;(\EndOf(\alpha,E)\land\EndOf(\alpha,F)\Rightarrow E=F)$.
\hfill\emph{Each action has exactly one end.}\footnote{Together with~\axref{P3},
this makes $\EndOf$ the graph of a total function
$\textbf{Actions} \to \textbf{Ends}$ in every model of $\Tprx$.  We declare
$\EndOf$ as a relation rather than a function symbol so that $\Lprx$ remains
a uniform relational signature.  By contrast, $\Use$ is genuinely many-valued --- a single
action may employ several things --- and the relational signature is the
only one that fits the full set of primitives.}
\item\label{ax:P5} \textbf{Nonsynchronism.} $\forall a\,\alpha\,\beta\,t\,s\;(\Acts(a,\alpha,t)\land\Acts(a,\beta,s)\land\alpha\neq\beta\Rightarrow t\neq s)$.
\hfill\emph{Distinct actions by the same actor occur at different
times.}\footnote{Mises states this directly: ``Two actions of an
individual are never synchronous; their temporal relation is that
of sooner and later'' \citep[Ch.~V, \S4]{Mises1949}.}
\item\label{ax:P6} \textbf{Free-good exclusion.} $\forall a\,t\,\alpha\,x\;(\Acts(a,\alpha,t)\land\Use(\alpha,x)\Rightarrow\exists\beta\,s\;(\Avail(a,\beta,s)\land\Use(\beta,x)\land\neg\Acts(a,\beta,s)))$.
\hfill\emph{Every act of employment leaves some employment of the
same thing available but unrealised.}
\end{enumerate}

\noindent Axiom \axref{P6} is the formal expression of Mises' insistence
that economics concerns only \emph{scarce} means: ``means are
necessarily always limited, i.e., scarce''
\citep[Ch.~IV, \S1]{Mises1949}.  \citet[ch.~1,
pp.~4--5]{Rothbard2009} sharpens the framing into a general
means/general-conditions distinction: ``the environment external
to the individual may be divided into two parts: those elements
which he believes he cannot control and must leave unchanged,
and those which he can alter\ldots\ The former may be termed the
general conditions of the action; the latter, the means
used\ldots\ Air, then, though indispensable, is not a means, but
a general condition of human action and human welfare.''  Air
is excluded by~\axref{P6} not because it does not matter but
because the actor takes it as a general condition.
A good that is available in
unlimited quantity --- a ``free good'' --- generates no economic
problem and is excluded from the analysis.  What \axref{P6}
asserts is that nothing \emph{employed in action} is in this
position: every act of employment leaves some employment of the
same thing available yet unrealised, so employment is never free
of a foregone alternative.  The precise scope of the axiom ---
and its division of labor with the existence-of-scarcity axiom
\axref{S1} of Layer~3 --- is the subject of the next remark.

\begin{remark}[Scope of \axref{P6}]
\label{rem:p6_scope}
The unrealised alternative that \axref{P6} demands is deliberately liberal:
it may be a \emph{different} action competing for $x$ at the same
moment, or the \emph{same} action available at another moment and
not taken.  Both are genuine opportunity-cost readings; the
second is the \emph{temporal} margin on which durable means are
economized --- employing the boat today forgoes employing it at
some other time.  The stronger requirement that every employed
thing have a distinct competing action at the same moment would
be false in economically reasonable models: an absolutely
specific capital good --- Crusoe's boat, employed by
$\mathsf{DeepSeaFish}$ and by nothing else --- has no
cross-action competitor at any moment, yet is plainly scarce on
the temporal margin.  (Absolute specificity is precisely the
configuration on which the imputation theorem of future work
turns.)  Genuine same-moment cross-action rivalry is instead the
content of \axref{S1} below, which asserts existentially that
somewhere in the model two available actions compete for one
thing; the ownership enrichment of future work strengthens that
pattern further.  \axref{P6} and \axref{S1} are thus
complementary, and neither implies the other: \axref{P6} is
universal in form but per-thing liberal; \axref{S1} is sharper
in content but existential.
\end{remark}

\bigskip

\noindent\textbf{Action choice axiom \emph{(rationality of choice)}.}

\begin{enumerate}[label=(C1), leftmargin=2.5em]
\item\label{ax:C1} $\forall a\,t\,\alpha\,\beta\;(\Acts(a,\alpha,t)\land\Acts(a,\beta,t)\Rightarrow\alpha=\beta)$.
\hfill\emph{At any given time, each actor performs at most one
action.}\footnote{Together with~\axref{P1} and~\axref{P2}, this says $\Acts$
picks at most one action from each choice menu $\Cmenu_{a,t}$
(Eq.~\ref{eq:choice_menu}): the menu is non-empty (by~\axref{P1})
and the at-most-one performed action lies in it (by~\axref{P2}).
Axiom~\axref{O1} below (Always-acting) strengthens ``at most
one'' to ``exactly one''.}
\end{enumerate}

\noindent This axiom is the formal expression of \emph{choice}: at
each moment, the actor selects \emph{one} course of action from among
the available alternatives.  It is what makes demonstration possible ---
the act of choosing one action over others is the observable datum
that puts preference on record
(Definition~\ref{def:revpref} below).

\begin{remark}[Composite actions and unitarity]
\label{rem:composite_actions}
An element $\alpha$ of the Actions sort may represent what an
observer would decompose into simultaneous components ---
walking \emph{and} chewing gum, for instance.  One could
formalize this by letting the Actions sort carry labels drawn
from the power set of an underlying set of atomic components;
nothing in $\Tprx$ prevents such a labeling.  Axiom~\axref{C1},
however, treats the composite as a single praxeological unit:
at any time, each actor performs \emph{one} action, however
internally complex.  This reflects Mises's conception of
action as indivisible: ``Action is an attempt to substitute a
more satisfactory state of affairs for a less satisfactory
one'' \citep[Ch.~IV, \S4]{Mises1949}.  The agent does not
separately choose to walk and separately choose to chew; the
agent chooses one course of conduct whose description may
happen to mention both activities.  Decomposition into
components is an observer's modeling convenience, not a
praxeological primitive.  In $\Lprx$, this means the
granularity of ``atomic'' versus ``composite'' resides in how
one populates the Actions sort when specifying a structure,
not in the logic itself.\footnote{The ``exactly one action''
content of \axref{C1} should not be misread as requiring bodily
novelty: \emph{continuing} a previously chosen activity when
another was on the menu is itself an action token at $t$.
\citet[ch.~1, p.~6]{Rothbard2009} makes this explicit:
``Continuing to watch the game is just as much action as going
for a drive\ldots\ action does not necessarily mean that the
individual must stop doing what he has been doing and do
something else.''  What~\axref{C1} forbids is the actor's
performance of two distinct purposive units in one moment, not
the steady-state continuation of a previously selected one.}
\end{remark}

\begin{remark}[Composite ends]
\label{rem:composite_ends}
Symmetrically, nothing in $\Tprx$ requires ends to be atomic.
An element $E$ of the Ends sort may represent a complex state
of affairs --- ``arrive at the destination comfortably and on
time,'' for instance --- rather than a single indivisible
objective.  One natural formalization is to fix a set
$\mathcal{L}$ of \emph{atomic goals} (labels such as
``nourishment,'' ``shelter,'' ``leisure'') and let the Ends sort
be a subset of $\mathcal{P}(\mathcal{L})$: each end is a bundle
of goals that the action simultaneously serves.
Axioms \axref{P3}--\axref{P4} are fully compatible with such a labeling:
\axref{P3} requires that every action target \emph{some} bundle, and
\axref{P4} requires that this bundle be unique --- neither axiom
constrains whether the bundle is a singleton or a larger subset.
The preference order (a primitive of the signature, axiomatized
in the next subsection) then ranks bundles, not isolated goals,
which is precisely Mises's position: the actor ranks \emph{states of
affairs} as indivisible wholes, and any decomposition into
component goals is the theorist's analytical convenience
\citep[Ch.~IV, \S1]{Mises1949}.
\end{remark}

\subsubsection*{Layer 2: Preference and its demonstration in choice}

Two relations share this layer.  The primitive $\pref{a}{t}$,
declared in the signature of Section~\ref{subsec:primitives}, is
the actor's \emph{scale of values} (Mises).  The
\emph{demonstrated-preference record} $\revpref{a}{t}$ (Rothbard),
defined now, is the trace that choice leaves in the Layer-1 primitives.  The grounding axiom \axref{O0}
below ties the two together, and \axref{O2}--\axref{O4} give the
primitive order its structure.  Following Mises' insistence that
``the only source from which our knowledge concerning these scales
is derived is the observation of a man's actions''
\citep[p.~95]{Mises1949}, the record is the \emph{only} access
route to the order: the theory posits the scale but assigns it no
observational content beyond what choice reveals.

\begin{definition}[Demonstrated preference]\label{def:revpref}
Actor $a$ \emph{demonstrates a preference} for end $E$ over end $F$ at
time $t$, written $E\revpref{a}{t}F$,\footnote{We use Rothbard's term
\citep{Rothbard1956}: preference is \emph{demonstrated} in the act
itself, shown afresh at each $t$, with no axiom connecting distinct
instants (Remark~\ref{rem:constancy} below).  The relation
$\revpref{a}{t}$ is the bare, per-act core of what the choice-theory
literature calls \emph{revealed} preference
\citep{Samuelson1938,Richter1966}; the difference is exactly the
cross-observation constancy that the recovery program adds and we
decline.  Rothbard's choice of ``demonstrated'' over ``revealed''
marks just this point --- choice demonstrates a preference rather than
uncovering a standing one --- and that constancy is precisely his
principal objection to revealed-preference theory.  It is also what
the recovery rests on: \citet{Afriat1967}, building on
\citet{Houthakker1950}, shows that a finite family of choices is
rationalisable by a single utility function \emph{precisely} when it
satisfies the revealed-preference consistency axioms across
observations.  Time-indexing the record forgoes that constancy, and
with it any recovered utility, leaving only the ordinal, time-local
relation.} iff $E \neq F$\footnote{Under
\axref{P4} the clause $\alpha \neq \beta$ already follows from
$E \neq F$; both are stated so the definition reads correctly
independently of \axref{P4}.  The $E \neq F$ clause excludes the
trivial self-demonstration that means-ends multiplicity would
otherwise generate: at Crusoe's $t=1$
(Section~\ref{subsec:crusoe}) the chosen $\mathsf{BuildBoat}$ and
the available $\mathsf{BuildNet}$ share the end
$\mathsf{Capital}$, and without the clause the definition would
put $\mathsf{Capital} \revpref{\mathsf{C}}{1} \mathsf{Capital}$
on record.} and there exist actions $\alpha,\beta$ such that
\[
\Acts(a,\alpha,t)\;\land\;\Avail(a,\beta,t)\;\land\;\alpha\neq\beta
\;\land\;\EndOf(\alpha,E)\;\land\;\EndOf(\beta,F).
\]
\end{definition}

$E\revpref{a}{t}F$ holds whenever actor
$a$ chooses an action directed at $E$ while an action directed at
a different end $F$ was simultaneously available.  The record is
constituted by choice; the order it records is not.
The order's primitive status is a fact about the
\emph{signature}, not about the actor's psyche: $\pref{a}{t}$ is
the valuational primitive of the formal representation ---
Mises's ``instrument for the interpretation of a man's acting''
\citep[p.~95]{Mises1949} given explicit formal standing --- and
the theory assigns it no observational content beyond the record:
no introspection report, no interpersonal comparison, no cardinal
measure enters the language.  This is all that demonstrated
preference demands \citep{Rothbard1956}.\footnote{Mises himself
denies the scale independent existence: ``These scales have no
independent existence apart from the actual behavior of
individuals'' \citep[p.~95]{Mises1949}; and he warns against
hypostasizing the scale of value into ``the cause and motive of
the various individual actions'' \citep[pp.~102--103]{Mises1949}.
The primitive status of $\pref{a}{t}$ asserts nothing of the
kind.  Primitive-in-the-language is a statement about the
theory's representational apparatus --- an undefined symbol,
where a defined one would misreport the inferential situation
(Remark~\ref{rem:acts_vs_pref}) --- not a claim that a ranking
exists in the actor's head prior to and independently of action.
The grounding axiom \axref{O0} is the adequacy condition on the
instrument, and the per-instant index $t$ withholds exactly the
``duration and immutability'' that Mises's anti-hypostasis
passage targets (cf.\ Remark~\ref{rem:constancy}).}

\begin{remark}[Subjectivism is double-indexed]
\label{rem:subjectivism}
The relation $\pref{a}{t}$ is indexed by \emph{both} the actor and
the time, encoding two independent axes of subjectivism that
Mises insists on.  Across actors: ``ultimate ends are ultimately
given, they are purely subjective, they differ with various people
and with the same people at various moments in their lives''
\citep[Ch.~IV, \S2]{Mises1949}.  Across times for a single actor:
the same passage permits preference reversal across different
moments in one life, formally captured by the $t$-index.  And
``value is not intrinsic, it is not in things.  It is within us''
\citep[p.~96]{Mises1949}: the whole apparatus of $\pref{a}{t}$ refuses to
locate value in any cardinal property of ends or things: the
scale is ordinal, and its only observational content is the
record left by the actor's own choice history \axref{O0}.
\end{remark}

\begin{remark}[$\Acts$ and $\pref{a}{t}$ are not equivalent
information]
\label{rem:acts_vs_pref}
A natural temptation is to think that, given the grounding axiom
\axref{O0}, the conduct data $\Acts$ and the order $\pref{a}{t}$
are interchangeable --- that either could be taken as fundamental
with the other recovered.  The relation between them is asymmetric
in both directions, and the asymmetry is structurally important.

$\Acts$ together with $\Avail$ and $\EndOf$ determines the
\emph{record} $\revpref{a}{t}$ outright
(Definition~\ref{def:revpref}), and through \axref{O0} the record
constrains the order from below --- without ever determining it.
Conversely, the order pins down the chosen \emph{end} outright
(Proposition~\ref{prop:chosen_max}) but not, in general, the
chosen \emph{action}:

\begin{enumerate}[label=(\roman*), itemsep=2pt]
\item \textbf{Action-vs-end ambiguity.}  $\EndOf$ is functional
(by~\axref{P3} and~\axref{P4}) but need not be injective on the
menu: distinct available actions $\alpha \neq \beta$ may share an
end.  Neither the record nor the order can then distinguish which
was chosen --- choosing either yields the same demonstrated pairs.
The chosen \emph{action} is recoverable only when $\EndOf$ is
injective on $\Cmenu_{a,t}$.
\item \textbf{Sparseness at a single time.}  The record at a
single $(a,t)$ is a \emph{star}:\footnote{The term is
graph-theoretic.  Read the record as a directed graph on the
choice-relevant ends, with an edge $E \to F$ for each recorded pair
$E \revpref{a}{t} F$.  At a single $(a,t)$ every edge issues from
the one realized end --- the hub --- to a foregone end, and no edge
runs between two foregone ends; the graph is therefore the star
$K_{1,n}$ centred on the realized end --- equivalently, the
record is a strict partial order of height one, whose Hasse
diagram is this star \citep{DaveyPriestley2002}.  Being a star it is in
particular acyclic, which is what lets the record be extended to a
strict total order --- the satisfiability argument behind
Theorem~\ref{thm:asymm}.} by \axref{C1} and \axref{P4}
every demonstrated pair has the realized end on the left, so the
record says nothing about the order between two ends neither of
which was chosen.  The strict total order that
\axref{O2}--\axref{O4} impose on $\EndAt(a,t)$ is genuinely
additional structure --- the primitive order outruns its record.
Multi-time $\Acts$-data fills in further pairs, but only the
subset reached by repeated choice.
\end{enumerate}

The first
obstacle above gives rise to a natural equivalence relation on
the choice menu.  Define $\alpha \sim_{a,t} \beta$ if both
$\Avail(a, \alpha, t)$ and $\Avail(a, \beta, t)$ hold and
$\EndOf(\alpha) = \EndOf(\beta)$.  This is an equivalence
relation on $\Cmenu_{a,t}$, and its quotient
$\Cmenu_{a,t}/{\sim_{a,t}}$ is in bijection with the end-menu
$\EndOf(\Cmenu_{a,t})$.  $\pref{a}{t}$ lives at the level of the
quotient: it cannot distinguish among different means to the
same end.  This is precisely the means-ends multiplicity
exhibited at $\mathsf{Capital}$ in
Section~\ref{subsec:crusoe} ($\mathsf{BuildNet}$ and
$\mathsf{BuildBoat}$ lie in the same $\sim_{a,t}$-class), and
it is the structural reason why opportunity cost admits the
two readings of
Remark~\ref{rem:two_readings_opp_cost} (means lost vs.\ wants
unmet).

The order therefore carries information the conduct does not (the
ranking of unchosen ends), and the conduct carries information the
order does not (which of several same-end means realized the top);
neither subsumes the other.  This asymmetry is why $\pref{a}{t}$
is a separate primitive rather than an $\Acts$-derived
abbreviation: it underwrites the hidden-state framing of
Remark~\ref{rem:state_hidden} --- were the order $\Acts$-derived,
nothing would be left to hide --- and keeps the valuation
component first-class in the state, where Mises locates it.

The chosen end is \emph{always} the unique
$\pref{a}{t}$-maximum of $\EndAt(a,t)$
(Proposition~\ref{prop:chosen_max}); in a ``static
fully-observed'' regime --- a single time $t$, finite menu,
$\EndOf$ injective on the menu --- the chosen \emph{action} too
is recoverable from the order, and on a two-element end menu the
record already pins down the entire order.  The general regime is
the Misesian one: action, not preference, is the
\emph{epistemically} fundamental datum --- ``the only source from
which our knowledge concerning these scales is derived is the
observation of a man's actions'' \citep[p.~95]{Mises1949}.
\end{remark}

\begin{remark}[$\Avail$ is primitive at base level]
\label{rem:avail_primitive}
A symmetric question arises for $\Avail$: can the availability
relation $\Avail(a, \alpha, t)$ be \emph{computed} from $\Use$
and the Things sort, given that $\Use$ encodes the
technological-input pattern of every action?  At base level the
answer is \emph{no}, for three structural reasons.

\begin{enumerate}[label=(\roman*), itemsep=2pt]
\item \textbf{Time mismatch.}  $\Use(\alpha, x)$ has no time
argument; it is a static technological relation, true uniformly
across $t$.  $\Avail(a, \alpha, t)$ varies with $t$.  There is
no source in $\Use$ for the time-dependence that gives a Crusoe
$\Cmenu_0 \subsetneq \Cmenu_1 \subsetneq \Cmenu_2$ as capital
accumulates.
\item \textbf{Actor mismatch.}  $\Use(\alpha, x)$ has no actor
argument; it is actor-blind.  $\Avail(a, \alpha, t)$ has an
actor index, and two distinct actors at the same time may have
disjoint menus.
\item \textbf{History dependence.}  $\Avail$ at $t$ encodes the
actor's situation, which depends on prior history (what was
done before $t$, what capital was accumulated, what was acquired
or consumed).  None of that history-dependence is in $\Use$ or
in Things.
\end{enumerate}

The base axioms
constrain $\Avail$ but do not determine it: \axref{P1} requires
nonempty choice menus (Eq.~\ref{eq:choice_menu}), \axref{P2}
gives $\Acts \subseteq \Avail$, and \axref{P6} together with
\axref{S1} demand at least the existential structure of
shared-$\Use$ scarcity.  Subject to these, $\Avail$ is
\emph{primitive data specified by the modeler} --- the Crusoe
example of Section~\ref{subsec:crusoe} specifies the choice
menus $\Cmenu_0, \Cmenu_1, \Cmenu_2$ directly by inspection.

Combined with
Remark~\ref{rem:acts_vs_pref}, this places $\Avail$ and $\Acts$
in a distinguished position among the primitives of $\Lprx$:
$\Avail$ encodes \emph{what is possible}, $\Acts$ encodes
\emph{what is chosen}, and these are the praxeologically primary
observables.  The remaining base-level material either supports
them ($\EndOf$, $\Use$, $<$) or is downstream of them (the
record $\revpref{a}{t}$, $\EndAt(a,t)$, $\sigma_a(t)$).  The one
exception is the valuational primitive $\pref{a}{t}$: independent
data, disciplined by them through the grounding axiom \axref{O0}
and accessible only through them (Remark~\ref{rem:state_hidden}).
\end{remark}

\begin{definition}[Choice-relevant end]\label{def:endat}
End $E$ is on actor $a$'s \emph{end menu} at time $t$, written
$\EndAt(a,t,E)$, iff some action directed at $E$ is available to $a$ at $t$:
\[
\EndAt(a,t,E) \;:\Leftrightarrow\; \exists\alpha\;\bigl(\Avail(a,\alpha,t)\,\land\,\EndOf(\alpha,E)\bigr).
\]
\end{definition}
We write
\begin{equation}\label{eq:choice-menu-E_at}
\EndAt(a,t):=\{E\mid\EndAt(a,t,E)\}
\end{equation}
for the set of ends accessible to actor $a$ at time $t$ ---
the \emph{end menu} (parallel to the action menu
$\Cmenu_{a,t}$ of Eq.~\ref{eq:choice_menu}).
The two menus are related by
$\EndAt(a,t) = \EndOf(\Cmenu_{a,t})$, restricted to the image
in $\mathsf{Ends}$.

\medskip
\noindent\textbf{Layer 2 axioms \emph{(rationality of preference)}.}
The following axioms govern the primitive order $\pref{a}{t}$ and
tie it to the record: \axref{O0} grounds the order in conduct;
\axref{O2}--\axref{O4} give it the structure of a strict total
order on $\EndAt(a,t)$ --- what Mises calls the \emph{scale of
values}; \axref{O1} guarantees there is conduct to ground it in.
The diminishing-marginal-utility theorem of
Section~\ref{subsec:derived_dmu} relies on this order directly,
not on a duplicated (MU)-axiom.
Among these, \axref{O2} is the paper's clearest specimen of an
exposed hidden premise: completeness over never-chosen ends is an
idealisation that no finite choice history could certify ---
precisely the kind of silent import that the Hilbert-style format
is designed to make visible (cf.\ the Hud\'{i}k footnote in
Section~\ref{subsec:derived_dmu}).
\begin{enumerate}[label=(O\arabic*), leftmargin=2.5em, start=0]
\item\label{ax:O0} \textbf{Grounding.}
$\forall a\,t\,E\,F\;(E\revpref{a}{t}F\Rightarrow E\pref{a}{t}F)$.
\hfill\emph{What the record shows, the order
contains.}\footnote{This formalizes Mises's ``Every action is
always in perfect agreement with the scale of values or wants
because these scales are nothing but an instrument for the
interpretation of a man's acting''
\citep[Ch.~IV, \S2]{Mises1949}, read as an \emph{adequacy
condition} on the interpretive instrument: a scale adequate to
interpret the conduct must, at minimum, contain every pair the
conduct reveals.}
\item\label{ax:O1} \textbf{Always-acting.} $\forall a\,t\;\exists\alpha\;\Acts(a,\alpha,t)$.
\hfill\emph{At every time each actor performs some action.}
\item\label{ax:O2} \textbf{Menu-comparability.} $\forall a\,t\,E\,F\;(E\neq F\land\EndAt(a,t,E)\land\EndAt(a,t,F)\Rightarrow(E\pref{a}{t}F\lor F\pref{a}{t}E))$.
\hfill\emph{Any two distinct menu-available ends are comparable.}\footnote{With
\axref{O4} the disjunction is \emph{exclusive}: for distinct
menu-available ends exactly one of $E\pref{a}{t}F$, $F\pref{a}{t}E$
holds --- there is no third, ``indifferent'' case.  Here the primitive
order $\pref{a}{t}$ parts company with the neoclassical
preference-\emph{relation} tradition
\citep{Debreu1959,MasColell1995,Kreps1988}, which starts from a
\emph{weak} order $\succeq$ (complete and transitive) and
\emph{derives} indifference $\sim$ from it; praxeology admits only
\emph{strict} preference, on the Mises--Rothbard
\emph{demonstrated-preference} doctrine that action shows which end
is preferred and which forgone, indifference --- being actionless ---
never being demonstrated \citep{Rothbard1956}.  This is the
no-indifference face of demonstrated preference, imposed here on the
scale $\pref{a}{t}$; its no-constancy face --- the same doctrine
across time rather than within a menu --- is what the
demonstrated-preference record $\revpref{a}{t}$ carries, and is where
it parts from the constancy-laden \emph{revealed} preference of the
recovery program \citep{Samuelson1938} (footnote to
Definition~\ref{def:revpref}).  Apparent indifference between
\emph{units} of a good (as distinct from ends) is taken up at
\axref{MU2}.}
\item\label{ax:O3} \textbf{Transitivity.} $\forall a\,t\,E\,F\,G\;(E\pref{a}{t}F\land F\pref{a}{t}G\Rightarrow E\pref{a}{t}G)$.
\hfill\emph{Preference is transitively consistent at each moment.}
\item\label{ax:O4} \textbf{Asymmetry.}
$\forall a\,t\,E\,F\;(E\pref{a}{t}F\Rightarrow\neg(F\pref{a}{t}E))$.
\hfill\emph{In particular ($E=F$): the order is irreflexive.}
\end{enumerate}

\begin{remark}[Transitivity is within-moment]
Axiom \axref{O3} applies to $\pref{a}{t}$ at a \emph{fixed} time $t$. It says nothing
about the relation between $\pref{a}{t}$ and $\pref{a}{t'}$ for $t\neq t'$.
The configuration $E\pref{a}{t}F$ and $F\pref{a}{t'}E$ for $t<t'$ is consistent
with the entire axiom system --- it is a \emph{preference reversal over time},
not a contradiction. Mises: ``Constancy and rationality are entirely different
notions'' \citep[p.~103]{Mises1949}. Axiom \axref{P5} ensures that the two
demonstrating choices occur at different times, so no within-moment
cycle can ever be put on record.
\end{remark}

Together, \axref{O2}--\axref{O4} make $\pref{a}{t}$ a strict total
order on the finite menu $\EndAt(a,t)$ --- what Mises calls the
\emph{scale of values}.  The scale is doubly local: to the menu
$\EndAt(a,t)$, and to the instant $(a,t)$, with no axiom forcing one
ranking to persist across times (Remark~\ref{rem:constancy}).  On a
finite menu such a per-instant order is automatically representable by
its rank function, so what is withheld is not a per-instant utility
encoding but a \emph{global utility} $u:\mathsf{Ends}\to\mathbb{R}$ ---
a single, constant scale ranking all ends at all times.  The axioms are jointly
satisfiable over every Layer-1 structure: at each $(a,t)$ the
record is a \emph{star} whose left element is the single realized
end --- a consequence of \axref{C1} and \axref{P4}, established
within the proof of Theorem~\ref{thm:asymm} --- hence acyclic,
and on the finite end menu (Remark~\ref{rem:finiteness}) any
acyclic relation extends to a strict total order; the Crusoe model
of Section~\ref{subsec:crusoe} exhibits such an extension
explicitly.

\subsubsection*{Layer 3: Material scarcity}

The previous two layers settle the structure of action and the
preference order on ends (together with its demonstrated record).  The
third layer adds a single axiom asserting that a genuine resource
conflict exists somewhere in the model --- the minimal anchor that
keeps the framework non-trivial.

\begin{enumerate}[label=(S\arabic*), leftmargin=2.5em]
\item\label{ax:S1} \textbf{Existence of scarcity} \emph{(rationality of scarcity recognition).}
$\exists a\,t\,\alpha\,\beta\,x\;(\alpha\neq\beta\land\Avail(a,\alpha,t)\land\Avail(a,\beta,t)\land\Use(\alpha,x)\land\Use(\beta,x))$.
\hfill\emph{At least one genuine resource conflict exists.}
\end{enumerate}

\noindent \axref{S1} ensures that the framework is nontrivial: at least one
actor faces a genuine resource constraint.  Without scarcity, there is
no economic problem --- a point Mises repeatedly emphasizes:
``means are necessarily always limited, i.e., scarce''
\citep[Ch.~IV, \S1]{Mises1949}; and where ``man is not restrained
by the insufficient quantity of things available, there is no need
for any action''~\citep[Ch.~IV, \S1]{Mises1949}.

\subsubsection*{Synthesis: states of affairs and actions}
\label{subsec:state}

The base axioms already determine, at any moment $(a, t)$, all
the data needed to characterize the actor's \emph{configuration}:
what is available (the choice menu $\Cmenu_{a,t}$ of
Eq.~\ref{eq:choice_menu}) and how the actor ranks the induced
ends (the $\pref{a}{t}$-restriction).  
 We collect the
irredundant
state data into an object we call the actor's \emph{state of
affairs} at $(a, t)$, formalizing Mises's ubiquitous use of the
phrase.  No new primitive is introduced and no new sort is
added: the state is fully recoverable from the primitive
relations of $\Lprx$ restricted to $(a, t)$.  This minimalism is
deliberate --- praxeology, in Mises's framing, is
\emph{process}-primary, not state-primary; the state is an
abstraction we extract from action, not an object we postulate
ahead of action.

\begin{definition}[State of affairs]
\label{def:state}
The \emph{state of affairs} of actor $a$ at time $t$, written
$\sigma_a(t)$, is the choice menu $\Cmenu_{a,t}$
(Eq.~\ref{eq:choice_menu}) equipped with the preference order
$\pref{a}{t}$ on the end menu $\EndAt(a,t)$:
\[
  \sigma_a(t) \;:=\; \bigl(\,\Cmenu_{a,t}\,,\;\pref{a}{t}\,\bigr).
\]
The chosen action $\Acts(a, \cdot, t)$ is \emph{not} part of the
state: in the reading below, $\Acts$ records an \emph{arrow}
between states rather than a feature of either endpoint.  The
choice menu is the carrier; $\pref{a}{t}$ is structure on it.
The two components differ in epistemic status: the menu is
externally inferable, while the preference order is hidden ---
an observer reaches it only through the record of
Definition~\ref{def:revpref} (Remark~\ref{rem:state_hidden}).
Other time-invariant data ($\EndOf$, $\Use$, the $<$-order on $\mathsf{Times}$)
lives at the model level rather than in the state.
\end{definition}

Why preferences in the state?  The choice menu
$\Cmenu_{a,t}$ alone records what the actor \emph{can} do; the
preference order $\pref{a}{t}$ records the \emph{valuation} of
the available alternatives.  Mises's own state vector for a market
day, $S(D) = (V_D, O_D, P_D, T_D)$ \citep[Ch.~XXVI, \S6]{Mises1949}, opens with the valuation component $V_D$ for
exactly the same reason: the satisfactory description of an
acting world at a moment must include not only what is materially
on hand but how it is ranked.  Our $\sigma_a(t)$ is the
single-actor shadow of that vector --- $V_D$ mirrored by
$\pref{a}{t}$ --- and, like it, a state from which action carries
one configuration to the next, history entering only through the
current state.  In the base layer here only
$\pref{a}{t}$ carries this content; the production-side
components $O_D, P_D, T_D$ are inert until the production
enrichment of future research is introduced.  Postponing $\pref{a}{t}$ to a
non-state
``parameter'' would force the same content back into the state
via a different name once that apparatus is introduced in future
research, with no gain in clarity at the base level.

We deliberately do \emph{not} promote $\sigma_a(t)$ to a
primitive sort, for two complementary reasons.

Technically, the state is already fully determined by
the existing primitive-restrictions
$(\Avail|_{(a,t)},\,\pref{a}{t})$.  Adding a States sort would
duplicate data already in the language, require new axioms to
enforce consistency between the sort and the underlying
primitives, and force re-verification of the existing theorems
through the enlarged signature --- all with no gain in
expressive power at the base level.

Philosophically, Mises is explicit on this point.  In
the calculation chapter he writes: ``equilibrium and equilibrium
prices --- these notions are foreign to real life and action;
they are auxiliary tools of praxeological reasoning for which
there is no mental means to conceive the ceaseless restlessness
of action other than to contrast it with the notion of perfect
quiet'' \citep[Ch.~XXVI, \S6]{Mises1949}.  States are
\emph{auxiliary tools}; action is the praxeological primitive.
Promoting State to a primitive sort would invert this ontological
priority --- it would import the Walrasian habit of treating
state as fundamental and action as a relation between primitive
states, exactly the inversion Mises spends Ch.~XXVI arguing
against.  Keeping $\sigma_a(t)$ outside the formal signature --- a
defined abbreviation, not a sort --- preserves the priority of
action over state.

Action is the operator that transforms $\sigma_a(t)$ into
$\sigma_a(t')$, where $t'$ is a successor of $t$ in the time
order (axiom~\axref{T6} guarantees one):
\begin{equation}\label{eq:action_arrow}
  \alpha = \Acts(a, \cdot, t) \;:\; \sigma_a(t)
  \;\longrightarrow\; \sigma_a(t').
\end{equation}
The end $E = \EndOf(\alpha)$ is the arrow's \emph{intended}
target --- an imagined post-action state.  When the action
succeeds, the realized $\sigma_a(t')$ matches $E$; when it fails
(a phenomenon treated in future research), the two diverge, and the gap
is the content of profit and loss.  This is the Misesian
picture: ``Action is an attempt to substitute a more
satisfactory state of affairs for a less satisfactory one''
\citep[Ch.~IV, \S4]{Mises1949}; ``ACTION always is essentially the
exchange of one state of affairs for another state of affairs''
\citep[Ch.~X, \S1]{Mises1949}.
\citet[ch.~1, pp.~19,~71]{Rothbard2009} makes the
cost-benefit content of this exchange explicit: the actor's
\emph{costs} are ``his forgone opportunities to enjoy consumers'
goods,'' and the expected greater utility on his value scale is
his \emph{psychic income}.  The arrow $\sigma_a(t) \to
\sigma_a(t')$ is then a one-step state exchange with foregone
alternatives as cost and the realized end as psychic income;
Theorem~\ref{thm:opp_cost} formalizes the cost side, and the
\emph{ex ante} gain criterion of
future work on extensions
formalizes the gain side.

A consequence of this reading is that
preferences are forward-looking.  By~\axref{O1} and~\axref{C1} exactly one action
$\alpha_{a,t}$ is performed at $(a,t)$, and the arrow reading
above takes its end
$E = \EndOf(\alpha_{a,t})$ as the actor's imagined target --- the
post-action state aimed at, not yet realized.  The
\emph{foregone} ends of $\EndAt(a,t)$ are likewise imagined: each
is the target of an available action $\beta \neq \alpha_{a,t}$
that was not chosen.  Neither $E$ nor $F$ in $E\,\pref{a}{t}\,F$
therefore denotes a \emph{present} state of affairs; both are
positions in the actor's forward projection from $t$.  The
preference order is the actor's
valuation of \emph{imagined-future} ends, not an ordering on
present configurations; the record of
Definition~\ref{def:revpref} inherits the same forward-looking
character from the choices that constitute it.

This is the formal counterpart
of the second of Mises's three prerequisites of action.  ``Felt
uneasiness'' (the first) is what motivates action; ``the image
of a more satisfactory state''
\citep[Ch.~I, \S1]{Mises1949} (the second) is what action aims at.
$\pref{a}{t}$ formalizes the second --- it ranks imagined
targets, not realities.

This forward-looking reading underlies three later choices.  It
is why $\sigma_a(t)$ (Definition~\ref{def:state}) carries the
preference component $\pref{a}{t}$ rather than a present-tense
ranking --- the hiddenness of Remark~\ref{rem:state_hidden} is
essential because the comparison lives in the actor's forward
projection, not in any configuration an observer could read off.
It is why the \emph{ex ante} profit-and-loss criterion of future
work compares an expected post-action end to a status-quo end,
both forward-looking.  And it is why the allocation predicate
$\Allot(a,t,x,E)$ behind the marginal-utility theorem
(Section~\ref{subsec:derived_dmu}) is a \emph{forward-directed
plan} --- the actor's intended assignment of unit $x$ to the
imagined end $E$ --- not a record of past deployment.

\begin{remark}[The irrelevance of sunk costs]
\label{rem:sunk_costs}
A structural consequence of the forward-looking character of
$\pref{a}{t}$ and the minimal content of the state
is the classical irrelevance of sunk costs.  The state
$\sigma_a(t)$ carries the current menu and the forward-looking
order $\pref{a}{t}$, and nothing else
(Definition~\ref{def:state}); past expenditures --- resources
consumed, labor expended, ends foregone at earlier times ---
appear nowhere in it.  The past bears on the present only
through what it has left in the menu: Crusoe's $\Cmenu_2$
contains $\mathsf{DeepSeaFish}$ because the boat is in hand,
not because building it was costly.  And since $\pref{a}{t}$
ranks imagined-future ends only
(as noted above), no axiom gives a past outlay
any standing in the valuation.  ``Bygones are bygones'' is
therefore not a counsel of prudence that an actor might fail
to heed; in $\Tprx$ it is a typing fact --- the state has no
slot for a sunk cost to occupy.  This is the structural face
of Mises's forward-looking account of cost: ``costs are equal
to the value attached to the satisfaction which one must
forego in order to attain the end aimed at''
\citep[p.~97]{Mises1949}.
\end{remark}

\begin{remark}[The praxeological state is partially hidden]
\label{rem:state_hidden}
The state $\sigma_a(t)$ has both observable and hidden
components.  The menu $\Cmenu_{a,t}$ and the resource-competition
structure on it (the shared-$\Use$ pattern on $\Cmenu_{a,t}$)
are externally inferable from physical possibilities.  The
preference component $\pref{a}{t}$ is structurally hidden: it is a
primitive of $\Lprx$, but the only axiom connecting it to anything
observable is the grounding axiom \axref{O0}, so an external
observer holds only the record $\revpref{a}{t}$
(Definition~\ref{def:revpref}) --- the partial trace that the
actor's choice history leaves of the scale.  In the language of
state machines, the praxeological setup is a \emph{hidden} state
machine: $\sigma_a(t)$ is the hidden state, $\Acts(a, \cdot, t)$
is the observation, and the record is the only access route.
Primitive/defined is a \emph{syntactic} classification (undefined
symbol versus abbreviation); observable/hidden is an
\emph{epistemic} one (evaluable from conduct data versus not).
The two cross-classify: $\pref{a}{t}$ is primitive yet hidden, the
record is defined yet observable.
This hiddenness is exactly Mises's point that ``the only source
from which our knowledge concerning these scales is derived is
the observation of a man's actions'' \citep[p.~95]{Mises1949}.
\end{remark}

\begin{remark}[Finiteness assumptions]
	\label{rem:finiteness}
	We assume throughout that each actor's choice menu of
	available actions $\Cmenu_{a,t}$ is finite at every time:
	only finitely many actions are available to any actor at any
	moment.  Its image under $\EndOf$, the end menu
	$\EndAt(a,t) = \EndOf(\Cmenu_{a,t})$, is then finite as well.
	The sort $\mathsf{Things}$ is likewise assumed finite (there
	are finitely many commodity types).  These are economically
	uncontroversial restrictions that hold in every model
	constructed in this paper.

	No finiteness assumption is imposed on the sorts
	$\mathsf{Actors}$, $\mathsf{Ends}$, or $\mathsf{Times}$.
	The axiom system $\Tprx$ admits models with arbitrarily many
	actors, arbitrarily many ends, and an unbounded time horizon.
	This generality is essential for the undecidability theorem
	of forthcoming work, which depends on the model class of $\Tprx$
	containing economies of unbounded size.  In the present paper
	the generality is preserved for compatibility with the later
	results, but plays no further role.

	Finiteness is a condition on the (set-theoretic) models of
	Definition~\ref{def:model}, not a first-order axiom: by
	compactness no $\Lprx$-theory pins it down (constants forcing
	an arbitrarily large menu can always be adjoined).  It is
	therefore imposed at the level of the model class, not as an
	axiom of $\Tprx$.  Accordingly the
	results that rely on them --- the marginal-end apparatus of
	Definition~\ref{def:marginal_end},
	Lemma~\ref{lem:served_choice_relevant}, and
	Theorem~\ref{thm:DMU}, where finiteness of $\EndAt(a,t)$ is
	what supplies the existence of a least-preferred served end ---
	carry the finiteness of $\EndAt(a,t)$ as an explicit
	hypothesis rather than leaving it tacit.  Every model
	constructed in this paper, and both machine-checked
	consistency models of Appendix~\ref{sec:lean}, have finite
	choice menus and finitely many commodity types; the
	infinite-time model realizes an unbounded $\mathsf{Times}$,
	which these restrictions explicitly permit.
\end{remark}

\section{A worked illustration: Robinson Crusoe}
\label{subsec:crusoe}

Before turning to the derived theorems, we illustrate $\Lprx$ with a
small running model --- a Crusoe economy whose first three periods
already exhibit the characteristic structure of capital formation:
monotone growth of the choice menu as productive capital accumulates.  This
model is the running example for the present paper.\footnote{The single-actor Crusoe model is deliberately the
\emph{simplest} case in which the apparatus does any work.  As a
general matter, problems of central planning do not arise in such
an economy: when only Crusoe acts, the question ``which actor
should receive the scarce resource?'' is trivially answered by
``Crusoe.''  The interesting Misesian content of central planning
emerges already as soon as a second actor (a ``Friday'') joins
the island --- and it is in the Crusoe--Friday economy
$\mathfrak{C}_2$ taken up in future research.}  The
machine-checked version of this model in \Lean{} is described in
Appendix~\ref{sec:lean}.

Crusoe inhabits a small island with access to coastal and deep-sea
resources.  Define:
\begin{align*}
\mathsf{Actors}^{\mathfrak{C}} &:= \{\mathsf{Crusoe}\}\\
\mathsf{Actions}^{\mathfrak{C}} &:= \{\mathsf{Forage},\;\mathsf{BuildNet},\;
                                  \mathsf{ShoreFish},\;\mathsf{BuildBoat},\;
                                  \mathsf{DeepSeaFish}\}\\
\mathsf{Ends}^{\mathfrak{C}} &:= \{\mathsf{Subsist},\;\mathsf{Capital},\;
                           \mathsf{ShoreCatch},\;\mathsf{DeepCatch}\}\\
\mathsf{Times}^{\mathfrak{C}} &:= \mathbb{N}
\end{align*}
\noindent The time sort is $\mathbb{N}$: time does not end, and the
discreteness axiom \axref{T6} --- every moment has an immediate
successor --- admits no finite model.  What we display and reason
about is the \emph{snapshot} $\{0,1,2\} \subseteq \mathbb{N}$: the
capital-accumulation window, in which Crusoe passes through the
three epochs $Q_0, Q_1, Q_2$ of Table~\ref{tab:crusoe_epochs} below
along his canonical trajectory.  The history continues over
$\mathbb{N}$: the actor keeps acting, traversing the state
diagram of Figure~\ref{fig:crusoe_state_diagram} (whose
self-loops allow menu-preserving repeats) for as many rounds as
the line provides.  The complete specification --- whose
\emph{non-greedy} continuation is what secures \axref{T6} and the
free-good axiom \axref{P6} --- is the machine-checked
$\mathbb{N}$-time model of Appendix~\ref{sec:lean} (also
Section~\ref{sec:full_model}).

\medskip
With action-end assignments: 
\begin{align*}
&\EndOf(\mathsf{Forage},\mathsf{Subsist}),\quad
\EndOf(\mathsf{BuildNet},\mathsf{Capital}),\quad
\EndOf(\mathsf{ShoreFish},\mathsf{ShoreCatch}),\\
&\EndOf(\mathsf{BuildBoat},\mathsf{Capital}),\quad
\EndOf(\mathsf{DeepSeaFish},\mathsf{DeepCatch}).
\end{align*}

\noindent The naming is suggestive but does \emph{not} put
actions in one-to-one correspondence with ends:
$\mathsf{Capital}$ is the shared end of both capital-formation
actions $\mathsf{BuildNet}$ and $\mathsf{BuildBoat}$, while
$\mathsf{ShoreCatch}$ and $\mathsf{DeepCatch}$ are the distinct
ends of the two consumption actions those capital goods make
possible.  We call this pattern \emph{means-ends multiplicity}
at $\mathsf{Capital}$: two or more distinct available actions
$\alpha \neq \beta$ in $\Cmenu_{a,t}$ share an end
$\EndOf(\alpha) = \EndOf(\beta) = E$, so the restriction of
$\EndOf$ to $\Cmenu_{a,t}$ fails to be injective at $E$.  This
is the simplest exhibition of the Misesian distinction between
actions and the wants they serve: the actor faces alternative
tools for one purpose.\footnote{Mises is explicit that means and ends do not
stand in one-to-one correspondence.  In Ch.~IV~\S2 he locates the
end at the level of \emph{felt uneasiness} (``the end, goal, or
aim of any action is always the relief from a felt uneasiness; a
means is what serves to the attainment of any end''), and in
Ch.~XV~\S5 he speaks of ``various means'' that ``allow for various
uses,'' setting the actor the task of allocating them.  An action
admits a proximate end (here $\mathsf{NetCapital}$ or 
$\mathsf{BoatCapital}$, an immediate result) and a more abstract
end-category (here $\mathsf{Capital}$, a class of want).}  Here
we treat the end names purely as labels; the higher-order
goods structure that justifies the
naming is an enrichment to be introduced in future research.

The choice menu grows monotonically as capital accumulates.  At
$t=0$, lacking any tools, Crusoe can only forage or begin building a
net.  Once a net is in hand (after $t=0$), shore fishing and boat
building become available; once a boat is in hand (after $t=1$),
deep-sea fishing becomes available.  Formally:
\begin{align*}
\Cmenu_0 &= \{\mathsf{Forage},\;\mathsf{BuildNet}\}, \\
\Cmenu_1 &= \{\mathsf{Forage},\;\mathsf{BuildNet},\;\mathsf{ShoreFish},\;
              \mathsf{BuildBoat}\}, \\
\Cmenu_2 &= \{\mathsf{Forage},\;\mathsf{BuildNet},\;\mathsf{ShoreFish},\;
              \mathsf{BuildBoat},\;\mathsf{DeepSeaFish}\}.
\end{align*}
The total
number of menu-compatible global trajectories is
$2 \times 4 \times 5 = 40$.  These are combinatorial selections
from the fixed menus, not all of them coherent histories: the
menus shown are those obtaining along Crusoe's canonical
capital-accumulation trajectory, so a deviating selection
(e.g.\ $\mathsf{DeepSeaFish}$ at $t=2$ with no boat ever built)
is menu-compatible without being capital-feasible.

\medskip\noindent
The dependency of $\Cmenu_t$ on the prior history --- a net,
then a boat --- is encoded here \emph{externally}: we specify
the choice menus $\Cmenu_t$ directly, by inspection of what
the actor must have done before $t$ for each menu item to be
physically meaningful.  

Crusoe's \emph{actual} history is the natural capital-accumulation
trajectory --- build a net, then a boat, then enjoy the
highest-yield catch the new equipment makes available:
\[
\Acts(\mathsf{Crusoe},\mathsf{BuildNet},0),\quad
\Acts(\mathsf{Crusoe},\mathsf{BuildBoat},1),\quad
\Acts(\mathsf{Crusoe},\mathsf{DeepSeaFish},2).
\]

\begin{figure}
\centering
\begin{tikzpicture}[
  state/.style={circle, draw, thick, minimum size=1.3cm, font=\small,
                inner sep=1pt},
  emph/.style={state, line width=1pt, fill=blue!8},
  arrow/.style={-{Latex[length=2.5mm]}, thick},
  canon/.style={arrow, line width=1.2pt},
  selfloop/.style={arrow, looseness=12},
  canonselfloop/.style={canon, looseness=12},
  menubox/.style={font=\scriptsize, align=center, draw=gray!40,
                  rounded corners=2pt, fill=gray!4, inner sep=4pt},
  arrowlabel/.style={font=\scriptsize, midway},
  looplabel/.style={font=\scriptsize, inner sep=1pt}
]

\node[emph] (Q0) at (0, 0)  {$Q_0$};
\node[emph] (Q1) at (5, 0)  {$Q_1$};
\node[emph] (Q2) at (10, 0) {$Q_2$};

\node[font=\scriptsize\itshape, below=2pt of Q0] (sub)   {subsistence};
\node[font=\scriptsize\itshape, below=2pt of Q1] (coast) {coastal};
\node[font=\scriptsize\itshape, below=2pt of Q2] (deep)  {deep-sea};

\node[menubox, below=4pt of sub]
  {\textbf{end menu}\\$\{\mathsf{S}, \mathsf{K}\}$};
\node[menubox, below=4pt of coast]
  {\textbf{end menu}\\$\{\mathsf{S}, \mathsf{K}, \mathsf{SC}\}$};
\node[menubox, below=4pt of deep]
  {\textbf{end menu}\\$\{\mathsf{S}, \mathsf{K}, \mathsf{SC}, \mathsf{DC}\}$};

\draw[canon] (Q0) -- node[arrowlabel, above]{$\mathsf{BN}$ / K} (Q1);
\draw[canon] (Q1) -- node[arrowlabel, above]{$\mathsf{BB}$ / K} (Q2);

\draw[selfloop] (Q0) edge[loop above]
  node[looplabel, above]{$\mathsf{Fo}$ / S} (Q0);

\draw[selfloop] (Q1) edge[out=140, in=110]
  node[looplabel, above left]{$\mathsf{Fo}$ / S} (Q1);
\draw[selfloop] (Q1) edge[out=100, in=80]
  node[looplabel, above]{$\mathsf{BN}$ / K} (Q1);
\draw[selfloop] (Q1) edge[out=70, in=40]
  node[looplabel, above right]{$\mathsf{SF}$ / SC} (Q1);

\draw[selfloop] (Q2) edge[out=160, in=130]
  node[looplabel, above left]{$\mathsf{Fo}$ / S} (Q2);
\draw[selfloop] (Q2) edge[out=120, in=100]
  node[looplabel, above]{$\mathsf{BN}$ / K} (Q2);
\draw[selfloop] (Q2) edge[out=90, in=60]
  node[looplabel, above]{$\mathsf{SF}$ / SC} (Q2);
\draw[selfloop] (Q2) edge[out=50, in=20]
  node[looplabel, above right]{$\mathsf{BB}$ / K} (Q2);
\draw[canonselfloop] (Q2) edge[out=-20, in=-50]
  node[looplabel, right]{\ $\mathsf{DSF}$ / DC} (Q2);

\node[menubox] (legend) at (5, -4)
  {\textbf{abbreviations}\\[2pt]
   \begin{tabular}{@{}r@{\,$=$\,}l@{\hspace{2em}}r@{\,$=$\,}l@{}}
   \multicolumn{2}{c}{\textit{actions}} & \multicolumn{2}{c}{\textit{ends}}\\
   $\mathsf{Fo}$  & $\mathsf{Forage}$      & $\mathsf{S}$  & $\mathsf{Subsist}$    \\
   $\mathsf{BN}$  & $\mathsf{BuildNet}$    & $\mathsf{K}$  & $\mathsf{Capital}$    \\
   $\mathsf{SF}$  & $\mathsf{ShoreFish}$   & $\mathsf{SC}$ & $\mathsf{ShoreCatch}$ \\
   $\mathsf{BB}$  & $\mathsf{BuildBoat}$   & $\mathsf{DC}$ & $\mathsf{DeepCatch}$  \\
   $\mathsf{DSF}$ & $\mathsf{DeepSeaFish}$ & \multicolumn{2}{c}{}                 \\
   \end{tabular}};

\end{tikzpicture}
\caption{State-and-action diagram for the nets-and-boats Crusoe.
Each state $Q_i$ is identified by capital inventory (see
Table~\ref{tab:crusoe_epochs}); the time index of the action is
folded into the dynamics --- the actor traverses the diagram for
as many rounds as needed.  Each state carries its \emph{end menu}
$\EndOf(\Cmenu_{Q_i})$.  The figure thus displays only the
\emph{menu} component of $\sigma_a(t)$
(Definition~\ref{def:state}); the preference decoration
$\pref{a}{t}$ is suppressed, in keeping with its hiddenness
(Remark~\ref{rem:state_hidden}) --- which of the available ends
ranks above which is not on the page.  At every visit the actor
performs one $\alpha \in \Cmenu_{Q_i}$, realizing
$E = \EndOf(\alpha)$; the remaining ends in $\EndOf(\Cmenu_{Q_i})$
are \emph{foregone} --- losses in the opportunity-cost sense.
Forward edges are capital formation; self-loops are
menu-preserving rounds.  Edge labels of the form $\alpha/E$ pair
the action with the end it realizes.  $\mathsf{BN}$ and
$\mathsf{BB}$ share the end $\mathsf{K}$ --- the means-ends
multiplicity introduced above.  The bold trajectory $Q_0
\xrightarrow{\mathsf{BN}} Q_1 \xrightarrow{\mathsf{BB}} Q_2
\xrightarrow{\mathsf{DSF}} Q_2$ is Crusoe's canonical
capital-accumulation history, with $\mathsf{DSF}$ bolded inside
$Q_2$'s self-loop.  Future extensions work refines this picture along four
independent axes (transition map $\tau$, fallibility predicates,
regression edges from degradation, recipe-DAG zoom)
without redrawing it.}
\label{fig:crusoe_state_diagram}
\end{figure}

The three choice menus $\Cmenu_0, \Cmenu_1, \Cmenu_2$ are pairwise
distinct as sets, and each strictly contains its predecessor.
Anticipating future research, we give these three menu-classes
the names $Q_0, Q_1, Q_2$.  These $Q_i$ are the
\emph{capital epochs} of $\mathfrak{C}$ --- the equivalence
classes of states $\sigma_a(t)$ (Definition~\ref{def:state})
under menu equality.  Two states with the same action menu
inhabit the same capital epoch; the actor's preferences
$\pref{a}{t}$ vary across visits to the same epoch but the
menu does not.  In the figure above, $Q_i$-nodes denote these
equivalence classes; in axiomatic prose elsewhere we continue
to write $\sigma_a(t)$ for the full state at $(a,t)$.

\begin{table}[ht]
\centering
\renewcommand{\arraystretch}{1.3}
\caption{The three capital epochs of $\mathfrak{C}$, named in
anticipation of the macro-graph treatment of
future work on extensions.  Each
epoch is fully characterized by the menu of actions available in
it.}
\label{tab:crusoe_epochs}
\begin{tabular}{@{}lll@{}}
\toprule
\textbf{Epoch} & \textbf{Menu} &
\textbf{When the choice menu is in this epoch} \\
\midrule
$Q_0$ (Subsistence) &
  $\{\mathsf{Forage},\mathsf{BuildNet}\}$ &
  no tools yet \\
$Q_1$ (Coastal) &
  $Q_0 \cup \{\mathsf{ShoreFish},\mathsf{BuildBoat}\}$ &
  net built, boat not yet \\
$Q_2$ (Deep Sea) &
  $Q_1 \cup \{\mathsf{DeepSeaFish}\}$ &
  net and boat both built \\
\bottomrule
\end{tabular}
\end{table}

\noindent
The trajectory underlying $\mathfrak{C}$ traverses these three
epochs in order: $\Cmenu_0$ realizes the Subsistence menu, $\Cmenu_1$
the Coastal menu, $\Cmenu_2$ the Deep-Sea menu.  Building the net at
$t=0$ unlocks $Q_1$; building the boat at $t=1$ unlocks $Q_2$;
fishing the deep sea at $t=2$ exercises the productive option that
the two earlier capital-formation steps have made physically
possible.

\begin{remark}[The figure: forgetting preferences, folding time]
\label{rem:state_to_epoch}
The figure comes from the state history by two projections of
different kinds.  First, at each moment $\pi_1$ drops the preference
decoration,
\[
\sigma_a(t) = \bigl(\Cmenu_{a,t},\, \pref{a}{t}\bigr)
\;\xrightarrow{\;\pi_1\;}\; \Cmenu_{a,t},
\]
keeping only the menu the actor faces; the ranking $\pref{a}{t}$
is structurally hidden (Remark~\ref{rem:state_hidden}) and cannot
be drawn.  Second, $\pi_2$ folds time.  The menu history
$t \mapsto \Cmenu_{a,t}$ takes only finitely many values --- here
the three $\Cmenu_{Q_0}, \Cmenu_{Q_1}, \Cmenu_{Q_2}$, growing
monotonically and then holding steady --- so the whole
$\mathbb{N}$-time line projects onto the finite state diagram,
\[
t \;\xrightarrow{\;\pi_2\;}\; Q_i
\qquad (\Cmenu_{a,t} = \Cmenu_{Q_i}),
\]
each moment sitting at the node that carries its menu, the
time-order supplying the forward edges $Q_0 \to Q_1 \to Q_2$ and
menu-preserving rounds the self-loops.  It is this finiteness ---
only three menus ever occur --- that lets the unbounded time line
be drawn as one finite picture.

The two losses are of different kinds.  The preference loss at
$\pi_1$ is irreducible: $\pref{a}{t}$ cannot be read back from the
figure.  The time loss at $\pi_2$ is not: a node may be visited on
many rounds, but the trajectory pins down which, since the
position in the action sequence fixes $t$ (and $a = \mathsf{Crusoe}$
is constant in $\mathfrak{C}$).

In a multi-actor model the actor label $a$ \emph{must} be kept:
the projection sends $\sigma_a(t)$ to $(a, Q_i^a)$, the node on
$a$'s own diagram, and the diagrams may differ between actors
(recipes, ownership, and accumulated capital all vary).
\end{remark}

At $t=0$, Crusoe chooses $\mathsf{BuildNet}$ while $\mathsf{Forage}$
is available.  By Definition~\ref{def:revpref} this puts on record
\[
\mathsf{Capital} \;\revpref{\mathsf{C}}{0}\; \mathsf{Subsist},
\]
and so, by \axref{O0},
$\mathsf{Capital} \pref{\mathsf{C}}{0} \mathsf{Subsist}$.  The
choice menu $\Cmenu_0$ contains only these two actions, so the
record already pins down the entire order on
$\EndAt(\mathsf{C},0)$: $\pref{\mathsf{C}}{0}$ is the strict total
order $\mathsf{Capital} \succ \mathsf{Subsist}$ on the end menu.

At $t=1$, Crusoe chooses $\mathsf{BuildBoat}$ while
$\mathsf{Forage}, \mathsf{BuildNet}, \mathsf{ShoreFish}$ are also
available.  Both $\mathsf{BuildBoat}$ and $\mathsf{BuildNet}$
share the end $\mathsf{Capital}$, so the choice between them
puts nothing on record at the end-level --- the $E \neq F$ clause
of Definition~\ref{def:revpref} is silent on the shared end (this
is the means-ends multiplicity introduced above).  Comparing
$\mathsf{BuildBoat}$ against the two genuinely distinct
alternatives puts on record
\[
\mathsf{Capital} \;\revpref{\mathsf{C}}{1}\; \mathsf{Subsist},\quad
\mathsf{Capital} \;\revpref{\mathsf{C}}{1}\; \mathsf{ShoreCatch},
\]
hence, by \axref{O0},
$\mathsf{Capital} \pref{\mathsf{C}}{1} \mathsf{Subsist}$ and
$\mathsf{Capital} \pref{\mathsf{C}}{1} \mathsf{ShoreCatch}$.
The relative ranking between $\mathsf{Subsist}$ and
$\mathsf{ShoreCatch}$ is not on record.  Axiom~\axref{O2} requires
the primitive order to rank them nevertheless, so the model must
specify an extension of the record; we adopt one for concreteness:
\[
\mathsf{Capital} \pref{\mathsf{C}}{1}
\mathsf{ShoreCatch} \pref{\mathsf{C}}{1}
\mathsf{Subsist}.
\]

At $t=2$, Crusoe chooses $\mathsf{DeepSeaFish}$ while all four other
actions are available.  Three distinct alternative ends appear in
the foregone alternatives ($\mathsf{Subsist}$ via $\mathsf{Forage}$,
$\mathsf{Capital}$ via $\mathsf{BuildNet}$ or $\mathsf{BuildBoat}$,
$\mathsf{ShoreCatch}$ via $\mathsf{ShoreFish}$), so the choice
puts on record
\[
\mathsf{DeepCatch} \revpref{\mathsf{C}}{2} \mathsf{Subsist},\quad
\mathsf{DeepCatch} \revpref{\mathsf{C}}{2} \mathsf{Capital},\quad
\mathsf{DeepCatch} \revpref{\mathsf{C}}{2} \mathsf{ShoreCatch},
\]
and \axref{O0} carries each into the order.  The three-way
ranking among the unchosen ends is left open by the record; the
model adopts
\[
\mathsf{DeepCatch} \pref{\mathsf{C}}{2}
\mathsf{ShoreCatch} \pref{\mathsf{C}}{2}
\mathsf{Subsist} \pref{\mathsf{C}}{2}
\mathsf{Capital},
\]
expressing that at $t=2$ Crusoe values \emph{realized} catches more
than further capital deepening (he already has both pieces of
equipment).  Other extensions of the record are equally compatible
with the choice data --- the record fixes only the chosen end on
top and leaves the rest free, exactly as in
Remark~\ref{rem:acts_vs_pref}(ii).

Theorem~\ref{thm:opp_cost} guarantees that whenever an actor
chooses among multiple available actions, at least one alternative
end is forgone.  At $t=0$, by choosing $\mathsf{BuildNet}$, Crusoe
forgoes $\mathsf{Subsist}$ --- the only alternative available, and
hence the opportunity cost.  At $t=1$, by choosing
$\mathsf{BuildBoat}$, Crusoe forgoes the actions
$\mathsf{Forage}$, $\mathsf{BuildNet}$, and $\mathsf{ShoreFish}$;
the corresponding ends are $\mathsf{Subsist}$, $\mathsf{Capital}$,
and $\mathsf{ShoreCatch}$, but $\mathsf{Capital}$ is also the
realized end and is therefore not unmet.  The genuinely forgone
ends are $\mathsf{Subsist}$ and $\mathsf{ShoreCatch}$; the
opportunity cost is the highest-ranked among them, namely
$\mathsf{ShoreCatch}$ in our adopted extension.  At $t=2$, by
choosing $\mathsf{DeepSeaFish}$, Crusoe forgoes the four other
actions and hence the three distinct unmet ends $\mathsf{Subsist}$,
$\mathsf{Capital}$, $\mathsf{ShoreCatch}$; the opportunity cost in
our extension is $\mathsf{ShoreCatch}$.

Note that, as in any actor's case, identifying opportunity cost
requires the full preference ordering, including the ranking among
unchosen alternatives.  Where that ranking is not fully demonstrated by
observed choices --- as at $t=1$ and $t=2$ --- the opportunity cost
is not fully determined by behavioral data alone.  This is a
foretaste of the formal knowledge problem to be developed in future research.

\begin{remark}[Two readings of opportunity cost]
\label{rem:two_readings_opp_cost}
The means-ends multiplicity at $\mathsf{Capital}$ splits the
notion of ``what is foregone'' at each round into two
non-equivalent readings.  Tabulating the canonical trajectory:

\begin{center}
\renewcommand{\arraystretch}{1.25}
\begin{tabular}{cclllll}
\toprule
Round & At & Chosen & Realized end & Foregone actions & Foregone ends \\
& & & & (\emph{means lost}) & (\emph{wants unmet}) \\
\midrule
1 & $Q_0$ & $\mathsf{BN}$  & $\mathsf{K}$  & $\{\mathsf{Fo}\}$
              & $\{\mathsf{S}\}$ \\
2 & $Q_1$ & $\mathsf{BB}$  & $\mathsf{K}$  & $\{\mathsf{Fo}, \mathsf{BN}, \mathsf{SF}\}$
              & $\{\mathsf{S}, \mathsf{SC}\}$ \\
3 & $Q_2$ & $\mathsf{DSF}$ & $\mathsf{DC}$ & $\{\mathsf{Fo}, \mathsf{BN}, \mathsf{SF}, \mathsf{BB}\}$
              & $\{\mathsf{S}, \mathsf{K}, \mathsf{SC}\}$ \\
\bottomrule
\end{tabular}
\end{center}

\noindent
At Round~2 the actor performs $\mathsf{BB}$ and \emph{three}
actions are foregone, but only \emph{two} ends are unmet ---
because $\mathsf{BN}$, although unchosen, would have realized the
same end $\mathsf{K}$ that $\mathsf{BB}$ realizes.  The gap
$|\text{foregone actions}| - |\text{foregone ends}| = 3 - 2 = 1$
is exactly
the multiplicity at $\mathsf{Capital}$.  At Round~3 the gap is
again $4 - 3 = 1$, now for a different reason: $\mathsf{BN}$ and
$\mathsf{BB}$ are \emph{both} foregone and share the end
$\mathsf{K}$, so two foregone means collapse into one unmet
want.  Only Round~1 has a $0$ gap --- its single foregone action
carries its own end.  The two readings diverge \emph{precisely
when} distinct actions in the menu share an end: either a
foregone action shares the \emph{realized} end (Round~2), or two
foregone actions share a \emph{foregone} end (Round~3).  Both
readings are licensed by
Theorem~\ref{thm:opp_cost}; in the bijective case (Mises's
``each act has its end'' as a labeling convenience) they
collapse onto one another.
\end{remark}

On the snapshot one checks the per-time axioms directly:
\axref{P2} holds since each chosen action is listed as available at
the corresponding time; \axref{C1} since exactly one action is
chosen at each time; \axref{P5} since the three chosen actions are
distinct and occur at distinct times; and the adopted orders at
$t=0,1,2$ each extend the demonstrated record \axref{O0}, are total on
the end menu \axref{O2}, transitive \axref{O3}, and asymmetric
\axref{O4}, with the top-ranked end varying across periods
($\mathsf{Capital}$ at $t=0,1$, $\mathsf{DeepCatch}$ at $t=2$).  The
\emph{global} axioms hold over the whole $\mathbb{N}$-line rather
than on the snapshot alone: \axref{T6} outright, since $\mathbb{N}$
gives every moment a successor; and \axref{P6}, whose
unrealised employment the non-greedy continuation supplies
(a greedy ``deep-sea fishing forever'' history would leave the
boat-using action with no available-but-unperformed instance).  The
complete mechanical verification --- every axiom, on the
$\mathbb{N}$-time model --- is the \Lean{} development of
Appendix~\ref{sec:lean}.

\begin{remark}[Constancy versus consistency]\label{rem:constancy}
The model $\mathfrak{C}$ illustrates that the axiom system allows for
preference variation across time, and pins down what that means.  It
comes down to the reversal of a single pair:
$\mathsf{Capital} \pref{\mathsf{C}}{t} \mathsf{Subsist}$ at $t=0$ and
$t=1$, but $\mathsf{Subsist} \pref{\mathsf{C}}{2} \mathsf{Capital}$ at
$t=2$, where capital --- both pieces of equipment now in hand --- falls
to the bottom.  No single global ranking of the ends restricts to all
three moments, so no one scale (no global utility) underlies the
choices: the actor genuinely re-ranks.  This is not irrational --- it
reflects genuinely changed circumstances at each moment.

That the \emph{top} end also moves ($\mathsf{Capital}$ at $t=0,1$, then
$\mathsf{DeepCatch}$) is, by itself, no such sign: $\mathsf{DeepCatch}$
is merely absent from the menu until $t=2$, and the constant ranking
$\mathsf{DeepCatch} \succ \mathsf{Capital} \succ \mathsf{ShoreCatch}
\succ \mathsf{Subsist}$ would reproduce the same three choices, its top
changing only because a higher-ranked end enters the growing menu.  It
is the pair reversal, not the moving top, that no constant scale could
produce.  Axiom~\axref{O3} constrains only \emph{within-moment}
rankings; it says nothing across times.  Mises: ``Constancy and
rationality are entirely different notions'' \citep[p.~103]{Mises1949}.
\end{remark}

The state diagram of Figure~\ref{fig:crusoe_state_diagram} is
\emph{objective}: it records what is possible, not what is
chosen.  Different actors who share Crusoe's availability
structure but differ in their preference orders will choose
differently --- and inference from choice recovers the
record (the demonstrated core of each one's preference order) from
the path they trace through the diagram.
The four sample paths below illustrate the range.  All begin at
$Q_0$ and run for three rounds.

\medskip
\noindent
Each path is read as its own structure.  A path $P$ fixes the
availability relation --- $\Cmenu^{P}_t = \Cmenu_{Q(t)}$, the
state-menu the actor faces at the state it occupies each round,
read off the diagram exactly as in the figure --- and the actor's
choices along it, hence the demonstrated-preference record
$\revpref{\mathsf{C}}{t}$.  It does not, by itself, fix the model:
the primitive order $\pref{\mathsf{C}}{t}$ is free off the realized
end, so we obtain a model $\mathfrak{C}_P$ only by adopting, at
each round, a strict total order extending that round's record
(just as with the adopted orders above, \axref{O0}).  What all
four structures share is the \emph{state-indexed} menu structure
$\Cmenu_{Q_0}, \Cmenu_{Q_1}, \Cmenu_{Q_2}$ of the diagram,
together with $\Use$ and $\EndOf$; the canonical $\mathfrak{C}$ of
this section is the Path-A structure.  The records displayed below
are therefore read off each path's own choices, and every
comparison is taken against the menu that path actually faces ---
never against a higher state's menu the path has not reached.

\begin{center}
\renewcommand{\arraystretch}{1.25}
\begin{tabular}{cll}
\toprule
Path & Trajectory & Informal characterization \\
\midrule
A & $Q_0 \xrightarrow{\mathsf{BN}} Q_1 \xrightarrow{\mathsf{BB}} Q_2 \xrightarrow{\mathsf{DSF}} Q_2$
  & capital deepening \\
 &  & then realized consumption (canonical) \\
B & $Q_0 \xrightarrow{\mathsf{Fo}} Q_0 \xrightarrow{\mathsf{Fo}} Q_0 \xrightarrow{\mathsf{Fo}} Q_0$
  & pure subsistence; high time preference \\
C & $Q_0 \xrightarrow{\mathsf{BN}} Q_1 \xrightarrow{\mathsf{SF}} Q_1 \xrightarrow{\mathsf{SF}} Q_1$
  & coastal specialist; builds net, then fishes \\
D & $Q_0 \xrightarrow{\mathsf{BN}} Q_1 \xrightarrow{\mathsf{SF}} Q_1 \xrightarrow{\mathsf{BB}} Q_2$
  & alternator; \\
  &  & preferences reverse between rounds \\
\bottomrule
\end{tabular}
\end{center}

\noindent
For each path, comparisons between two actions that share an end
($\mathsf{BN}$ vs $\mathsf{BB}$ at $Q_1$, both ending at
$\mathsf{K}$) put nothing on record and are omitted; only
\emph{distinct-end} comparisons carry information.  The displays
list the instances that each path's choices put on record, carried
into the order $\pref{a}{t}$ by \axref{O0}.

\[
  \begin{aligned}
  t=0:\;& \mathsf{K} \revpref{\mathsf{C}}{0} \mathsf{S}\\
  t=1:\;& \mathsf{K} \revpref{\mathsf{C}}{1} \mathsf{S},\quad
          \mathsf{K} \revpref{\mathsf{C}}{1} \mathsf{SC}\\
  t=2:\;& \mathsf{DC} \revpref{\mathsf{C}}{2} \mathsf{S},\quad
          \mathsf{DC} \revpref{\mathsf{C}}{2} \mathsf{K},\quad
          \mathsf{DC} \revpref{\mathsf{C}}{2} \mathsf{SC}
  \end{aligned}
\]

An actor with high
time preference (or low marginal capital productivity) who never
finds capital formation worthwhile.
\[
  t = 0, 1, 2:\quad \mathsf{S} \revpref{a}{t} \mathsf{K}.
\]
The state never advances past $Q_0$; no capital accumulates and
no higher-order ends become reachable.

Builds the net at
$t=0$, then shore-fishes from $t=1$ onward.
\[
  \begin{aligned}
  t=0:\;& \mathsf{K} \revpref{a}{0} \mathsf{S}\\
  t=1:\;& \mathsf{SC} \revpref{a}{1} \mathsf{S},\quad
          \mathsf{SC} \revpref{a}{1} \mathsf{K}\\
  t=2:\;& \mathsf{SC} \revpref{a}{2} \mathsf{S},\quad
          \mathsf{SC} \revpref{a}{2} \mathsf{K}
  \end{aligned}
\]
The state never reaches $Q_2$; deep-sea catch is in principle
attainable one further capital-formation step away
($\mathsf{BB}$ would take the actor there) but the preferences do
not warrant the further investment.

Same as Path~C through
$t=1$, but at $t=2$ the actor returns to capital formation.
\[
  \begin{aligned}
  t=0:\;& \mathsf{K} \revpref{a}{0} \mathsf{S}\\
  t=1:\;& \mathsf{SC} \revpref{a}{1} \mathsf{S},\quad
          \mathsf{SC} \revpref{a}{1} \mathsf{K}\\
  t=2:\;& \mathsf{K} \revpref{a}{2} \mathsf{S},\quad
          \mathsf{K} \revpref{a}{2} \mathsf{SC}
  \end{aligned}
\]
The comparison $\mathsf{SC}$ vs $\mathsf{K}$ \emph{reverses}
between $t=1$ and $t=2$: $\mathsf{SC} \revpref{a}{1} \mathsf{K}$ but
$\mathsf{K} \revpref{a}{2} \mathsf{SC}$.  This is permitted by the
axiom system --- the record at each time lifts (by~\axref{O0}) into a
\emph{separate} strict total order $\pref{a}{t}$, and~\axref{O3}
places no across-time constraint.  The reversal is
\emph{consistent} (each $\pref{a}{t}$ is internally a strict total
order) but not \emph{constant} (the across-time direction
flips), exactly the distinction of
Remark~\ref{rem:constancy}.

\medskip
\noindent
What stays constant across paths: the sorts (the end set
$\mathsf{Ends}^{\mathfrak{C}}$ included), $\EndOf$, $\Use$, and the
state-indexed menus $\Cmenu_{Q_i}$ of the diagram --- the
objective availability structure.  What changes: $\Acts$,
$\pref{a}{t}$, and --- through the state the path occupies --- the
time-indexed menus $\Cmenu_t$ themselves.
The state diagram constrains \emph{what is possible}; the path
through it is the actor's subjective resolution of that structure
--- the observable trace from which the record, and through
\axref{O0} part of the $\pref{a}{t}$ history, is read off.

Holding the Crusoe ontology fixed (domains, $\EndOf$ assignments),
many distinct models satisfy the core axioms: different
availability structures, different preference orderings at each
time, different choice histories.  Praxeology constrains a
\emph{class} of structures, not a unique trajectory.  Economic
conclusions arise only when additional axioms restrict this class
further.  This is exactly the Hilbert/Tarski point: {\bf axioms
define a model class; theorems are invariants of that class.}

\subsection{The model in full: all six primitives over \texorpdfstring{$\mathbb{N}$}{N}}\label{sec:full_model}

The snapshot above fixes the primitives only on $\{0,1,2\}$ --- exhibiting
$\mathfrak{C}$ as a genuine model of $\Tprx$ requires fixing all
primitive relations at \emph{every} $t \in \mathbb{N}$.  We do so here;
the result is the model machine-checked in
Appendix~\ref{sec:lean}, of which the snapshot is the restriction
to the first three times $\{0,1,2\}$.
\begin{itemize}
\item \textbf{Time.} $\mathsf{Times}^{\mathfrak{C}} = \mathbb{N}$ with the
  standard strict order $<$.
\item \textbf{History ($\Acts$).}
  $\alpha_{\mathsf{C},0} = \mathsf{BuildNet}$,
  $\alpha_{\mathsf{C},1} = \mathsf{BuildBoat}$,
  $\alpha_{\mathsf{C},t} = \mathsf{DeepSeaFish}$ for even $t \geq 2$
  and $\mathsf{ShoreFish}$ for odd $t \geq 3$.  After the boat is
  built Crusoe \emph{alternates} deep-sea and shore fishing rather
  than always taking the deep-sea catch; this non-greedy
  alternation leaves an unrealised employment at each moment and so
  secures the free-good axiom \axref{P6} (a greedy history would
  violate it).
\item \textbf{Menus ($\Avail$).} $\mathsf{Forage}$ and
  $\mathsf{BuildNet}$ are available at every $t$; $\mathsf{ShoreFish}$
  and $\mathsf{BuildBoat}$ from $t \geq 1$; $\mathsf{DeepSeaFish}$
  from $t \geq 2$.  The menu is thus $\Cmenu_0$ at $t=0$, $\Cmenu_1$
  at $t=1$, and $\Cmenu_2$ for all $t \geq 2$ (the actor remains in
  epoch $Q_2$).
\item \textbf{Ends ($\EndOf$).} As assigned at the head of this
  section.
\item \textbf{Resources ($\Use$).} Things
  $\{\mathsf{Wood}, \mathsf{Net}, \mathsf{Boat}\}$:
  $\mathsf{BuildNet}$ and $\mathsf{BuildBoat}$ both use
  $\mathsf{Wood}$, $\mathsf{ShoreFish}$ uses $\mathsf{Net}$,
  $\mathsf{DeepSeaFish}$ uses $\mathsf{Net}$ and $\mathsf{Boat}$, and
  $\mathsf{Forage}$ uses nothing.  The shared use of $\mathsf{Wood}$
  by the two distinct, simultaneously available capital actions
  makes scarcity \axref{S1} hold at every $t \geq 1$.
\item \textbf{Scale of values ($\pref{\mathsf{C}}{t}$).} A strict
  total order on the four ends, co-varying with the history so that
  the realized end ranks on top --- which is what makes the record
  agree with the order, \axref{O0}:
  \[
  \begin{array}{ll}
  t = 0, 1: & \mathsf{Capital} \succ \mathsf{ShoreCatch} \succ
              \mathsf{Subsist} \succ \mathsf{DeepCatch};\\[2pt]
  \text{even } t \geq 2: & \mathsf{DeepCatch} \succ \mathsf{ShoreCatch}
              \succ \mathsf{Subsist} \succ \mathsf{Capital};\\[2pt]
  \text{odd } t \geq 3: & \mathsf{ShoreCatch} \succ \mathsf{DeepCatch}
              \succ \mathsf{Subsist} \succ \mathsf{Capital}.
  \end{array}
  \]
  The $t=4$ and $t=5$ rows differ --- the deep-sea/shore reversal of
  Remark~\ref{rem:constancy}, here machine-checked.
\end{itemize}

\noindent These data fix all six primitives at every
$t \in \mathbb{N}$.  One checks the base axioms directly:
\axref{T0}--\axref{T6} hold of $(\mathbb{N}, <)$;
\axref{P1}--\axref{P5} and \axref{C1} of the single-valued history;
\axref{P6} by the non-greedy alternation;
\axref{O0}--\axref{O4} because each $\pref{\mathsf{C}}{t}$ is a strict
total order topped by the realized end; and \axref{S1} by the shared
use of $\mathsf{Wood}$.  This is the constructive consistency proof
the type-checker carries out in Appendix~\ref{sec:lean}; the finite
snapshot $\{0,1,2\}$ is its restriction to the first three times --- a
structure, but not itself a model, since \axref{T6} admits no finite
models.


\section{Derived theorems of the base system}
\label{sec:derived_base}

The axiom system $\Tprx$ is not merely a language for stating
later results.  Like Hilbert's axioms for geometry, it also
yields the central propositions of the theory it formalizes.
In this section we derive the core of those propositions
--- the ones that follow from $\Tprx$ alone, or from $\Tprx$
together with explicit auxiliary premises that operate at the
same foundational layer.  Theorems requiring the production,
ownership, exchange, or monetary enrichment layers are deferred to
future work.

Each proof follows the Hilbert-style format: formal deductions
from explicitly cited axioms, with verbal commentary after each
step.  The shorter proofs appear in the main text; the proof of
Theorem~\ref{thm:DMU} (diminishing marginal utility), which
requires more detailed case analysis, is given in
Appendix~\ref{app:dmu_proof}.

Throughout this section, fix an actor $a$ and time $t$.  By axiom
\axref{O1}, there exists an action $\alpha_{a,t}$ such that
$\Acts(a,\alpha_{a,t},t)$.  By \axref{C1}, this action is unique.
We write $\EndAt(a,t) := \{E \mid \EndAt(a,t,E)\}$ for the
end menu at $(a,t)$, as defined in
Section~\ref{subsec:axioms}.

\subsection{Results from the core axiom system}
\label{subsec:derived_core}

The following results require only the base theory $\Tprx$
(Layers~1--2).

\begin{theorem}[Asymmetry of demonstrated preference]\label{thm:asymm}
$\forall a\,t\,E\,F\;(E\revpref{a}{t}F\Rightarrow\neg(F\revpref{a}{t}E))$.
\end{theorem}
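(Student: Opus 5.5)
The plan is to argue directly from the definition of the record (Definition~\ref{def:revpref}) using only \axref{C1} and \axref{P4}, with no appeal to the primitive order. I will suppose, for contradiction, that both $E\revpref{a}{t}F$ and $F\revpref{a}{t}E$ hold. Unpacking the first gives actions $\alpha,\beta$ with $\Acts(a,\alpha,t)$, $\Avail(a,\beta,t)$, $\EndOf(\alpha,E)$, $\EndOf(\beta,F)$ and $E\neq F$. Unpacking the second gives actions $\gamma,\delta$ with $\Acts(a,\gamma,t)$, $\Avail(a,\delta,t)$, $\EndOf(\gamma,F)$ and $\EndOf(\delta,E)$.

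The key step is the \emph{star} property announced in Remark~\ref{rem:acts_vs_pref}(ii): the left-hand end of any demonstrated pair at $(a,t)$ is the end of the unique performed action. From $\Acts(a,\alpha,t)$ and $\Acts(a,\gamma,t)$, axiom \axref{C1} gives $\alpha=\gamma$. Substituting, $\EndOf(\alpha,E)$ and $\EndOf(\alpha,F)$ both hold, so \axref{P4} yields $E=F$. This contradicts the clause $E\neq F$ of the first record instance. Discharging the assumption gives $\neg(F\revpref{a}{t}E)$, and generalizing over $a,t,E,F$ completes the argument.

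Along the way I will record the intermediate lemma explicitly, since the paper later invokes it for the satisfiability of \axref{O0}--\axref{O4}. The lemma says: if $E\revpref{a}{t}F$ and $\Acts(a,\alpha,t)$, then $\EndOf(\alpha,E)$. Consequently every recorded pair at $(a,t)$ has the realized end on the left and a different end on the right, so the record is a star and hence acyclic.

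There is no real obstacle. The only care needed is bookkeeping: the two existential witnesses must be kept distinct until \axref{C1} identifies them, and I must note that irreflexivity of the record comes from the $E\neq F$ clause rather than from \axref{O4}. An alternative route would go through \axref{O0} and \axref{O4}: apply \axref{O0} to both record instances and then contradict \axref{O4}. I would mention this route as a remark. The direct route is preferable, because it shows the theorem is a fact about conduct alone (Layer~1), independent of the hidden order.
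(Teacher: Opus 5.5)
Your proof is correct and is essentially the paper's own argument: assume both records, use (C1) to identify the two performed actions, and then use (P4) to force $E=F$, which contradicts the $E\neq F$ clause of Definition~\ref{def:revpref}. The star lemma you isolate is the observation the paper states informally alongside the theorem. Your remark that the route through (O0) and (O4) is available but unnecessary matches the paper's framing that the theorem concerns the record alone.
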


If an actor's choice at $t$ puts $E$ over $F$ on record, no
choice at the same $t$ can put $F$ over $E$ on record.  This
follows directly from the action choice axiom \axref{C1}: the actor
performs one action at a time, so a single moment of conduct
cannot testify both ways.  The theorem concerns the record; for
the primitive order the corresponding property is
axiom~\axref{O4}.  The proof below establishes more than the
statement.  Every recorded pair at $(a,t)$ has the single
realized end on the left (by \axref{C1} and \axref{P4}), so the
record is a \emph{star}, and hence acyclic.  Jointly with the
grounding axiom~\axref{O0}, this means conduct can never force a
violation of~\axref{O4} --- the grounding is \emph{coherent}.

\begin{proof}\footnote{Throughout the paper proofs end with the
symbol $\square$ (equivalent to QED), which marks the formal
close of the argument --- the mathematical convention
\emph{quod erat demonstrandum}, ``what was to be shown.''}
Suppose $E \revpref{a}{t} F$ and $F \revpref{a}{t} E$.  By
Definition~\ref{def:revpref} the first record names a performed
action $\alpha$ with $\Acts(a,\alpha,t)$ and $\EndOf(\alpha,E)$,
the second a performed action $\gamma$ with $\Acts(a,\gamma,t)$
and $\EndOf(\gamma,F)$ --- each record carrying its realized end
on the left.  By \axref{C1} the action performed at $(a,t)$ is
unique, so $\alpha = \gamma$; by \axref{P4} its end is unique,
forcing $E = F$ --- contradicting the $E \neq F$ clause of
Definition~\ref{def:revpref}.
\end{proof}

\begin{proposition}[The chosen end tops the scale]
\label{prop:chosen_max}
Fix $(a,t)$.  Let $\alpha_{a,t}$ be the action performed at
$(a,t)$ (by \axref{O1} and \axref{C1}) and $\hat{E}$ its end
(by \axref{P3} and \axref{P4}); $\hat{E}$ is itself choice-relevant
by \axref{P2}.  Then $\hat{E}$ is the unique
$\pref{a}{t}$-maximum of the end menu:
\[
\forall F\;\bigl(\EndAt(a,t,F)\land F\neq\hat{E}
\;\Rightarrow\;\hat{E}\pref{a}{t}F\bigr).
\]
\end{proposition}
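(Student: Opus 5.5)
The plan is a direct unfolding of the definitions, with the grounding axiom \axref{O0} doing the real work. Fix $(a,t)$, let $\alpha_{a,t}$ be the performed action and $\hat{E}$ its end. First I will confirm that $\hat{E}$ lies on the end menu. By \axref{P2}, $\Acts(a,\alpha_{a,t},t)$ yields $\Avail(a,\alpha_{a,t},t)$, and together with $\EndOf(\alpha_{a,t},\hat{E})$ this witnesses $\EndAt(a,t,\hat{E})$ in the sense of Definition~\ref{def:endat}.

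For the displayed claim, take $F$ with $\EndAt(a,t,F)$ and $F\neq\hat{E}$. Definition~\ref{def:endat} supplies an action $\beta$ with $\Avail(a,\beta,t)$ and $\EndOf(\beta,F)$. We need $\alpha_{a,t}\neq\beta$, and this comes from \axref{P4}. If $\beta=\alpha_{a,t}$, then $\beta$ would be directed toward both $\hat{E}$ and $F$, forcing $F=\hat{E}$, which contradicts the assumption. So the tuple $(\alpha_{a,t},\beta)$ meets every clause of Definition~\ref{def:revpref}, which gives $\hat{E}\revpref{a}{t}F$. Axiom~\axref{O0} then lifts this to $\hat{E}\pref{a}{t}F$.

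For the word ``unique'' in the statement, I will add a short argument that no other menu end $G\neq\hat{E}$ can also be a maximum. By the step above, $\hat{E}\pref{a}{t}G$. By \axref{O4}, this rules out $G\pref{a}{t}\hat{E}$, so $G$ fails to dominate $\hat{E}$ and is not a maximum. Notably, completeness \axref{O2} and transitivity \axref{O3} are not needed at all, because the record already compares the chosen end with every other menu end.

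The main obstacle is purely bookkeeping: producing the distinctness $\alpha_{a,t}\neq\beta$ demanded by Definition~\ref{def:revpref}. The danger is that the same action might witness both the chosen end and $F$. \axref{P4} closes this gap, as noted in the footnote to Definition~\ref{def:revpref}. Everything else is instantiation of the axioms.
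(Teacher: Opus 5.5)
Your proof is correct and follows the paper's own argument essentially step for step: \axref{P4} gives $\beta\neq\alpha_{a,t}$, then Definition~\ref{def:revpref} and \axref{O0} give domination, and \axref{O4} gives uniqueness. Your explicit check via \axref{P2} that $\hat{E}$ lies on the menu is accurate, as is your remark that \axref{O2} and \axref{O3} play no role.
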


\begin{proof}
Let $F$ be choice-relevant with $F \neq \hat{E}$.  By
Definition~\ref{def:endat} there is $\beta$ with
$\Avail(a,\beta,t)$ and $\EndOf(\beta,F)$.  If
$\beta = \alpha_{a,t}$ then $F = \hat{E}$ by \axref{P4} ---
excluded.  So $\beta \neq \alpha_{a,t}$, the conditions of
Definition~\ref{def:revpref} are met, and
$\hat{E} \revpref{a}{t} F$; by \axref{O0},
$\hat{E} \pref{a}{t} F$.  Uniqueness follows from the
domination just established together with \axref{O4}: a second
maximum $G \neq \hat{E}$ would give $G \pref{a}{t} \hat{E}$ and
$\hat{E} \pref{a}{t} G$.

The performed action's end outranks every other end on the
menu: the actor does what he most prefers.  This is Mises's
``every action is always in perfect agreement with the scale of
values'' \citep[Ch.~IV, \S2]{Mises1949}, recovered as a
consequence of the grounding axiom rather than packed into the
word ``rational.''  In this sense the proposition is an \emph{adequacy check} on
\axref{O0}, not an independent discovery: it confirms that
grounding makes the primitive scale cohere with conduct.  That
this coherence is always achievable --- no Layer-1 structure is
excluded --- is the satisfiability established in
Theorem~\ref{thm:asymm}.
\end{proof}

\begin{theorem}[Existence of opportunity cost]\label{thm:opp_cost}
If there exist two available actions at $(a,t)$ with distinct ends, then the
chosen action --- which exists and is unique by \axref{O1}
and~\axref{C1} --- excludes at least one alternative end.\footnote{The opportunity-cost
concept was first articulated as ``Kosten als entgangener Nutzen''
(cost as foregone utility) by Friedrich von
\citet{Wieser1889}; the proposition formalizes Wieser's
insight in the present axiomatic setting.  The two-readings
refinement of Remark~\ref{rem:two_readings_opp_cost} extends
Wieser's framework once means-ends multiplicity is admitted.}
\end{theorem}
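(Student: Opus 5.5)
The plan is to formalize the statement first and then argue by a two-case split on the realized end.

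Formally: if $\Avail(a,\beta_1,t)$, $\Avail(a,\beta_2,t)$, $\EndOf(\beta_1,E_1)$, $\EndOf(\beta_2,E_2)$ and $E_1\neq E_2$, then there is an end $F$ with $\EndAt(a,t,F)$ and $F\neq\hat{E}$, where $\hat{E}$ is the end of the performed action $\alpha_{a,t}$. The action $\alpha_{a,t}$ exists by \axref{O1}, is unique by \axref{C1}, and $\hat{E}$ exists and is unique by \axref{P3} and \axref{P4}. Since the conclusion should also carry economic content, I would strengthen it to the record form: there is such an $F$ with $\hat{E}\revpref{a}{t}F$, and hence, by \axref{O0}, $\hat{E}\pref{a}{t}F$. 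This is what ``excludes'' should mean: an end was on the menu, was not realized, and was ranked below the chosen one.

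\textbf{Case split.} Because $E_1\neq E_2$, the end $\hat{E}$ cannot equal both, so at least one of them differs from $\hat{E}$. Choose $i$ with $E_i\neq\hat{E}$ and set $F:=E_i$, $\beta:=\beta_i$. Then $\EndAt(a,t,F)$ holds with witness $\beta$ (Definition~\ref{def:endat}).

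\textbf{Checking the record.} It remains to show $\beta\neq\alpha_{a,t}$, so that Definition~\ref{def:revpref} applies. If $\beta=\alpha_{a,t}$, then \axref{P4} would force $F=\hat{E}$, which contradicts the choice of $i$. So all clauses of the definition hold: $\Acts(a,\alpha_{a,t},t)$, $\Avail(a,\beta,t)$, $\alpha_{a,t}\neq\beta$, the two $\EndOf$ facts, and $\hat{E}\neq F$. This gives $\hat{E}\revpref{a}{t}F$. Alternatively, once $F$ is found the whole step can be obtained directly from Proposition~\ref{prop:chosen_max}.

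There is no real obstacle here. The only delicate point is the one the hypothesis is designed for: ruling out the case where every foregone action shares the realized end. This is the means-ends multiplicity at $\mathsf{Capital}$, and it is why the theorem demands distinct ends rather than merely two distinct available actions. The pigeonhole step ``$\hat{E}$ cannot equal both $E_1$ and $E_2$'' is exactly where that distinctness is used, so I would make it explicit in the write-up.
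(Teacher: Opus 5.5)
Your proof is correct and is essentially the paper's argument: the realized end $\EndOf(\alpha_{a,t})$, unique by \axref{O1}, \axref{C1} and \axref{P4}, can coincide with at most one of the two distinct ends, so the other is available but unrealised. Your strengthening to $\hat{E}\revpref{a}{t}F$ and hence $\hat{E}\pref{a}{t}F$ is also valid, but the paper does not put it in this theorem; it is the content of Proposition~\ref{prop:chosen_max}, which uses the same $\beta\neq\alpha_{a,t}$ step via \axref{P4}.
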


Scarcity of time is structural:
whenever multiple ends are choice-relevant at a moment, pursuing one forecloses
the others. Opportunity cost is not an additional assumption --- it is a theorem
of the action structure.  This is Mises' point that action always
involves renunciation: ``Action is an attempt to substitute a more
satisfactory state of affairs for a less satisfactory one''
\citep[Ch.~IV, \S4]{Mises1949}.

\begin{proof}
Assume actions $\alpha,\beta$ with
$\Avail(a,\alpha,t)$, $\Avail(a,\beta,t)$,
$\EndOf(\alpha,E)$, $\EndOf(\beta,F)$, and $E \neq F$.
(By \axref{P3} every available action has an end; by hypothesis
two such ends differ.)

We assume a genuine choice situation: at least two ends are
choice-relevant at the same time, i.e.\ $|\EndAt(a,t)| \geq 2$.

\medskip

By \axref{O1} and \axref{C1}, there exists a unique $\alpha_{a,t}$ with
$\Acts(a,\alpha_{a,t},t)$; and by~\axref{P4} this performed action
has a unique end.  Hence at most one of $E, F$ coincides with
$\EndOf(\alpha_{a,t})$; since $E \neq F$, at least one of $E,F$ is
then available but unrealised --- the foregone end the proposition
asserts.
\end{proof}

\begin{definition}[Opportunity cost]\label{def:opp_cost}
Let $\hat{E} = \EndOf(\alpha_{a,t})$ be the realized end at $(a,t)$
(Proposition~\ref{prop:chosen_max}).  The \emph{opportunity cost} is the most-preferred foregone end
--- the $\pref{a}{t}$-greatest end the actor did not realize:
\[
\mathrm{Cost}(a,t) \;:=\; \max\nolimits_{\pref{a}{t}}
\bigl(\EndAt(a,t) \setminus \{\hat{E}\}\bigr),
\]
defined when $\lvert\EndAt(a,t)\rvert \geq 2$ (some end is foregone).
\end{definition}

\begin{corollary}[Existence, uniqueness, identification]
\label{cor:opp_cost}
If at least two ends are choice-relevant at $(a,t)$
($\lvert\EndAt(a,t)\rvert \geq 2$), then $\mathrm{Cost}(a,t)$ exists
and is unique, and it is the \emph{second-ranked} end of the menu
--- the end immediately below $\hat{E}$ in $\pref{a}{t}$.
\end{corollary}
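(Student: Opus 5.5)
The plan is to work entirely inside the finite strict total order that \axref{O2}--\axref{O4} impose on the end menu. First I would fix $(a,t)$ with $\lvert\EndAt(a,t)\rvert\geq 2$. I would then let $\hat{E}$ be the realized end, as in Proposition~\ref{prop:chosen_max}; it lies in $\EndAt(a,t)$ by \axref{P2}. Next I would form $R := \EndAt(a,t)\setminus\{\hat{E}\}$. $R$ is nonempty because the menu has at least two elements. $R$ is finite by the finiteness hypothesis of Remark~\ref{rem:finiteness}, which I will carry explicitly as that remark instructs.

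\emph{Existence.} Restricted to $R$, the relation $\pref{a}{t}$ is a strict total order: comparability of distinct elements comes from \axref{O2}, transitivity from \axref{O3}, and asymmetry, hence irreflexivity, from \axref{O4}. A nonempty finite strictly totally ordered set has a greatest element. I would prove this by a short induction on $\lvert R\rvert$. Given a maximum $M$ of $R\setminus\{G\}$, compare it with $G$ by \axref{O2} and keep whichever ranks higher; transitivity \axref{O3} ensures the chosen element still dominates the rest. This yields $\mathrm{Cost}(a,t)=M$.

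\emph{Uniqueness.} Suppose $M$ and $M'$ are distinct maxima of $R$. Maximality of each gives $M\pref{a}{t}M'$ and $M'\pref{a}{t}M$, which contradicts \axref{O4}. So the maximum is unique, and the definite description in Definition~\ref{def:opp_cost} is well defined.

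\emph{Identification as second-ranked.} By Proposition~\ref{prop:chosen_max}, $\hat{E}\pref{a}{t}M$, and $\hat{E}$ is the unique top of the menu. I would show that $M$ sits immediately below $\hat{E}$: no menu end $G$ satisfies $\hat{E}\pref{a}{t}G\pref{a}{t}M$. If such a $G$ existed, then $G\neq\hat{E}$ by asymmetry \axref{O4}, so $G\in R$. Since $M$ dominates $R$, either $G=M$, which is impossible by irreflexivity, or $M\pref{a}{t}G$. The latter together with $G\pref{a}{t}M$ again contradicts \axref{O4}. Thus $M$ is the successor of $\hat{E}$ in the scale, i.e.\ the second-ranked end.

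I expect no step to be genuinely hard. The only real subtlety is existence, because it is not first-order derivable without the finiteness of $\EndAt(a,t)$: an infinite menu ordered like $\omega^{*}$ beneath $\hat{E}$ would have no maximum. So the main point of care is to invoke Remark~\ref{rem:finiteness} explicitly, rather than anything in $\Tprx$, at the existence step.
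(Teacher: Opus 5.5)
Your proof is correct and follows the paper's own argument. The remainder $\EndAt(a,t)\setminus\{\hat{E}\}$ is non-empty and finite; \axref{O2}--\axref{O4} make it a strict total order, so it has a unique maximum; and Proposition~\ref{prop:chosen_max} identifies that maximum as the end immediately below $\hat{E}$. You simply write out the induction, the uniqueness argument, and the ``nothing in between'' step, which the paper compresses into one sentence each. One small slip is in your closing remark: a remainder ordered like $\omega^{*}$ (a top element with an infinite descent below it) does have a greatest element, so the counterexample showing that finiteness is needed should be an $\omega$-ordered remainder, i.e.\ an end menu of order type $\omega+1$ with $\hat{E}$ on top.
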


\begin{proof}
$\EndAt(a,t) \setminus \{\hat{E}\}$ is non-empty (as
$\lvert\EndAt(a,t)\rvert \geq 2$) and finite
(Remark~\ref{rem:finiteness}), and $\pref{a}{t}$ is a strict total
order on $\EndAt(a,t)$ by \axref{O2}--\axref{O4}; a non-empty finite
subset of a strict total order has a unique maximum, so
$\mathrm{Cost}(a,t)$ exists and is unique.  By
Proposition~\ref{prop:chosen_max}, $\hat{E}$ is the maximum of the
whole menu; deleting it promotes the second-ranked end to the
maximum of the remainder, which is therefore $\mathrm{Cost}(a,t)$.
\end{proof}

\noindent Theorem~\ref{thm:opp_cost} establishes that
\emph{something} is foregone; Definition~\ref{def:opp_cost}
identifies \emph{which} foregone end carries the valuation.  This is
Wieser's \emph{Kosten als entgangener Nutzen} --- cost as the value
of the best foregone alternative \citep{Wieser1889} --- now stated
as a specific menu position (the second-ranked end) rather than the
bare existence of an unmet end.  ($\mathrm{Cost}$ answers the \emph{wants-unmet} reading; which of
several same-end means is given up is the separate
\emph{means-lost} reading of
Remark~\ref{rem:two_readings_opp_cost}.)

\medskip

Two further results follow from the same axioms.

\begin{theorem}[Scarcity of time is structural]
\label{thm:time_scarcity}
If there exist two distinct available actions at $(a,t)$, then
time at $t$ is scarce for $a$: the temporal instant cannot
accommodate both actions.\footnote{The treatment of time as an
economic primitive --- and the analysis of capital structure built
on it --- originates with Eugen von B\"ohm-Bawerk's \emph{Positive
Theory of Capital}~\citep{BohmBawerk1889}.  The theorem here
extracts the temporal-scarcity claim from a single layer of the
axiom system; the capital-structure content of B\"ohm-Bawerk's
analysis enters only with the production enrichment of
future research.}
\end{theorem}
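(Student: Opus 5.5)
The first step is to fix a formal reading of ``time at $t$ is scarce for $a$''. I will read it as: for two distinct actions $\alpha\neq\beta$ with $\Avail(a,\alpha,t)$ and $\Avail(a,\beta,t)$, it is not the case that $\Acts(a,\alpha,t)\land\Acts(a,\beta,t)$. Moreover, at least one of the two is available but unperformed at $t$. The proof is then a short Hilbert-style derivation from Layer~1 and \axref{O1}, parallel to the proof of Theorem~\ref{thm:opp_cost}. The difference is that it works at the level of actions rather than ends, so \axref{P4} and the hypothesis $E\neq F$ are not needed.

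The key steps, in order, are as follows.

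\emph{Step one.} Assume $\alpha\neq\beta$ are both in $\Cmenu_{a,t}$. Suppose for contradiction that $\Acts(a,\alpha,t)$ and $\Acts(a,\beta,t)$ both hold. Instantiating \axref{C1} gives $\alpha=\beta$, which contradicts the hypothesis. Alternatively, instantiating \axref{P5} at $s=t$ gives $t\neq t$, which contradicts equality; this reading makes Mises's ``never synchronous'' the operative textual anchor. Either route shows that the instant cannot accommodate both actions.

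\emph{Step two.} For the positive form, invoke \axref{O1} and \axref{C1} to obtain the unique performed action $\alpha_{a,t}$. Since $\alpha\neq\beta$, at most one of them equals $\alpha_{a,t}$. Hence some available action $\gamma\in\{\alpha,\beta\}$ satisfies $\Avail(a,\gamma,t)\land\neg\Acts(a,\gamma,t)$: the instant $t$ is spent on $\alpha_{a,t}$ and is thereby withheld from $\gamma$.

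The only real obstacle is definitional. The statement is phrased informally, so the content of the proof depends on choosing the formalization of ``scarce''. I would state both the negative (non-accommodation) and positive (foregone available action) forms explicitly. I would also remark that, unlike Theorem~\ref{thm:opp_cost}, the conclusion holds even when $\alpha$ and $\beta$ share an end. This is the means-lost reading of Remark~\ref{rem:two_readings_opp_cost}, for example $\mathsf{BuildNet}$ versus $\mathsf{BuildBoat}$ at Crusoe's $t=1$. It shows that time scarcity is strictly more general than the existence of a foregone end.
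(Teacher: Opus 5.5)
Your proposal is correct and is essentially the paper's argument. The paper likewise takes $\alpha\neq\beta$ available at $(a,t)$, uses \axref{C1} to conclude that at most one of them is performed, and names the end of an unperformed one via \axref{P3}; your appeal to \axref{O1} in step two is unnecessary, since \axref{C1} alone already leaves some member of $\{\alpha,\beta\}$ available but unperformed, which is why the paper's follow-up remark says the result rests on \axref{C1} and \axref{P3} alone.
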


\begin{proof}
Let $\alpha \neq \beta$ be two distinct available actions at
$(a,t)$.  By~\axref{C1} at most one is performed, so at least one
--- say $\beta$, with end $F$ by~\axref{P3} --- remains unrealised
at $t$.  Its end is feasible yet foreclosed, not by any scarcity
of goods but purely by the uniqueness of action at a
time.\footnote{The claim that only one action can occupy an
instant invites the obvious
objection that a person can manifestly walk, talk, look around,
and chew gum at the same time, so the temporal instant
\emph{does} seem to accommodate concurrent activity.  The
resolution lies in the praxeological notion of a unitary action,
made explicit in Remark~\ref{rem:composite_actions}: the
Actions sort ranges over \emph{purposeful units of conduct},
each directed at a single end (by~\axref{P3}+\axref{P4}).  A coordinated
sequence of bodily movements subordinated to one chosen end
counts here as one action, however internally complex; what \axref{C1}
excludes is two \emph{distinct} purposeful units, with
\emph{different} ends, by the same actor at the same time.
``Walk and chew gum'' typically falls under one purposeful
unit (e.g.\ ``cross the room while staying alert''); two
genuinely competing ends with conflicting means
(``go left'' \emph{and} ``go right'' simultaneously) is what
\axref{C1} rules out.} Time itself is thus the scarce means.
\end{proof}

\begin{remark}[Misesian interpretation of temporal scarcity]
\label{rem:time_scarcity}
Even in a world without material scarcity, time would remain
scarce.  Whenever multiple ends are choice-relevant at a given
instant, the actor must select one action and thereby forgo
others.  Temporal scarcity is thus a structural consequence of
the action axiom together with the non-simultaneity of
individual action.  This is Mises's point in his treatment of
``the economization of time'': ``his time is scarce.  He must
economize it as he does other scarce factors''
\citep[Ch.~V, \S3]{Mises1949}; and ``even in the land of Cockaigne
man would be forced to economize time'' \citep[Ch.~V, \S3]{Mises1949}.
Rothbard's \emph{Man, Economy, and State}~\citep[ch.~1,
p.~5]{Rothbard2009} reaches the same conclusion by a slightly
different route, anchored more directly in the action choice axiom:
``A man's time is always scarce.  He is not immortal\dots\
Action takes place by choosing which ends shall be satisfied
by the employment of means.  Time is scarce for man only
because whichever ends he chooses to satisfy, there are others
that must remain unsatisfied.''  The structural character of the
scarcity is exactly what Theorem~\ref{thm:time_scarcity}
records: it follows from \axref{C1} and~\axref{P3} alone, with no
appeal to material scarcity or even to the time-order $<$.  The
actor performs at most one action at $(a,t)$, so a second
choice-relevant end must go unmet.

The Misesian and Rothbardian formulations agree on what time
scarcity \emph{is}: a structural \emph{choice-among-ends}
constraint at a single instant, not a claim that ``every
action uses $t' - t$ units of time'' for some non-zero
duration.  The duration-of-an-action reading is a separate
matter: \citet[ch.~1, pp.~13--16]{Rothbard2009} distinguishes
within any action the \emph{period of production}, the
\emph{duration of serviceability}, and the \emph{period of
provision}, each finer than the bare $<$-order on
$\mathsf{Times}$ that the base layer carries; all three are
deferred to the production and time-preference enrichments of
future research.
\end{remark}

\begin{theorem}[Subjectivity of opportunity cost --- a metatheorem]
\label{thm:subjectivity}
Opportunity cost depends on the actor's preference order
$\pref{a}{t}$ --- accessed only through choice --- not
on any physical properties of actions or things.
\end{theorem}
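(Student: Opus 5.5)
Since this is a metatheorem, I will read ``depends on $\pref{a}{t}$, not on physical properties'' as two model-theoretic claims about $\mathrm{Cost}(a,t)$ (Definition~\ref{def:opp_cost}).

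The first claim is \emph{invariance}. The cost is determined by the menu data $\Avail$, $\EndOf$, $\Acts$ together with $\pref{a}{t}$, and it does not involve $\Use$ or the $\mathsf{Things}$ sort at all. This can be read off the defining formula. $\mathrm{Cost}(a,t)$ is the $\pref{a}{t}$-maximum of $\EndAt(a,t)\setminus\{\hat E\}$. Here $\EndAt(a,t)$ is defined from $\Avail$ and $\EndOf$ alone (Definition~\ref{def:endat}), and $\hat E$ is defined from $\Acts$ and $\EndOf$. So if two models agree on those relations and on $\pref{a}{t}$ but differ arbitrarily in $\Use$ and $\mathsf{Things}$, then they assign the same cost. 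The check is that neither symbol $\Use$ nor $\mathsf{Things}$ occurs in the definition.

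The second claim is \emph{dependence}, and it is the substantive half. I will show that the ``physical'' data cannot fix the cost. Concretely, I will exhibit two models of $\Tprx$ that agree on every sort, on $\Avail$, $\EndOf$, $\Use$, $<$, and even on $\Acts$, hence on the entire demonstrated record. They will differ only in $\pref{a}{t}$ below the top end, and they will have different $\mathrm{Cost}(a,t)$. The Crusoe model of Section~\ref{sec:full_model} at $t=2$ is the natural witness. The menu there has four ends with $\mathsf{DeepCatch}$ realized. Swapping the adopted order $\mathsf{ShoreCatch}\succ\mathsf{Subsist}$ to $\mathsf{Subsist}\succ\mathsf{ShoreCatch}$ at that one time moves the second-ranked end, so by Corollary~\ref{cor:opp_cost} the cost changes from $\mathsf{ShoreCatch}$ to $\mathsf{Subsist}$.

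I then need to verify that the modified structure is still a model. Only the $O$-axioms at $(a,t)=(\mathsf{C},2)$ involve $\pref{}{}$, so T, P, C1 and S1 are unaffected. For O0, the record at $t=2$ is a star with $\mathsf{DeepCatch}$ on top (the proof of Theorem~\ref{thm:asymm}), and the new order still places $\mathsf{DeepCatch}$ first. O2--O4 hold because the new order is again a strict total order. Alternatively, the Lean model could be modified the same way.

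The main obstacle is making ``physical properties'' precise enough that the metatheorem is a genuine statement rather than a slogan. My plan is to identify these properties with the $\pref{}{}$-free reduct of the signature, that is, $\Avail$, $\EndOf$, $\Use$, $<$, and the $\mathsf{Things}$ sort. With that choice, the two halves say exactly this: cost is a function of the $\pref{}{}$-expansion, it factors through the $\Use$-free part, and it is not a function of the $\pref{}{}$-free reduct. Since $\Acts$ is held fixed in the witness, the same construction also shows that the cost is not determined by choice data alone, echoing the knowledge-problem remark in Section~\ref{subsec:crusoe}. The one caveat is the case $|\EndAt(a,t)|=2$: there the record pins down the order, and the cost is trivially the unique foregone end. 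So the dependence claim needs $|\EndAt(a,t)|\ge 3$, and I would state it with that hypothesis.
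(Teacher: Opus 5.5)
Your proposal is correct, but it is organized differently from the paper's proof. The paper proves the metatheorem by inspecting the \emph{axioms}. It uses Theorem~\ref{thm:opp_cost} and (O0) to put the realized end on record above a foregone end. It then observes that no axiom of $\Tprx$ ties $\pref{a}{t}$ to a physical or numerical magnitude: the order's only link is to conduct, through (O0). Your second half (two models that agree on every $\succ$-free primitive, including $\Acts$, but have different $\mathrm{Cost}(a,t)$) is what the paper separates out as Corollary~\ref{cor:subjectivity_model}. The paper's witness is Crusoe at $t=1$, using the two completions of $\mathsf{K}$ over $\{\mathsf{S},\mathsf{SC}\}$. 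Yours swaps $\mathsf{ShoreCatch}$ and $\mathsf{Subsist}$ beneath $\mathsf{DeepCatch}$ at $t=2$. Both witnesses are valid: your modified order is still a strict total order with the realized end on top, so (O0)--(O4) survive. Your $\lvert\EndAt(a,t)\rvert\ge 3$ caveat matches the paper's remark after the corollary. Your first half replaces the axiom inspection with an inspection of the \emph{definition} of $\mathrm{Cost}$. Since neither $\Use$ nor $\mathsf{Things}$ occurs in it, the cost factors through $\Avail$, $\EndOf$, $\Acts$ and $\pref{a}{t}$. What your route buys is precision. The absence of an axiom directly linking $\succ$ to physical data does not by itself exclude indirect constraints: for instance, $\Use$ constrains $\Acts$ through (P6), and $\Acts$ constrains $\succ$ through (O0). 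Your invariance-plus-independence pair instead states exactly what ``depends on $\succ$, not on physical properties'' means, and proves both directions. What the paper's route buys is its emphasis on the ordinal character of the scale and on (O0) as the sole channel from choice to the order, which your argument leaves implicit. One wording point: you first count $\Avail$ and $\EndOf$ among the ``physical'' data and then show that the cost depends on them. Your final formulation resolves this: the cost factors through the $\Use$-free part but is not a function of the $\succ$-free reduct. That is the version you should state.
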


Unlike the preceding results, this is a statement \emph{about}
the theory $\Tprx$ rather than a formula of $\Lprx$ derived
inside it: its content is that the axiom list contains no
postulate linking $\pref{a}{t}$ to a physical magnitude, and
the proof accordingly proceeds by inspection of the axioms ---
the standard form of a \emph{metatheorem}.

\begin{proof}
Consider $(a,t)$ with at least two choice-relevant ends.  By
Theorem~\ref{thm:opp_cost} pick an available, unrealised action
$\beta$ with end $F$ distinct from the performed end
$E = \EndOf(\alpha_{a,t})$ (which exists by~\axref{P3}
and~\axref{P4}).  The conditions of Definition~\ref{def:revpref}
then hold, so $E \revpref{a}{t} F$, and $E \pref{a}{t} F$
by~\axref{O0}.

No axiom of $\Tprx$ links $\pref{a}{t}$ to a physical metric: its
only connection is to conduct, through~\axref{O0}, while~\axref{P6}
and the ordinal time axioms~\axref{T1}--\axref{T6} introduce no
numerical measure.  Opportunity cost is therefore fixed by the
actor's ordinal comparisons alone, accessed only through choice
--- Mises' point that ``costs are equal to the value attached to
the satisfaction which one must forego in order to attain the end
aimed at'' \citep[p.~97]{Mises1949}.
\end{proof}

\begin{corollary}[Subjectivity, model-theoretically]
\label{cor:subjectivity_model}
There are two models $\mathfrak{M}_1, \mathfrak{M}_2$ of $\Tprx$
with \emph{identical} Layer-1 parts --- the same domains and the
same $\Avail, \Acts, \EndOf, \Use, <$ --- in which the opportunity
cost differs:
$\mathrm{Cost}^{\mathfrak{M}_1}(a,t) \neq
\mathrm{Cost}^{\mathfrak{M}_2}(a,t)$.
\end{corollary}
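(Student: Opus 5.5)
The plan is to take the $\mathbb{N}$-time Crusoe model $\mathfrak{C}$ of Section~\ref{sec:full_model} as $\mathfrak{M}_1$, and to obtain $\mathfrak{M}_2$ by changing only the primitive order $\pref{\mathsf{C}}{t}$ at a single time. The natural choice is $t=1$, where the menu contains three distinct ends and the record leaves one pair open. At $t=1$ the end menu is $\EndAt(\mathsf{C},1)=\{\mathsf{Capital},\mathsf{ShoreCatch},\mathsf{Subsist}\}$. The realized end is $\mathsf{Capital}$, and the record consists only of $\mathsf{Capital}\revpref{\mathsf{C}}{1}\mathsf{Subsist}$ and $\mathsf{Capital}\revpref{\mathsf{C}}{1}\mathsf{ShoreCatch}$. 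In $\mathfrak{M}_1$ the order is $\mathsf{Capital}\succ\mathsf{ShoreCatch}\succ\mathsf{Subsist}\succ\mathsf{DeepCatch}$. In $\mathfrak{M}_2$ I would swap the two foregone ends: $\mathsf{Capital}\succ\mathsf{Subsist}\succ\mathsf{ShoreCatch}\succ\mathsf{DeepCatch}$. All other times and all other primitives (domains, $\Avail,\Acts,\EndOf,\Use,<$) stay exactly as in $\mathfrak{C}$.

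The next step is to check that $\mathfrak{M}_2\models\Tprx$. Every axiom other than \axref{O0}--\axref{O4} mentions only Layer-1 primitives. For these the verification is literally the one already carried out for $\mathfrak{C}$, since those relations are unchanged. Axioms \axref{O0}--\axref{O4} are per-instant constraints: none of them relates $\pref{a}{t}$ to $\pref{a}{s}$ for $s\neq t$. So they need rechecking only at $t=1$, and there they hold for two reasons. First, the new relation is a strict total order on all four ends, which gives \axref{O2}--\axref{O4}. Second, it still places $\mathsf{Capital}$ above both recorded alternatives, which is all \axref{O0} demands, because the record is a Layer-1 derived notion and is therefore identical in both models. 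This is the satisfiability observation made after the Layer-2 axioms: any strict total order extending the star-shaped record works.

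Finally, Corollary~\ref{cor:opp_cost} identifies $\mathrm{Cost}(\mathsf{C},1)$ as the second-ranked end of the menu. This gives $\mathrm{Cost}^{\mathfrak{M}_1}(\mathsf{C},1)=\mathsf{ShoreCatch}$ and $\mathrm{Cost}^{\mathfrak{M}_2}(\mathsf{C},1)=\mathsf{Subsist}$, which are distinct, and that proves the claim.

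The only real obstacle is the consistency of $\mathfrak{M}_2$. The argument above sidesteps it by locality: the modification touches one instant, and the full-model check for $\mathfrak{C}$ transfers verbatim everywhere else. If one wants machine-checked certainty, I would first try redefining the order at $t=1$ in the \Lean{} instance of Appendix~\ref{sec:lean} and reusing its proofs, expecting only the $t=1$ order case to need re-proof.
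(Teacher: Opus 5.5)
Your proposal is correct and follows the paper's own proof essentially verbatim: take the canonical $\mathbb{N}$-time Crusoe model and vary only the $\mathsf{Subsist}$-versus-$\mathsf{ShoreCatch}$ ranking at $t=1$, which the record leaves open. Both completions extend the star-shaped record, so O0--O4 hold while every Layer-1 primitive is unchanged, and the cost is $\mathsf{ShoreCatch}$ in one model and $\mathsf{Subsist}$ in the other. Your explicit locality argument for why all other axioms carry over (no axiom links the order at distinct times, and the non-order axioms mention only Layer-1 primitives) spells out more carefully what the paper asserts in a single line.
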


\begin{proof}
Take Crusoe at $t=1$ (Section~\ref{subsec:crusoe}): the
choice-relevant ends are
$\{\mathsf{S}, \mathsf{K}, \mathsf{SC}\}$ and the realized end is
$\mathsf{K}$.  The record pins only the chosen end on top ---
$\mathsf{K} \revpref{\mathsf{C}}{1} \mathsf{S}$ and
$\mathsf{K} \revpref{\mathsf{C}}{1} \mathsf{SC}$, the star property
of Remark~\ref{rem:acts_vs_pref} --- and says nothing about
$\mathsf{S}$ versus $\mathsf{SC}$.  Both completions
\[
\mathsf{K} \succ \mathsf{SC} \succ \mathsf{S}
\qquad\text{and}\qquad
\mathsf{K} \succ \mathsf{S} \succ \mathsf{SC}
\]
extend the record and satisfy \axref{O0}--\axref{O4}, yielding
models $\mathfrak{M}_1$ and $\mathfrak{M}_2$ that agree on every
Layer-1 primitive --- both are the canonical $\mathbb{N}$-time
Crusoe of Section~\ref{subsec:crusoe}, identical in every domain
and in $\Avail, \Acts, \EndOf, \Use, <$ (and in $\pref{a}{t}$ at
every time other than $t=1$); only the free $\mathsf{S}$-vs-$\mathsf{SC}$
order at $t=1$ is varied.  But the most-preferred foregone end is
$\mathsf{SC}$ in $\mathfrak{M}_1$ and $\mathsf{S}$ in
$\mathfrak{M}_2$: $\mathrm{Cost}(\mathsf{C},1)$ differs across two
models with identical observable structure.
\end{proof}

\noindent The example needs at least three choice-relevant ends:
with two, the star record already pins the whole order (the
``static fully-observed'' regime of
Remark~\ref{rem:acts_vs_pref}), and no freedom remains.  Where the
metatheorem says the language carries no physical handle on
opportunity cost, the corollary makes the dependence operational:
hold everything observable fixed, vary only the hidden scale
$\pref{a}{t}$, and the cost itself moves --- subjectivity in the
model-theoretic sense the term usually carries.

\begin{crusoe}[scarcity of time and opportunity cost]
These results were already visible in
$\mathfrak{C}$ (Section~\ref{subsec:crusoe}).  The formal
theorems confirm what the example exhibited:

\medskip
\begin{center}
\renewcommand{\arraystretch}{1.3}
\begin{tabular}{l>{\raggedright\arraybackslash}p{8cm}}
\toprule
\textbf{Result} & \textbf{Crusoe instance} \\
\midrule
Temporal scarcity &
  At each $t$, Crusoe performs exactly one action.
  At $t = 0$ he builds the net and thereby forgoes
  foraging --- the temporal instant cannot accommodate
  two distinct actions of one actor. \\
Opportunity cost &
  At $t = 0$, the opportunity cost of $\mathsf{BuildNet}$ is
  $\mathsf{Subsist}$, the only alternative end available
  in $\Cmenu_0$. \\
Subjectivity &
  The cost is determined entirely by Crusoe's ranking
  $\mathsf{Capital} \pref{\mathsf{C}}{0} \mathsf{Subsist}$,
  not by any physical property of nets, fish, or
  foraged plants. \\
\bottomrule
\end{tabular}
\end{center}
\end{crusoe}

\subsection{Diminishing marginal utility}
\label{subsec:derived_dmu}

The law of diminishing marginal utility is frequently presented
as a direct consequence of purposeful action allocating scarce
means to subjectively ranked ends.  In a Hilbert-style framework,
however, this conclusion does \emph{not} follow from the
praxeological core alone.  The purpose of this subsection is to
make explicit the additional ontological and behavioral structure
required for a valid deduction --- which is, in keeping with the
methodological frame of Section~\ref{subsec:mises_programme},
nothing other than \emph{the definition of what it means to
allocate rationally}.

Six facets enter the deduction.  Two structural axioms govern
the schedule itself, (MU0) \emph{functionality} and (MU1)
\emph{feasibility}; the strict total order on $\EndAt(a,t)$ is
already supplied by \axref{O2}--\axref{O4} of Layer~2
(\emph{preference}); and three further axioms complete it ---
(MU2) \emph{fungibility}, (MU3) \emph{scarcity awareness}, and
(MU4) \emph{coherence of the served set}.  Each is defined
below, and none is optional: removing any one breaks the
framework in a different way (Remark~\ref{rem:dmu_method}).
Since the preference facet is carried by Layer~2, the
(MU)-enumeration is (MU0)--(MU4), and diminishing marginal
utility is then a structural theorem about \emph{any} rational
allocation.

\subsubsection*{Additional primitives}

In addition to the praxeological primitives $\Acts$, $\Avail$,
$\EndOf$, $\Use$, $\pref{a}{t}$, we introduce a partition of the
Things sort
into homogeneous goods and an \emph{allocation} relation
distinct from action.

Introduce a sort $\mathsf{Goods}$ (variables $g,h,\ldots$) and
two predicates
\begin{align*}
\UnitOf(x,g)\quad &\text{(``thing $x$ is a unit of good $g$''),}\\
\Allot(a,t,x,E)\quad &\text{(``actor $a$ at $t$ allots unit $x$ to end $E$'').}
\end{align*}

A ``good'' is a class of interchangeable units.  A \emph{unit}
here is Rothbard's \emph{technological unit} --- the smallest
quantity of the good that enters action
\citep[ch.~1, p.~28]{Rothbard2009}: a pound of rice, not a single
grain; the thirty logs a cabin takes, not one log.  Taken at that
grain the units of a good are genuinely interchangeable, and a
want calls for whole units of it.  The partition into goods is
itself subjective, determined by the actor's appraisal of
serviceability.  (Units are \emph{things}, not a separate
``means'' kind --- the means role is $\Use(\alpha,x)$ in the
core.)

Why a separate $\Allot$ predicate?
Allocation is the actor's \emph{valuation} act, distinct from
performance: Carl Menger's table of uses~\citep[ch.~III]{Menger1871},
a stock ranged across ends in order of urgency.  We can now say
exactly what that valuation is.  The schedule $\Allot$ sends each
committed unit to an end; the scale of values $\pref{a}{t}$ ranks
those ends.  The valuation at $(a,t)$ is their composite --- each
allotted unit valued by the rank of the end it secures in the end
menu $\EndAt(a,t)$, ordered by $\pref{a}{t}$:
\[
x \;\xrightarrow{\ \Allot\ }\; E \;\xrightarrow{\ \pref{a}{t}\ }\;
\text{rank of $E$ in } \EndAt(a,t).
\]
By \axref{MU0} the first arrow is single-valued, and with
technological units it is a bijection of the allotted units onto
the served set $\Served(a,t,g)$ (the footnote to \axref{MU0}), so
each served end is secured by exactly one unit.  The good's own
\emph{marginal utility} is the stock-relative bottom rung of this schedule
--- the marginal end (Definition~\ref{def:value}), carried by the marginal unit.

This valuation is a schedule, not a temporal sequence of action
tokens: the first unit of \good{water} to $\mathsf{Drink}$, the second to
$\mathsf{Cook}$, the third to $\mathsf{Wash}$ --- a stock allotted
across ends at one moment.  The action choice axiom \axref{C1} says
that at a given moment $t$ exactly one action is performed; it
places no constraint on how many units the actor has, at that same
moment, allotted across his ends.  The performed action realizes
one cell of the schedule; the rest stand pending.  So DMU is a
theorem about \emph{valuations} --- multi-end schedules under
\axref{MU0}--\axref{MU4} --- not about the singleton performance
governed by \axref{C1}.

\subsubsection*{Derived notions}
\begin{definition}[Serviceable end]\label{def:serviceable}
End $E$ is \emph{serviceable by good $g$ at $(a,t)$} iff some
available action with end $E$ uses some unit of $g$:\footnote{``Uses''
is $\Use(\alpha,x)$ in the sense of \emph{employment}, not
\emph{consumption}: it records that $\alpha$ \emph{employs} the unit
$x$, with no claim that $x$ is used up.  $\Use$ carries no time argument
(the static relation of Remarks~\ref{rem:means_via_use}
and~\ref{rem:avail_primitive}), so the action serving the actor's top
end, though it employs a unit of $g$, does \emph{not}, at this layer,
deplete his stock or retire any end from the menu.  Whether employment
\emph{consumes} a unit (a swallow of water) or \emph{leaves it intact}
(a durable tool --- Crusoe's boat) is the consumable-versus-durable
distinction of Mises and Rothbard, formalized by the $\Consumable$ and
$\Result$ predicates of future research; the resulting stock-dynamic
($n \to n-1$, a satisfied want dropping off) belongs to the transition
map, not to the momentary valuation here.  This is why serving the
\emph{top} end never collides with the marginal end's sitting at the
\emph{bottom} (Remark~\ref{rem:valuation_vs_performance}).}
\[
\Serviceable(a,t,g,E) :\Leftrightarrow
\exists\alpha\,\exists x\;
\bigl(\Avail(a,\alpha,t) \land \EndOf(\alpha,E) \land
\Use(\alpha,x) \land \UnitOf(x,g)\bigr).
\]
We write 
\[
\Serviceable(a,t,g) := \{E\in\mathsf{Ends} \mid \Serviceable(a,t,g,E)\}
\]
for the set of $g$-serviceable ends (the predicate name with one fewer
argument denotes the set).
\end{definition}
\begin{definition}[Served end]\label{def:served}
End $E$ is \emph{served by good $g$ at $(a,t)$} iff some unit of
$g$ has been allotted to $E$ at $(a,t)$:
\[
\Served(a,t,g,E) :\Leftrightarrow
\exists x\;\bigl(\UnitOf(x,g) \land \Allot(a,t,x,E)\bigr).
\]
Write $\Served(a,t,g) := \{E\in\mathsf{Ends} \mid \Served(a,t,g,E)\}$
for the set of ends $g$ serves.
\end{definition}
$\Serviceable$ ties feasibility to the menu of \emph{available}
uses; ``served'' ties it to the actor's \emph{allocation
schedule}, distinct from current performance.  At any single $t$
the allocation may serve many ends; the action $\Acts(a, \cdot, t)$
realizes one of these allotments while the rest remain pending
in the schedule.

\begin{definition}[Reduced served set]\label{def:reduced_served_set}
For a fixed unit $y$, the \emph{reduced served set} (the ``without
$y$'' served set) is the predicate
\[
\Served^{\setminus y}(a,t,g,F) :\Leftrightarrow
\exists x\;\bigl(\UnitOf(x,g) \land x \neq y \land
                  \Allot(a,t,x,F)\bigr),
\]
with set form $\Served^{\setminus y}(a,t,g) := \{F\in\mathsf{Ends} \mid
\Served^{\setminus y}(a,t,g,F)\}$.
This is the served set obtained by deleting $y$'s allotment
from the schedule; the superscript ``$\setminus y$'' is
set-difference notation, read ``with $y$ removed.''  In
Mises's ``$n$ vs.\ $n-1$ units''
language~\citep[Ch.~VII, \S1]{Mises1949}, $\Served$ is the
$n$-unit served set and $\Served^{\setminus y}$ is the
$n-1$-unit served set.
\end{definition}

\begin{definition}[Marginal end and marginal unit]\label{def:marginal_end}
Fix $a,t,g$.  The \emph{marginal end} is the $\pref{a}{t}$-least
served end,
\[
E \;:=\; \min\nolimits_{\pref{a}{t}} \Served(a,t,g),
\]
defined whenever $\Served(a,t,g) \neq \emptyset$.  Each unit being
a technological unit, the marginal end is served by a single
unit --- the unique $y$ with $\Allot(a,t,y,E)$ --- the
\emph{marginal unit}: Rothbard's ``one unit \ldots\ he must
consider giving up'' \citep[ch.~1]{Rothbard2009}, the one whose
withdrawal forfeits $E$.  After it is removed, the \emph{reduced
marginal end} is the $\pref{a}{t}$-least end still served,
\[
E^* \;:=\; \min\nolimits_{\pref{a}{t}} \Served^{\setminus y}(a,t,g).
\]
Each exists and is unique whenever its served set is non-empty
(Lemma~\ref{lem:served_choice_relevant}).\footnote{Finiteness is
essential: a non-empty \emph{totally ordered} set need not have a
least element --- e.g.\ the open interval $(0,1)$ --- whereas a
non-empty \emph{finite} subset of a total order always does.}
\end{definition}

\begin{definition}[Marginal utility of a good]\label{def:value}
Fix $a,t,g$ with $\Served(a,t,g) \neq \emptyset$.  The \emph{marginal
utility of good $g$ to $a$ at $t$} is the end served by its marginal
unit --- the marginal end (Definition~\ref{def:marginal_end}):
\[
\MU(a,t,g) \;:=\; \min\nolimits_{\pref{a}{t}} \Served(a,t,g),
\]
the lowest-ranked end the current stock serves --- the bottom rung
of $g$'s ladder of served ends.  Goods are compared by their
marginal utilities on the one scale $\pref{a}{t}$: $g$ has higher
marginal utility than $g'$ iff $\MU(a,t,g) \pref{a}{t} \MU(a,t,g')$.
It is undefined when $g$ serves no end --- a free good, carrying no
marginal utility.\footnote{This is Mises's own term: the unit of a
homogeneous supply that a man would forgo first is the \emph{marginal}
employment, ``and the utility derived from it marginal utility''
\citep[Ch.~VII, \S1]{Mises1949}.  In strict Mengerian usage a good's
\emph{value} is this same marginal utility; we keep ``value'' for the
looser pre-marginalist sense (the \emph{importance} of the kind, Smith's
\emph{value in use}; Remark~\ref{rem:value_paradox}) and reserve
$\MU(a,t,g)$ for the formal object, so the two never share a word.}
\end{definition}

\begin{remark}[The value paradox]\label{rem:value_paradox}
Pinning value to \emph{units} this way dissolves the classical value
paradox, which runs together two senses of ``value'' the apparatus
keeps apart.  One is the \emph{importance of the kind} --- Smith's
\emph{value in use} --- and it shows in how high a good's
\emph{serviceable} ends reach: \good{bread} can serve the most urgent
of wants, staying alive, where \good{platinum} serves only ornamental
ones, so \good{bread} reaches far higher in $\Serviceable(a,t,g)$.  The
other is what a single \emph{unit} is worth, the good's \emph{marginal
utility} --- its marginal served end, the bottom of its ladder,
$\MU(a,t,g) = \min\nolimits_{\pref{a}{t}}\Served(a,t,g)$
(Definition~\ref{def:value}).  These are two different ends --- the top
of $\Serviceable(a,t,g)$ and the bottom of $\Served(a,t,g)$ --- and the
paradox is only the surprise that they come apart.

Theorem~\ref{thm:DMU} is what drives them apart: each further unit
pushes the marginal end lower, so a good abundant relative to its uses
has a low marginal utility \emph{from abundance, not from unimportance
of the kind}.  \good{Bread}, abundant, has its margin driven down to a
trivial want, below the ends to which scarce \good{platinum} is
confined; so the kind that reaches the highest want has the
\emph{least} marginal utility --- \good{bread} the more important kind,
an ounce of \good{platinum} the dearer unit.  Only this marginal end is
formalized: by type discipline $\pref{a}{t}$ ranks \emph{ends} and
$\Allot$ assigns \emph{units}, a good enters only as a classification
of units (via $\UnitOf$), and no cardinal total of the wants a kind
could meet is available, so the importance of \good{bread}-as-a-kind
names no object in $\Lprx$ --- only $\MU(a,t,g)$ does.  Cross-good
comparison is well-formed exactly where Rothbard places it --- between
the goods' marginal utilities on the single scale of
\axref{O2}--\axref{O4}: ``whether, given the present available
stock\dots, a `loaf of bread' is more or less valuable\dots\ than `an
ounce of platinum'\,''~\citep[ch.~1, pp.~21--22]{Rothbard2009}.
\end{remark}

\subsubsection*{Additional axioms (the hidden premises)}

To derive diminishing marginal utility, we require five axioms
beyond the praxeological core.  These are \emph{not} part of
$\Tprx$; they encode the implicit assumptions in the standard
argument.  They speak of the actor's allocation \emph{schedule}
at $(a,t)$ --- the set of pairs $\{(x,E):\Allot(a,t,x,E)\}$, the
graph of $\Allot$ restricted to $(a,t)$ --- whose entries are its
\emph{cells} (single allotments); the schedule need not be total.

\begin{enumerate}[label=(MU\arabic*), leftmargin=3em, start=0]
\item\label{ax:MU0} \textbf{Functionality of the allocation schedule.}
For each $(a,t)$, the actor allots each unit $x$ of any good to
at most one end:
\[
\forall a\,t\,x\,E\,F\;\bigl(
  \Allot(a,t,x,E) \land \Allot(a,t,x,F) \,\Rightarrow\, E = F\bigr).
\]

\emph{Each unit is allotted to at most one end: the schedule is a
partial function from units to ends, though several units may
share one end (several cells with the same end).}\footnote{That is
\axref{MU0} on its own --- single-valued but not injective.  The
technological unit adds the injectivity (each end served by a
single unit), so at fixed $(a,t)$ the schedule is a bijection from
the allotted units onto $\Served(a,t,g)$, with the marginal unit
(Definition~\ref{def:marginal_end}) the unique unit allotted to
the marginal end; by \axref{MU3}, $\Served(a,t,g)$ is a proper
subset of the serviceable ends.}

\item\label{ax:MU1} \textbf{Feasibility of the allocation schedule.}
A unit is allotted only to an end its good can serve:
\[
\forall a\,t\,x\,g\,E\;\bigl(
  \UnitOf(x,g) \land \Allot(a,t,x,E) \,\Rightarrow\,
  \Serviceable(a,t,g,E)\bigr).
\]

\emph{Only feasible cells are populated: a unit of good $g$ is
allotted to end $E$ only when $g$ is serviceable for $E$
(Definition~\ref{def:serviceable}).  This is the bridge tying the
allocation schedule to the availability structure --- $\Avail$,
$\EndOf$, $\Use$ --- so that a cell is not a free-floating pairing
but one underwritten by an actually serviceable configuration.}

\item\label{ax:MU2} \textbf{Homogeneity of units (menu-level).}\footnote{There
is a longstanding controversy in Austrian circles regarding the
status of \emph{homogeneity} as we use it here.  Austrians
maintain that indifference is incompatible with human action ---
action absolutely requires preference: one option must be
preferred and the others rejected.  Yet the school does not
reject supply-and-demand curves; \citet{Mises1949} does not
employ them, but Rothbard and many others do.
\citet{Nozick1977} argues that Austrians cannot have it both
ways: the construction of a supply curve treats each part of the
curve as homogeneous, and homogeneity at the unit level seems
indistinguishable from indifference between units.
\citet{Block1980} offers a rejoinder: \emph{before} a choice is
made, one may be indifferent between, say, two physically
identical cans of Coca Cola; once one of them is selected the
chooser is no longer indifferent --- he chose \emph{that} one,
did he not?  The controversy has produced an extensive secondary
literature
\citep{Barnett2003,Block1999,Caplan1999,Hoppe2005,Machaj2007,ONeill2010,Wysocki2016}
and we do not adjudicate it here.  For our purposes \axref{MU2} is a
\emph{menu-level} statement: it says that if some unit of $g$
can serve end $E$ via an available action, then so can any
other unit of $g$; it does \emph{not} say the actor is
indifferent between two specific units once one has been
selected for use.  Still, \axref{MU2} is not a mere bookkeeping
convention: it takes a substantive position on which
unit-aggregates count as the same good, and the indifference
controversy just cited is precisely a dispute over when that
position is tenable.
}
\[
\begin{split}
\forall a\,t\,g\,E\,x\,y\;\Bigl(
&\UnitOf(x,g) \land \UnitOf(y,g) \land
\exists\alpha\bigl(\Avail(a,\alpha,t) \land
\EndOf(\alpha,E) \land \Use(\alpha,x)\bigr) \\
&\Rightarrow \exists\beta\bigl(\Avail(a,\beta,t)
\land \EndOf(\beta,E) \land \Use(\beta,y)\bigr)
\Bigr).
\end{split}
\]

\emph{If some unit of $g$ can be used (via some available
action) to serve $E$ at $(a,t)$, then any other unit of $g$
can also serve $E$.  Units of a homogeneous good are
interchangeable in the choice menu.}\footnote{Rothbard makes
the parallel point verbally for the cross-good setting: ``in
the act of choosing between giving up or adding units of either
$X$ or $Y$, the actor must have, in effect, placed both goods
on a single, unitary value scale\ldots\ the very fact of action
in choosing between more than one good implies that the units
of these goods must have been ranked for comparison on one
value scale of the actor''
\citep[ch.~1, pp.~31--32]{Rothbard2009}.  In our setup, this is
exactly the content of \axref{O2}--\axref{O4}: the strict total
order on $\EndAt(a,t)$ ranges over ends regardless of which
good serves them, so cross-good comparison is automatic.
\axref{MU2} governs the within-good homogeneity; the cross-good
unified scale is supplied by the Layer~2 preference order.}

\item\label{ax:MU3} \textbf{Scarcity (cardinal-free).}
Some good is genuinely scarce: at some $(a,t)$ a good has a
\emph{serviceable} end its stock leaves unserved:
\[
\exists a\,t\,g\,F\;\bigl(\Serviceable(a,t,g,F) \;\land\;
\neg\Served(a,t,g,F)\bigr).
\]

\emph{The (MU)-level counterpart of the base scarcity axiom
\axref{S1}: some good cannot serve every end within its
reach.}\footnote{Quantifying over the ends the good \emph{can}
serve --- not over every wanted end --- is what makes this
genuine scarcity, not merely a good's inability to meet some end.
And since only \emph{some} good need be scarce, others may not be:
a good that already serves every end it could (like \good{fish} in the
Crusoe box below) is simply not scarce --- Menger's line between
economic and non-economic goods.}

\item\label{ax:MU4} \textbf{Preference-respecting allocation (top-segment).}
If $E$ is served by $g$ at $(a,t)$, then every strictly
higher-ranked choice-relevant end \emph{that $g$ can serve} is
also served by $g$:
\[
\begin{split}
\Served(a,t,g,E) \;\Rightarrow\;
\forall F\;\Bigl(&\EndAt(a,t,F) \land
\Serviceable(a,t,g,F) \land
F \pref{a}{t} E \\
&\Rightarrow\; \Served(a,t,g,F)\Bigr).
\end{split}
\]
\emph{No unit is allocated to a lower-ranked end while a
higher-ranked end that the good could serve remains unserved,
given homogeneity and feasibility of reallocation within the
choice menu.}\footnote{The qualifier ``that the good could
serve'' asks only about ends the good can actually serve.  \good{Water}
put to washing is not faulted for leaving hunger unmet --- \good{water}
cannot feed anyone --- even though hunger outranks washing.  Drop
the qualifier and the axiom would wrongly rule out any schedule
in which two goods take turns down one value scale, such as the
two-good Crusoe box below.}

\smallskip\noindent 
Equivalently, in set terms the served set is a \emph{top-segment} of
the $g$-serviceable ends --- closed upward within them:
\[
\Served(a,t,g) \;=\; {\uparrow}\Served(a,t,g) \,\cap\, \Serviceable(a,t,g).
\]
(Here ${\uparrow}S := \{F\in\mathsf{Ends} \mid \exists E\in S,\ F\pref{a}{t}E \lor F=E\}$
is the top-segment generated by $S$.\footnote{In words,
${\uparrow}S$ is $S$ together with every end ranked at least as
high as some member of $S$; since $\pref{a}{t}$ is a total order
this is every end at or above the lowest-ranked member of $S$.
For instance, with $\mathsf{Drink} \succ \mathsf{Eat} \succ
\mathsf{Cook} \succ \mathsf{Store} \succ \mathsf{Wash} \succ
\mathsf{Garden}$, one has ${\uparrow}\{\mathsf{Cook}\} =
\{\mathsf{Drink}, \mathsf{Eat}, \mathsf{Cook}\}$ and
${\uparrow}\{\mathsf{Cook}, \mathsf{Wash}\} = \{\mathsf{Drink},
\mathsf{Eat}, \mathsf{Cook}, \mathsf{Store}, \mathsf{Wash}\}$ ---
everything down to $\mathsf{Wash}$, with only $\mathsf{Garden}$
excluded.})
\end{enumerate}

\begin{crusoe}[a two-good allotment schedule]
Suppose at some $(a,t)$ Crusoe holds three units
$w_1, w_2, w_3$ of the good $\good{water}$ and two units
$f_1, f_2$ of the good $\good{fish}$, with six
choice-relevant ends ranked
\[
\mathsf{Drink} \;\succ\; \mathsf{Eat} \;\succ\; \mathsf{Cook}
\;\succ\; \mathsf{Store} \;\succ\; \mathsf{Wash} \;\succ\;
\mathsf{Garden}.
\]
\good{Water} can serve $\mathsf{Drink}$, $\mathsf{Cook}$,
$\mathsf{Wash}$, $\mathsf{Garden}$; \good{fish} can serve
$\mathsf{Eat}$, $\mathsf{Store}$.  His allotment schedule:

\medskip
\begin{center}
\renewcommand{\arraystretch}{1.2}
\begin{tabular}{clll}
\toprule
\textbf{Rank} & \textbf{End} & \textbf{Serviceable by} &
\textbf{Cells (unit $\to$ end)} \\
\midrule
1 & $\mathsf{Drink}$  & \good{water} & $w_1 \to \mathsf{Drink}$ \\
2 & $\mathsf{Eat}$    & \good{fish}  & $f_1 \to \mathsf{Eat}$ \\
3 & $\mathsf{Cook}$   & \good{water} & $w_2 \to \mathsf{Cook}$ \\
4 & $\mathsf{Store}$  & \good{fish}  & $f_2 \to \mathsf{Store}$ \\
5 & $\mathsf{Wash}$   & \good{water} & $w_3 \to \mathsf{Wash}$ \\
6 & $\mathsf{Garden}$ & \good{water} & --- (unserved) \\
\bottomrule
\end{tabular}
\end{center}
\medskip

The \emph{schedule} is the set of five cells in the last
column.  Each unit occupies exactly one cell \axref{MU0} and
every cell pairs a unit with an end its good can serve
\axref{MU1}.  Because each unit is a technological unit, one
unit serves one want by construction: every served end takes
exactly one unit, every marginal cell is uniquely served, and the
hypothesis of Theorem~\ref{thm:DMU} holds automatically.  A want
needing a larger quantity --- Mises's $30$-log cabin --- is just a
larger technological unit, not a multi-unit cell.
\axref{MU3} holds: $\mathsf{Garden}$ is \good{water}-serviceable but
unserved, so \good{water} is genuinely scarce --- one scarce good is all the
existential \axref{MU3} asks for.  \good{Fish}, by contrast, serves both its
serviceable ends ($\mathsf{Eat}$, $\mathsf{Store}$) and is here
non-scarce, which the axiom permits.
\axref{MU4} holds good by good: \good{water}'s served set
$\{\mathsf{Drink}, \mathsf{Cook}, \mathsf{Wash}\}$ is a
top-segment of the \good{water}-serviceable ends
$\{\mathsf{Drink}, \mathsf{Cook}, \mathsf{Wash},
\mathsf{Garden}\}$, and \good{fish}'s served set
$\{\mathsf{Eat}, \mathsf{Store}\}$ exhausts the
\good{fish}-serviceable ends.  The goods \emph{alternate} down the
single value scale --- \good{water}, \good{fish}, \good{water}, \good{fish}, \good{water} ---
which is exactly what the $\Serviceable$ qualifier in
\axref{MU4} admits: \good{water} serving $\mathsf{Cook}$ is not
faulted for $\mathsf{Eat} \succ \mathsf{Cook}$ going unserved
by \good{water}, since \good{water} cannot serve $\mathsf{Eat}$.  Each good
has its own marginal end on the one unified scale
($\mathsf{Wash}$ for \good{water}, $\mathsf{Store}$ for \good{fish}) ---
the Mengerian picture of one value scale allocating several
distinct supplies at once.
\end{crusoe}

\begin{lemma}[Disjointness]\label{lem:disjointness}
If serviceability is \emph{disjoint} at $(a,t)$ ---
$\Serviceable(a,t,g)\cap\Serviceable(a,t,g')=\emptyset$ for $g\neq g'$,
as in the box above --- then the served sets are pairwise disjoint, and
each good's served set, marginal end, and \axref{MU4} top-segment depend
on $g$ alone.  Diminishing marginal utility then applies to each good
\emph{independently}.
\end{lemma}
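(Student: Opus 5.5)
The plan is to derive disjointness of the served sets directly from feasibility \axref{MU1}, and then observe that every object attached to a good is defined by formulas mentioning that good alone.

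\emph{Step 1 (served sets are disjoint).} Fix $(a,t)$ and goods $g\neq g'$, and suppose $E\in\Served(a,t,g)\cap\Served(a,t,g')$. By Definition~\ref{def:served} there are units $x,x'$ with $\UnitOf(x,g)$, $\Allot(a,t,x,E)$ and $\UnitOf(x',g')$, $\Allot(a,t,x',E)$. Applying \axref{MU1} to each allotment gives $\Serviceable(a,t,g,E)$ and $\Serviceable(a,t,g',E)$. Hence $E\in\Serviceable(a,t,g)\cap\Serviceable(a,t,g')$, contradicting the disjointness hypothesis. Thus $\Served(a,t,g)\subseteq\Serviceable(a,t,g)$ for every $g$, and the served sets inherit pairwise disjointness. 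Only \axref{MU1} is used in this step; \axref{MU0} is not needed.

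\emph{Step 2 (dependence on $g$ alone).} Next I read off the defining formulas. $\Served(a,t,g)$ quantifies only over units with $\UnitOf(x,g)$. The marginal end and the reduced marginal end (Definition~\ref{def:marginal_end}) are $\pref{a}{t}$-minima of $\Served(a,t,g)$ and $\Served^{\setminus y}(a,t,g)$. \axref{MU4}, in its set form $\Served(a,t,g)={\uparrow}\Served(a,t,g)\cap\Serviceable(a,t,g)$, is a condition on the pair $(\Served(a,t,g),\Serviceable(a,t,g))$ together with the single scale $\pref{a}{t}$. None of these mentions $g'$.

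The substantive point is that, under disjointness, changing $g'$'s allotments cannot touch any cell with an end in $\Serviceable(a,t,g)$. By \axref{MU1}, such a change only moves $g'$-cells among ends in $\Serviceable(a,t,g')$, which is disjoint from $\Serviceable(a,t,g)$. So $\Served(a,t,g)$, its minimum, and the validity of the \axref{MU4} instance for $g$ are invariant under any reallocation of other goods' units that keeps \axref{MU0}--\axref{MU1}.

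\emph{Step 3 (independence of DMU).} Finally, the hypotheses of Theorem~\ref{thm:DMU} for a given $g$ are instances of \axref{MU0}--\axref{MU4} plus finiteness of $\EndAt(a,t)$. Each instance is either restricted to $g$ or is a conjunction over goods that splits good by good. The theorem can therefore be applied to each good separately.

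The main obstacle is making "depends on $g$ alone" precise, since the shared scale $\pref{a}{t}$ is common to all goods. I would phrase it as the invariance statement of Step 2 with $\pref{a}{t}$ and the availability structure held fixed. I would also note that \axref{MU3} is existential and need not hold for every $g$. So "applies independently" should be stated as: for each good satisfying the relevant per-good hypotheses.
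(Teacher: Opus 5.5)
Your proposal is correct and follows essentially the paper's proof. \axref{MU1} gives $\Served(a,t,g)\subseteq\Serviceable(a,t,g)$, so disjoint serviceable sets force disjoint served sets. The served set, its $\pref{a}{t}$-minimum, and the \axref{MU4} identity are then read off as formulas that mention only $g$. Your invariance reading of ``depends on $g$ alone'' and your note that \axref{MU3} is existential sharpen what the paper says in one line; they do not take a different route.
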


\begin{proof}
$\Served(a,t,g)\subseteq\Serviceable(a,t,g)$ by \axref{MU1}, so
disjoint serviceable sets force disjoint served sets.  Each good's
$\Served(a,t,g)$, its $\pref{a}{t}$-minimum, and the \axref{MU4}
identity $\Served(a,t,g)={\uparrow}\Served(a,t,g)\cap\Serviceable(a,t,g)$
mention $g$ and no other good; hence the goods neither share ends nor
constrain one another.
\end{proof}

\noindent\emph{Beyond disjointness: imputation.}
Disjointness is what makes the per-good apparatus self-contained.  When
it fails --- one end $F$ serviceable by two goods,
$F\in\Serviceable(a,t,g)\cap\Serviceable(a,t,g')$ --- the goods become
\emph{substitutes} for $F$, and the per-good picture breaks: serving
$F$ with $g'$ alone leaves $g$'s served set with a gap at $F$, so
\axref{MU4} for $g$ fails even though the allocation is rational.  The
deeper issue is valuation: a good has no value in itself, only what is
\emph{imputed} from the ends it serves.  Under disjointness this is
immediate --- a good's marginal utility is its own marginal end.  Under
substitution it is not: losing a unit no longer forfeits the shared end
(the other good serves it), so the unit's worth is the loss suffered
after re-allocation --- Wieser's \emph{loss principle}.  Pinning down
what each unit is really worth once goods overlap like this is the full
\emph{imputation} problem; the case treated here, where no two goods
serve the same end, is only its simplest.  The general case is the
imputation theorem of future research (``individual imputation requires
non-specificity''), in the tradition of Menger, Wieser, and
B\"ohm-Bawerk \citep{Menger1871,Wieser1889,BohmBawerk1889}.

\smallskip\noindent\emph{The value paradox, already in the schedule.}
No third good is needed --- read \good{fish} as the ``\good{platinum}'' of
Remark~\ref{rem:value_paradox}.  \good{Water} serves the vital
$\mathsf{Drink}$, yet, being the more abundant good, its three units
reach down to $\mathsf{Wash}$; \good{fish}'s two leave its marginal end at
$\mathsf{Store} \succ \mathsf{Wash}$.  A unit of \good{fish} therefore
outvalues a unit of \good{water} --- though \good{water} serves the more vital end ---
just as an ounce of \good{platinum} outvalues a loaf of \good{bread}.  The
dissolution is the same: each good's marginal unit is valued by the
lowest end its \emph{own} stock reaches, so \good{water}'s marginal unit ranks
low \emph{from abundance, not unimportance of the kind} --- though
\good{water} reaches higher than \good{fish} in $\Serviceable$, its
abundance carries its margin below \good{fish}'s.

A first consequence of the (MU)-axioms makes the marginal-end
apparatus well-posed.

\begin{lemma}[Served ends are choice-relevant]
\label{lem:served_choice_relevant}
For all $a,t,g,E$, 
\[
\Served(a,t,g,E) \Rightarrow \EndAt(a,t,E),
\]
and likewise for the reduced served set (since
$\Served^{\setminus y}(a,t,g,E) \Rightarrow \Served(a,t,g,E)$).
Hence the (reduced) served set $\Served(a,t,g)$ (resp.\
$\Served^{\setminus y}(a,t,g)$) is a subset of the end menu
$\EndAt(a,t)$, 
\[
  \Served(a,t,g) \subseteq \EndAt(a,t), \qquad (\text{resp. } \Served^{\setminus y}(a,t,g) \subseteq \EndAt(a,t)),
\]
and any nonempty served set has a \emph{unique}
$\pref{a}{t}$-minimal element (i.e. the marginal end).
\end{lemma}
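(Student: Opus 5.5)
The argument is a direct unfolding of definitions, chaining \axref{MU1} into Definition~\ref{def:serviceable} and then into Definition~\ref{def:endat}.

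\textbf{Step 1: served ends are choice-relevant.} Assume $\Served(a,t,g,E)$. By Definition~\ref{def:served} there is a unit $x$ with $\UnitOf(x,g)$ and $\Allot(a,t,x,E)$. Feasibility \axref{MU1}, instantiated at $a,t,x,g,E$, then yields $\Serviceable(a,t,g,E)$. Unfolding Definition~\ref{def:serviceable} gives an action $\alpha$ and a unit $x'$ with $\Avail(a,\alpha,t)$, $\EndOf(\alpha,E)$, $\Use(\alpha,x')$ and $\UnitOf(x',g)$. Discarding the last two conjuncts, the pair $\Avail(a,\alpha,t)\land\EndOf(\alpha,E)$ is exactly the witness Definition~\ref{def:endat} requires. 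Hence $\EndAt(a,t,E)$.

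\textbf{Step 2: the reduced served set.} Note first that $\Served^{\setminus y}(a,t,g,E)\Rightarrow\Served(a,t,g,E)$. This holds because the witness $x$ in Definition~\ref{def:reduced_served_set} satisfies $\UnitOf(x,g)\land\Allot(a,t,x,E)$, and we simply drop the clause $x\neq y$. Composing this implication with Step~1 gives the claim for $\Served^{\setminus y}$. Both set inclusions are then just these implications read pointwise.

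\textbf{Step 3: existence and uniqueness of the minimum.} Let $S$ be either served set, assumed non-empty. By Steps~1--2, $S\subseteq\EndAt(a,t)$. Under the finiteness hypothesis of Remark~\ref{rem:finiteness}, $\EndAt(a,t)$ is finite, so $S$ is finite. By \axref{O2}--\axref{O4}, $\pref{a}{t}$ restricted to $\EndAt(a,t)$, and hence to $S$, is a strict total order.

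For existence, I would induct on $|S|$. When $|S|=1$ the single element is minimal. For the inductive step, remove any element $E_0$, take the minimum $m$ of the remainder, and compare $E_0$ with $m$ using \axref{O2}. The smaller of the two is the minimum of $S$; transitivity \axref{O3} ensures it lies below every other element of $S$.

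For uniqueness, suppose $m\neq m'$ are both minimal. Then comparability \axref{O2} forces $m\pref{a}{t}m'$ or $m'\pref{a}{t}m$, and either alternative contradicts the minimality of the other element. If minimality is read as ``strictly below every other element,'' asymmetry \axref{O4} gives the contradiction directly.

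\textbf{Main obstacle.} The only delicate point is bookkeeping rather than mathematics: finiteness is not a first-order axiom, so it must be invoked explicitly as the model-class hypothesis of Remark~\ref{rem:finiteness}. Likewise, comparability \axref{O2} is available only for ends on the menu. This is exactly why Step~1 is needed before any minimum can be taken.
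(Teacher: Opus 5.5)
Your proof is correct and takes essentially the paper's route: unfold $\Served$, apply feasibility (MU1) to obtain $\Serviceable(a,t,g,E)$, read off the available action that witnesses $\EndAt(a,t,E)$, get the reduced case by dropping the clause $x \neq y$, and then combine finiteness of the end menu with (O2)--(O4) to obtain a unique least element. The only difference is that you prove the finite-minimum fact explicitly (induction for existence, comparability or asymmetry for uniqueness), whereas the paper simply cites it as a standard property of nonempty finite subsets of a strict total order.
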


\begin{proof}
If $\Served(a,t,g,E)$ then some $x$ satisfies
$\UnitOf(x,g) \land \Allot(a,t,x,E)$
(Definition~\ref{def:served}).  By \axref{MU1} feasibility,
$\Serviceable(a,t,g,E)$ holds. Definition~\ref{def:serviceable} yields an
available action $\alpha$ with $\EndOf(\alpha,E)$, i.e.\
$\EndAt(a,t,E)$.  The reduced case follows by dropping the
$x \neq y$ restriction.  The inclusion in $\EndAt(a,t)$ is then
immediate.  Comparability \axref{O2}, transitivity \axref{O3},
and asymmetry \axref{O4} make $\pref{a}{t}$ a strict total order;
finiteness of $\EndAt(a,t)$ (Remark~\ref{rem:finiteness}) makes
the served set a nonempty finite subset of it.  A nonempty finite
subset of a strict total order has a unique minimal element ---
the order axioms supply comparability and uniqueness, finiteness
supplies existence.
\end{proof}

Before stating the theorem, recall Mises's definition of the
marginal employment of a homogeneous supply:
\begin{quote}
``We call that employment of a unit of a homogeneous supply
which a man makes if his supply is $n$ units, but would not
make if, other things being equal, his supply were only
$n - 1$ units, the least urgent employment or the marginal
employment, and the utility derived from it marginal utility''
\citep[Ch.~VII, \S1]{Mises1949}.\footnote{Mises's ``unit of a
homogeneous supply'' is the \emph{technological unit} introduced
earlier at the definition of a good --- the smallest quantity
that enters action \citep[ch.~1]{Rothbard2009}.  The grain is
essential: only at the technological unit does the step from $n$
to $n-1$ put exactly one employment at stake.  Below it the
comparison misfires --- removing a single grain of rice forgoes
no employment at all, while removing one log of a thirty-log
cabin forgoes the whole cabin (each log then ``worth'' the
entire cabin, the rising-marginal-utility illusion).  Both are
artefacts of counting beneath the unit that enters action.}
\end{quote}
The diminishing-marginal-utility theorem says this marginal
employment is for a less urgent want than any of those served
by the $n-1$ unit supply:
\begin{quote}
``If the supply available increases from $n - 1$ units to $n$
units, the increment can be employed only for the removal of
a want which is less urgent or less painful than the least
urgent or least painful among all those wants which could be
removed by means of the supply $n - 1$''
\citep[Ch.~VII, \S1]{Mises1949}.
\end{quote}
Rothbard restates the same point in the supply-loss
direction, in language closer to standard economic
exposition:
\begin{quote}
``The actor gives up the lowest-ranking want that the original
stock\dots was capable of satisfying.  This one unit that he
must consider giving up is called the marginal unit\dots\ the
least important end fulfilled by the stock is known as the
satisfaction provided by the marginal unit, or the utility of
the marginal unit''
\citep[ch.~1, pp.~26--27]{Rothbard2009}.
\end{quote}
The formal counterpart of ``$n$ units versus $n-1$ units'' is
the contrast between the actor's allotment schedule with the
new unit's cell and the same schedule without it.  In the
theorem below, $y$ is the additional ($n$-th) unit, $E$ its
allotment cell (the \emph{least urgent} or marginal employment
of the $n$-unit supply), and $E^*$ the least-preferred end
served by the $n-1$-unit supply --- not assumed but
\emph{derived} in the proof.  The hypotheses
(existence of the marginal unit $y$; non-emptiness of the
reduced supply) and their Misesian correspondences are stated
with the theorem.

\begin{maintheorem}[Diminishing marginal utility]\label{thm:DMU}
Fix $a,t,g$ and suppose the served set is non-empty, so the
\emph{marginal unit} $y$ exists
(Definition~\ref{def:marginal_end}).\footnote{The satiation principle
underlying the law of diminishing marginal utility is due to
Carl \citet[ch.~III]{Menger1871}, who introduced the
marginal-utility framework contemporaneously with Jevons and
Walras.  The Hilbert-style formulation makes Mises's verbal
hypotheses explicit: his ``least urgent employment'' is the
marginal end $E$ (the $\pref{a}{t}$-least served end); his
``employment which a man would not make if his supply were only
$n-1$ units'' is the marginal unit $y$, the single unit whose
withdrawal drops the supply to $n-1$; and his
implicit $n \geq 2$ is the non-emptiness of
$\Served^{\setminus y}$, under which the reduced marginal end
$E^*$ is not assumed but derived.}  Then
\[
\Served(a,t,g) = \Served^{\setminus y}(a,t,g) \cup \{E\};
\]
and if moreover $\Served^{\setminus y}(a,t,g) \neq \emptyset$, write
$\MU^{\setminus y}(a,t,g) := \min\nolimits_{\pref{a}{t}}
\Served^{\setminus y}(a,t,g) = E^*$ for the good's marginal utility one
unit short.  Then the marginal utility \emph{falls when the $n$-th unit
is added}: at $n$ units it is $E = \MU(a,t,g)$, at $n-1$ the strictly
more urgent $E^*$, so
\[
\MU^{\setminus y}(a,t,g) \;\pref{a}{t}\; \MU(a,t,g)
\qquad(\text{equivalently } E^* \pref{a}{t} E).
\]
\end{maintheorem}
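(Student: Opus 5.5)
The plan is to prove the two claims in order, both by elementary set reasoning on the schedule at fixed $(a,t,g)$. The tools are \axref{MU0}, \axref{MU1}, the technological-unit injectivity recorded in the footnote to \axref{MU0}, Lemma~\ref{lem:served_choice_relevant}, and the order axioms \axref{O2}--\axref{O4}.

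\emph{Step 1 (the decomposition).} For the inclusion $\supseteq$, I would note two things. First, $\Served^{\setminus y}(a,t,g,F)\Rightarrow\Served(a,t,g,F)$, since we simply drop the conjunct $x\neq y$. Second, $E$ is served by definition, because $y$ is a unit of $g$ with $\Allot(a,t,y,E)$. For $\subseteq$, let $F\in\Served(a,t,g)$ be witnessed by a unit $x$ of $g$ with $\Allot(a,t,x,F)$. Split on whether $x=y$. If $x\neq y$, then $x$ witnesses $F\in\Served^{\setminus y}(a,t,g)$. If $x=y$, then $\Allot(a,t,y,F)$ and $\Allot(a,t,y,E)$ together give $F=E$ by \axref{MU0}.

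\emph{Step 2 (the key lemma: $E\notin\Served^{\setminus y}(a,t,g)$).} This is where the marginal unit's defining property enters, and I expect it to be the main obstacle. The point is that the inequality is strict only because no second unit also serves $E$. \axref{MU0} alone gives single-valuedness and not injectivity, so I must invoke the technological-unit convention as stated at Definition~\ref{def:marginal_end}: $y$ is \emph{the unique} $y$ with $\Allot(a,t,y,E)$. Any $x\neq y$ with $\UnitOf(x,g)\land\Allot(a,t,x,E)$ would contradict that uniqueness, so $E\notin\Served^{\setminus y}(a,t,g)$. In the write-up I would state this dependence explicitly as a hypothesis ("$E$ is uniquely served"), in keeping with the paper's practice of surfacing hidden premises. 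The Crusoe box already notes that the hypothesis of Theorem~\ref{thm:DMU} holds automatically there.

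\emph{Step 3 (the comparison).} Assume $\Served^{\setminus y}(a,t,g)\neq\emptyset$. By Lemma~\ref{lem:served_choice_relevant}, $E^*=\min_{\pref{a}{t}}\Served^{\setminus y}(a,t,g)$ exists and is unique, and both sets lie inside the finite end menu $\EndAt(a,t)$, on which $\pref{a}{t}$ is a strict total order. By Step~1, $E^*\in\Served(a,t,g)$. By Step~2, $E^*\neq E$. Both ends are choice-relevant, so \axref{O2} gives $E^*\pref{a}{t}E$ or $E\pref{a}{t}E^*$. Minimality of $E$ in $\Served(a,t,g)$ rules out $E^*\pref{a}{t}E$ being false in the other direction: $E\pref{a}{t}E^*$ together with $E$ being the least element would mean $E^*$ lies below the minimum, contradicting the choice of $E$. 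Hence $E^*\pref{a}{t}E$, and \axref{O4} makes the relation strict and exclusive. This is $\MU^{\setminus y}(a,t,g)\pref{a}{t}\MU(a,t,g)$ by Definition~\ref{def:value}.

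Notably, \axref{MU2}--\axref{MU4} are not needed for this local comparison. Their role is instead to make the served set a top-segment, so that $E$ is genuinely the \emph{least urgent} serviceable employment in Mises's sense. I would remark on this after the proof rather than use them.
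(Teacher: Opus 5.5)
Your proof is correct and follows the paper's own argument. \axref{MU0} gives the decomposition, uniqueness of $y$ on $E$ (which the paper's appendix likewise sets out as an explicit working clause) gives $E\notin\Served^{\setminus y}(a,t,g)$, and minimality of $E$ in $\Served(a,t,g)$ then places $E^*$ strictly above $E$. The sentence in Step~3 beginning ``Minimality of $E$\ldots'' is garbled, but what it intends is sound: if $E\pref{a}{t}E^*$, then $E^*\in\Served(a,t,g)$ would lie below that set's minimum.
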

This is the law in Mises's own form: each added unit is put to a want
less urgent than any the previous units served, so the marginal
utility of the good falls as its stock grows.
\begin{proof}[Proof of Theorem~\ref{thm:DMU}]
At the level of sets it is elementary.  By \axref{MU0}
functionality $y$ serves only $E$, and as the marginal unit $y$
is the only unit serving $E$; so removing $y$ deletes exactly
$E$ from the served set,
\[
\Served(a,t,g) = \Served^{\setminus y}(a,t,g) \cup \{E\},
\]
the first claim.  By Lemma~\ref{lem:served_choice_relevant} both
$\Served(a,t,g)$ and $\Served^{\setminus y}(a,t,g)$ are finite
subsets of $\EndAt(a,t)$, totally ordered by $\pref{a}{t}$.  So
when $\Served^{\setminus y}(a,t,g)$ is non-empty its minimum
$E^*$ exists; and since $\Served^{\setminus y}(a,t,g) =
\Served(a,t,g)\setminus\{E\}$ drops the $\pref{a}{t}$-least
element $E$ of $\Served(a,t,g)$, its minimum is strictly larger: $E^* \pref{a}{t} E$.
\end{proof}
Appendix~\ref{app:dmu_proof} spells out the predicate-level steps
the type-checker performs; \axref{MU4} enters only in the
corollary below, and Remark~\ref{rem:dmu_method} accounts for
each axiom's role.

\begin{corollary}[Structure preservation under marginal removal]
\label{cor:dmu_structure}
Under the hypotheses of Theorem~\ref{thm:DMU} and 
assuming \axref{MU4} holds\footnote{A premise Theorem~\ref{thm:DMU} itself does not assume.} 
for $\Served(a,t,g)$ --- the
``$y$-removed'' allotment $\Served^{\setminus y}(a,t,g)$ is again a top-segment of the
$g$-serviceable ends:
\[
\Served^{\setminus y}(a,t,g) \;=\; {\uparrow}\Served^{\setminus y}(a,t,g)
\,\cap\, \Serviceable(a,t,g).
\]
\end{corollary}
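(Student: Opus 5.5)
The plan is to prove the two inclusions of the set identity separately, starting from the first claim of Theorem~\ref{thm:DMU}, $\Served(a,t,g) = \Served^{\setminus y}(a,t,g)\cup\{E\}$ with $E$ the $\pref{a}{t}$-least served end, together with the top-segment identity that \axref{MU4} supplies for $\Served(a,t,g)$. The inclusion $\Served^{\setminus y}(a,t,g)\subseteq{\uparrow}\Served^{\setminus y}(a,t,g)\cap\Serviceable(a,t,g)$ is routine. Every set is contained in its own up-closure by the reflexive clause ``$F=E$'' in the definition of ${\uparrow}$. Each reduced-served end is served, so by \axref{MU1} it is $g$-serviceable, exactly as in the proof of Lemma~\ref{lem:served_choice_relevant}.

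For the reverse inclusion, I would take $F\in{\uparrow}\Served^{\setminus y}(a,t,g)\cap\Serviceable(a,t,g)$. This gives some $G\in\Served^{\setminus y}(a,t,g)$ with $F=G$ or $F\pref{a}{t}G$. The case $F=G$ is immediate. Otherwise I would chain the inequalities upward.

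\begin{itemize}
\item $G$ is served and $G\neq E$, since $\Served^{\setminus y}$ equals $\Served\setminus\{E\}$ by the theorem's argument: $y$ is the unique unit allotted to $E$, and by \axref{MU0} $y$ serves nothing else.
\item Minimality of $E$ together with totality (Lemma~\ref{lem:served_choice_relevant}) then gives $G\pref{a}{t}E$.
\item Transitivity \axref{O3} yields $F\pref{a}{t}E$.
\item To invoke \axref{MU4} on $\Served(a,t,g,G)$, I also need $\EndAt(a,t,F)$. This follows from $F$ being $g$-serviceable, since Definition~\ref{def:serviceable} exhibits an available action with end $F$.
\item \axref{MU4} then gives $\Served(a,t,g,F)$.
\end{itemize}

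It remains to show $F\neq E$, so that $F$ lands in $\Served(a,t,g)\setminus\{E\}=\Served^{\setminus y}(a,t,g)$. This follows from $F\pref{a}{t}G\pref{a}{t}E$: transitivity gives $F\pref{a}{t}E$, and $F=E$ would contradict irreflexivity via \axref{O4}.

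The main obstacle I expect is the bookkeeping identity $\Served^{\setminus y}(a,t,g)=\Served(a,t,g)\setminus\{E\}$. The theorem establishes only the union form, so I also need that $E\notin\Served^{\setminus y}$. That step uses the technological-unit convention that $y$ is the \emph{unique} unit allotted to $E$ (Definition~\ref{def:marginal_end}), which is not an axiom among \axref{MU0}--\axref{MU4}. I would make it explicit by citing the definition of the marginal unit. The inclusion $\Served^{\setminus y}\subseteq\Served$ comes from Lemma~\ref{lem:served_choice_relevant}'s remark.

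If $\Served^{\setminus y}(a,t,g)$ is empty, both sides are empty and the identity holds trivially. No appeal to the nonemptiness hypothesis is therefore needed.
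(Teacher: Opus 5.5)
Your proof is correct and follows essentially the same argument as the paper. The easy inclusion uses the reflexive clause of ${\uparrow}$ together with (MU1). The reverse inclusion uses (MU4) on the full served set, transitivity and irreflexivity to separate the higher end from $E$, and (MU0) to ensure that end is served by a unit other than $y$. The only difference is packaging. The paper picks the unit $z$ allotted to the higher end and rules out $z=y$ by (MU0) directly, whereas you route the same step through the deletion identity $\Served^{\setminus y}(a,t,g)=\Served(a,t,g)\setminus\{E\}$. You also make explicit that serviceability yields choice-relevance, which (MU4) requires and which the paper's appendix simply assumes.
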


\begin{proof}[Proof of Corollary~\ref{cor:dmu_structure}]
Of the two inclusions in the stated identity, only $\supseteq$
has content.  The inclusion $\subseteq$ is automatic: a
reduced-served end ranks $\succeq$ itself --- so lies in
${\uparrow}\Served^{\setminus y}(a,t,g)$ --- and is serviceable by
\axref{MU1}.  For $\supseteq$ we must show the reduced served set
is \emph{upward-closed}: every $g$-serviceable end $G$ ranked
above some reduced-served $F'$ is itself reduced-served.

Fix such a $G \succ F'$.  Since $F'$ is served and $G$ outranks
it, \axref{MU4} --- the full served set is a top-segment ---
makes $G$ served as well, so some unit of $g$ is allotted to it.
That unit cannot be the marginal unit $y$: by \axref{MU0} a unit
serves only one end, and $y$ already serves the marginal end $E$,
so a $y$ that also served $G$ would force $G = E$ --- impossible,
since $G \succ F' \succ E$ (every reduced-served end outranks the
marginal $E$).  The unit serving $G$ is therefore one of the
others, so $G$ remains served once $y$ is removed:
$G \in \Served^{\setminus y}(a,t,g)$.

Appendix~\ref{app:dmu_proof} gives the Hilbert-style version with
step-by-step commentary.
\end{proof}

\begin{remark}[The law iterates: from one step to the whole schedule]
\label{rem:dmu_iterates}
Theorem~\ref{thm:DMU} compares a single pair of supplies, $n$ and
$n-1$; Corollary~\ref{cor:dmu_structure} is what lets the comparison
\emph{repeat}.  Because the reduced served set is again a top-segment
--- a rational allocation in its own right --- the theorem applies to
it afresh: remove its marginal unit, and the marginal end rises again.
As long as units remain this continues, and the successive marginal
ends form a strictly ascending chain
\[
E \;\prec\; E^* \;\prec\; E^{**} \;\prec\; \cdots
\]
running from the least urgent want the full stock serves up to the
single most urgent want one unit would serve.  Read in the other
direction --- the supply growing from one unit to $n$ --- it is a
strictly \emph{falling} schedule of marginal utility: the good's whole
Mengerian value scale, each added unit serving a less urgent want than
the one before.

This is the inductive content of Mises's ``$n$ versus $n-1$'' framing:
the single-step theorem is the induction step, and the structure
preservation of the corollary is the invariant that carries it along
the whole supply.  Theorem~\ref{thm:labor} runs exactly this iteration
on leisure-time.
\end{remark}

\begin{remark}[Methodological significance and Mises's ``already implied'' claim]
\label{rem:dmu_method}
The role each axiom plays in
Theorem~\ref{thm:DMU} and Corollary~\ref{cor:dmu_structure} is:

\begin{center}
\begin{tabular}{@{}p{2.4cm}p{4.5cm}p{6cm}@{}}
\toprule
\textbf{Axiom} & \textbf{Facet of rationality} & \textbf{Role} \\
\midrule
\axref{MU0}    & functionality (schedule is a partial function)      & \textbf{logical}: gives the deletion identity $\Served^{\setminus y} = \Served\setminus\{E\}$ (so the reduced marginal $E^*$ exists once the reduced set is non-empty, by finiteness, Lemma~\ref{lem:served_choice_relevant}) and, in the corollary, rules $y$ out as the unit serving a higher end \\
\axref{MU1}    & feasibility (only serviceable cells populated)      & \textbf{logical}: underwrites Lemma~\ref{lem:served_choice_relevant} --- every served end is choice-relevant \\
\axref{O2}--\axref{O4} & preference (strict total order on $\EndAt(a,t)$) & \textbf{logical}: makes ``least preferred'' meaningful for the marginal-end definitions \\
\axref{MU2}    & fungibility (units interchangeable)                 & \textbf{semantic}: makes ``another unit of $g$'' a meaningful operation; without it, units are distinguishable and the marginal-employment doctrine doesn't apply \\
\axref{MU3} & scarcity awareness                    & \textbf{non-triviality}: ensures some good is genuinely scarce (a serviceable end its stock leaves unserved), so the framework is not about a free-good world \\
\axref{MU4}    & coherence (top-segment served set)                  & \textbf{logical, in the corollary}: invoked to propagate the top-segment shape from the full served set to the reduced served set \\
\bottomrule
\end{tabular}
\end{center}

Taking the \emph{marginal} unit is not gratuitous: \axref{MU4}
alone permits ``wasteful'' Mengerian-looking allocations on which
a non-marginal unit violates the theorem's inequality.  An actor
with units $x_1, \ldots, x_5$ of \good{water} and ranked ends
$\mathsf{drink} \succ \mathsf{cook} \succ \mathsf{wash} \succ
\mathsf{garden} \succ \mathsf{fun}$ (together with further,
lower-ranked \good{water}-serviceable ends that remain unserved, so
\axref{MU3} holds) may allot $x_1 \to \mathsf{cook}$, $x_2 \to
\mathsf{wash}$, $x_3 \to \mathsf{garden}$, $x_4 \to \mathsf{fun}$,
$x_5 \to \mathsf{drink}$.  The served set is a top-segment, so all
(MU)-axioms hold.  But $x_5$, which uniquely serves
$\mathsf{drink}$, is \emph{not} the marginal unit --- \good{water}'s
marginal end is $\mathsf{fun}$, served by $x_4$.  Removing $x_5$
leaves reduced marginal $\mathsf{fun} \prec \mathsf{drink}$, so
the would-be conclusion $\mathsf{fun} \succ \mathsf{drink}$ is
false.  The theorem applies only at the margin: $y$ must serve
the \emph{least urgent} end.

Mises is emphatic that ``the law of marginal utility is already
implied in the category of action''
\citep[Ch.~VII, \S1]{Mises1949}, and the structures isolated above
might appear to contradict this.  The two views agree once one
notices where Mises packs the work.  ``Human action is
necessarily always rational; the term `rational action' is
therefore pleonastic'' \citep[Ch.~I, \S4]{Mises1949}; ``means are
necessarily always limited\dots\ if this were not the case, there
would not be any action with regard to them''
\citep[Ch.~IV, \S1]{Mises1949}; ``every action is always in
perfect agreement with the scale of values''
\citep[Ch.~IV, \S2]{Mises1949}.  The (MU)-axioms, together with
the Layer-2 preference order, are not hypotheses Mises adds to
acting --- they are constitutive of what acting \emph{is} for
him.  The Hilbert-style treatment is \emph{diagnostic} of Mises's
reasoning, not corrective: it isolates the structure he carries
entirely inside his concept of acting man.\footnote{The point
that introducing a preference scale silently imports completeness
and transitivity is also recorded by
\citet{Hudik2012}.  The present treatment isolates the
same kind of hidden premise at finer granularity: the strict
order in \axref{O2}--\axref{O4}, the homogeneity in \axref{MU2},
the top-segment shape in \axref{MU4}, and so on.}
\end{remark}

\begin{crusoe}[diminishing marginal utility, on the two-good schedule]
Return to the two-good allotment schedule introduced after the
(MU)-axioms: three units $w_1, w_2, w_3$ of $\good{water}$
and two units $f_1, f_2$ of $\good{fish}$, ends ranked
$\mathsf{Drink} \succ \mathsf{Eat} \succ \mathsf{Cook} \succ
\mathsf{Store} \succ \mathsf{Wash} \succ \mathsf{Garden}$.
Theorem~\ref{thm:DMU} applies once \emph{per good}, each time at
that good's own margin.

\medskip
For \good{water}, the marginal end is $\mathsf{Wash}$, the
lowest of the \good{water}-served $\{\mathsf{Drink}, \mathsf{Cook},
\mathsf{Wash}\}$, so the marginal unit is $y = w_3$, the one
unit serving it.  The reduced supply still serves
$\mathsf{Drink}$ and $\mathsf{Cook}$, hence is non-empty.  The theorem delivers the marginal of the reduced
\good{water}-served set $\{\mathsf{Drink}, \mathsf{Cook}\}$, namely
$E^* = \mathsf{Cook}$, and indeed
$\mathsf{Cook} \succ \mathsf{Wash}$: if Crusoe loses a bucket
of \good{water}, it is the washing that goes, not the drinking or the
cooking.

\medskip
For \good{fish}, the marginal unit is $y = f_2$, serving the
marginal end $\mathsf{Store}$, with $f_1$ serving
$\mathsf{Eat}$ (so the reduced supply is non-empty); the theorem
yields
$E^* = \mathsf{Eat} \succ \mathsf{Store}$: losing a \good{fish} costs
tomorrow's stored meal, not tonight's dinner.

\medskip
Each good's successive units serve less urgently desired ends
--- DMU in purely ordinal terms --- and the two per-good margins
interleave on the single value scale without the language ever
comparing ``\good{water} as a class'' against ``\good{fish} as a class''
(Remark~\ref{rem:value_paradox}).  Here each end is served by a
single technological unit (the schedule above), so every marginal
cell is uniquely served and the $n \to n-1$ transition strips
exactly one end.

\medskip
Corollary~\ref{cor:dmu_structure} adds that each reduced
allotment is still a top-segment of its good's serviceable
ends: \good{water}'s $\{\mathsf{Drink}, \mathsf{Cook}\}$ within
$\{\mathsf{Drink}, \mathsf{Cook}, \mathsf{Wash},
\mathsf{Garden}\}$, \good{fish}'s $\{\mathsf{Eat}\}$ within
$\{\mathsf{Eat}, \mathsf{Store}\}$.  \axref{MU2},
\axref{MU3}, and \axref{MU4} make the setup itself coherent
(see Remark~\ref{rem:dmu_method}): \axref{MU3} is shown by
\good{water}'s $\mathsf{Garden}$ (serviceable but unserved), and
the $\Serviceable$ qualifier of \axref{MU4} is doing visible
work --- $\mathsf{Eat}$ outranks $\mathsf{Cook}$, yet \good{water}'s
serving $\mathsf{Cook}$ is not faulted, since \good{water} cannot
serve $\mathsf{Eat}$.
\end{crusoe}

\begin{remark}[Two opposite descents]
\label{rem:valuation_vs_performance}
The (MU)-apparatus values a stock; it does not describe what the actor
\emph{does} next.  Reading ``drink, then cook, then wash'' as walking
\emph{down} a fixed scale --- the satisfied end dropping from the
\emph{top} --- is the \emph{action} descent, which want is met next:
the most urgent first \citep[Ch.~VII, \S1]{Mises1949}.  The
marginal-utility descent of Theorem~\ref{thm:DMU} runs the other way,
removing the \emph{bottom}: the marginal unit serves the least-urgent
end, so losing a unit shrinks the served set from below, its marginal
end $E$ becoming unserved (Definition~\ref{def:marginal_end},
Corollary~\ref{cor:dmu_structure}).  The two answer different
questions --- \emph{what is a unit worth} versus \emph{what is done
first} --- and should not be conflated; the (MU)-apparatus formalizes
only the former.  And because this is a momentary valuation, it does
not compound across time: wants recur, so the scale cycles rather than
contracting monotonically (Remark~\ref{rem:constancy}).
\end{remark}

\subsection{Disutility of labor and the marginal utility of leisure}
\label{subsec:derived_labor}

Mises insists that ``the expenditure of labor is deemed painful''
and that ``leisure is, other things being equal, preferred to
travail'' \citep[Ch.~VII, \S3, p.~131]{Mises1949}.  He
treats leisure as ``an economic good of the first order'' and
applies the law of marginal utility to conclude that ``the
disutility of labor felt by the worker increases in a greater
proportion than the amount of labor expended'' (p.~134).  We
formalize this within $\Lprx$ using one additional predicate and
one axiom, from which both the disutility of labor and its
\emph{increasing} margin follow.

Why a separate axiom rather than a base-theory result?  Mises is explicit that disutility of labor is not part
of the praxeological core: ``the disutility of labor is not of a
categorial and aprioristic character''
\citep[Ch.~II, \S10]{Mises1949}.  It is a contingent fact about
human beings as biological agents, not a structural feature of
purposeful action.  What is contingent, precisely, is that labor is
\emph{means-only} --- undergone for a distinct produced good, never
enjoyed for its own sake; a non-Misesian world allows an activity at
once productive and its own reward, where labor and leisure share an
end.  That is the content of axiom~\axref{LB1} below.  The
\emph{ranking} of labor's end above leisure is then no further
postulate but a consequence of demonstrated preference
(Proposition~\ref{prop:disutility}).  The predicate $\Labor$
and the axiom \axref{LB1} are introduced as a small enrichment on top
of $\Tprx$, not as elements of the base axiomatization: a model of
$\Tprx$ in which no action is labor, or in which labor is not
means-only, remains a perfectly admissible $\Tprx$-model; only
$\Tprx + (\text{LB1})$ distinguishes the Misesian sub-class.

\[
\Labor(\alpha) \quad
\text{(``action $\alpha$ is an expenditure of labor'')}.
\]

Labor is the employment of physiological functions and
manifestations of human life as a means.  It is distinguished
from immediately gratifying activity (leisure, play) by the
fact that the actor submits to it only for the sake of the
eventual gratification it produces.  The predicate
$\Labor(\alpha)$ marks this distinction within the
Actions sort.\footnote{The predicate $\Labor$ is scoped to
means--end actions producing distinct consumers' goods, not to
intrinsically rewarding activity.  \citet[ch.~1,
p.~46]{Rothbard2009} draws the line explicitly: ``Those
activities which are engaged in purely for their own sake are
not labor but are pure play, consumers' goods in themselves.''
A ``play'' action is not tagged $\Labor$; axiom~\axref{LB1}
attaches to labor actions only.}

The ends an immediately-gratifying (non-labor) action can secure
are the actor's \emph{leisure-ends}:
\[
\Leisure(a,t,E) \;:\Leftrightarrow\;
\exists\alpha\,\bigl(\Avail(a,\alpha,t) \wedge
\neg\Labor(\alpha) \wedge \EndOf(\alpha,E)\bigr),
\]
a derived classification of ends, built from the
$\Labor$ predicate --- no new primitive.  Let $E_\ell$ denote such a leisure-end
($\Leisure(a,t,E_\ell)$); we assume $\EndAt(a,t,E_\ell)$
whenever $a$ has the option of not expending labor at $t$ ---
i.e.\ whenever some $\beta$ with $\neg\Labor(\beta)$ is
available.

\medskip
\noindent\textbf{Axiom.}

\begin{enumerate}[label=(LB1), leftmargin=3em]
\item\label{ax:LB1} \textbf{Labor is means-only} \emph{(rationality of labor).}
A labor action and a non-labor action never share an end:
\[
\forall \alpha\,\beta\;\bigl(\Labor(\alpha) \land
\neg\Labor(\beta) \;\Rightarrow\; \EndOf(\alpha) \neq
\EndOf(\beta)\bigr).
\]
\end{enumerate}

\noindent The disutility of labor then follows at once, by demonstrated
preference --- it is not a further axiom.

\begin{proposition}[Disutility of labor]\label{prop:disutility}
If a labor action is performed at $(a,t)$ while some non-labor
alternative is available --- $\Acts(a,\alpha,t)$,
$\Labor(\alpha)$, and $\Avail(a,\beta,t)$,
$\neg\Labor(\beta)$ with $\EndOf(\beta,E_\ell)$ --- then the end
of $\alpha$ ranks above leisure: $\EndOf(\alpha) \pref{a}{t} E_\ell$.
\end{proposition}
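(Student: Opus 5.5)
The plan is to reduce the claim to the grounding axiom \axref{O0} applied to a demonstrated-preference record, using \axref{LB1} only to secure the distinctness clause of Definition~\ref{def:revpref}. Let $E := \EndOf(\alpha)$, which exists and is unique by \axref{P3} and \axref{P4}. The target is $E \pref{a}{t} E_\ell$. By \axref{O0} it suffices to establish the record $E \revpref{a}{t} E_\ell$.

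To verify the record, I would check the five conjuncts of Definition~\ref{def:revpref} with witnesses $\alpha$ and $\beta$:
\begin{itemize}
\item $\Acts(a,\alpha,t)$ holds by hypothesis.
\item $\Avail(a,\beta,t)$ holds by hypothesis.
\item $\EndOf(\alpha,E)$ holds by the choice of $E$.
\item $\EndOf(\beta,E_\ell)$ holds by hypothesis.
\item $\alpha \neq \beta$ holds because $\Labor(\alpha)$ and $\neg\Labor(\beta)$ hold, and a single action cannot both satisfy and fail the predicate.
\end{itemize}
The remaining clause is $E \neq E_\ell$. Here \axref{LB1} does the work: instantiating it at $\alpha$ (labor) and $\beta$ (non-labor) gives $\EndOf(\alpha) \neq \EndOf(\beta)$, that is, $E \neq E_\ell$, by functionality of $\EndOf$ (\axref{P4}).

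The only delicate point is precisely this distinctness step. Without \axref{LB1}, a labor action and a leisure action could share an end. The $E \neq F$ clause of the record would then be silent, as in the means-ends multiplicity at $\mathsf{Capital}$, and no ranking would follow. So the proof will make explicit that the ranking comes from demonstrated preference alone, and that \axref{LB1} is needed only to rule out the degenerate shared-end case.

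As a cross-check, the same conclusion also follows from Proposition~\ref{prop:chosen_max}. $E_\ell$ is choice-relevant, being witnessed by $\beta$, and $E_\ell \neq \hat{E} = E$ by \axref{LB1}. Hence $\hat{E} \pref{a}{t} E_\ell$. I would present the direct record-plus-\axref{O0} argument and note this route in one line.
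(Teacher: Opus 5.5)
Your proof is correct and follows the paper's argument exactly: $\alpha\neq\beta$ from the $\Labor$/$\neg\Labor$ split, distinct ends from \axref{LB1}, the record $\EndOf(\alpha)\revpref{a}{t}E_\ell$ from Definition~\ref{def:revpref}, and \axref{O0} to carry it into the order. Your cross-check through Proposition~\ref{prop:chosen_max} is also valid, since that proposition is proved by the same record-plus-\axref{O0} step.
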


\begin{proof}
The performed $\alpha$ and the available $\beta$ are distinct (by
$\Labor(\alpha) \land \neg\Labor(\beta)$) with distinct
ends (by \axref{LB1}), so Definition~\ref{def:revpref} records
$\EndOf(\alpha) \revpref{a}{t} E_\ell$, and \axref{O0} carries it into
the order.
\end{proof}

\noindent The worker submits to labor only for an end he ranks above
leisure: the disutility is the demonstrated-preference record applied to
the labor/leisure choice, carried into the order by \axref{O0}, not an
extra behavioral postulate.  All the enrichment genuinely adds is the
means-only structure \axref{LB1}; that leisure and labor compete on the
one scale of values $\pref{a}{t}$ then follows.

Treat leisure-\emph{time} as a homogeneous good $g_\ell$ --- its
units the actor's free hours, satisfying \axref{MU0}--\axref{MU4}
(Section~\ref{subsec:derived_dmu}) and allotted across the
leisure-ends $\Leisure(a,t,\cdot)$ --- and let a
\emph{unit of labor} be one such hour withdrawn, spent working
rather than enjoyed.\footnote{The units $u$ are the actor's free
hours, taken as homogeneous resource-units (\emph{Things}), so
$\UnitOf(u,g_\ell)$ is well-typed and the (MU)-apparatus applies
verbatim.  We deliberately do \emph{not} widen $\UnitOf$ to the
$\mathsf{Times}$ sort: that retyping would thread through $\Use$,
$\Allot$, and $\Serviceable$, with consequences across the
apparatus.  The (MU)-layer needs only \emph{some} homogeneous
divisible good --- leisure-time supplies one, Mises's ``economic
good of the first order.''}

\begin{definition}[Marginal disutility of labor]\label{def:mdl}
The \emph{marginal disutility of labor} at $(a,t)$ is just the marginal
utility of the leisure-time good $g_\ell$ (Definition~\ref{def:value}):
\[
\MDL(a,t) \;:=\; \MU(a,t,g_\ell)
\;=\; \min\nolimits_{\pref{a}{t}} \Served(a,t,g_\ell),
\]
the least urgent leisure-end the actor's free hours still serve --- the one
a further hour of labor would forfeit.
\end{definition}

\begin{maintheorem}[Increasing marginal disutility of labor]\label{thm:labor}
The marginal disutility of labor rises with each unit of labor: withdrawing
one leisure-hour moves $\MDL(a,t)$ strictly up the scale of values
$\pref{a}{t}$, so the leisure-end forfeited by each successive labor-hour is
strictly more urgent than the one before.\footnote{The roundabout-production
framework that underwrites the labor/capital trade-off is due to
\citet{BohmBawerk1889}; the present theorem extracts
the labor-side disutility gradient from a single layer of the
axiom system.}
\end{maintheorem}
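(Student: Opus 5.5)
The plan is to reduce Theorem~\ref{thm:labor} to Theorem~\ref{thm:DMU} applied to the leisure-time good $g_\ell$, and then iterate along the lines of Remark~\ref{rem:dmu_iterates}. Since $\MDL(a,t)$ is by Definition~\ref{def:mdl} literally $\MU(a,t,g_\ell)$, the formal content of ``withdrawing one leisure-hour'' is the passage from the served set $\Served(a,t,g_\ell)$ to the reduced set $\Served^{\setminus y}(a,t,g_\ell)$. Here $y$ is the marginal leisure-hour, i.e.\ the unique unit allotted to $\MDL(a,t)$ (Definition~\ref{def:marginal_end}). The first step is therefore to check that $g_\ell$ meets the hypotheses of Theorem~\ref{thm:DMU}. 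It satisfies \axref{MU0}--\axref{MU4} by stipulation. Its served ends are leisure-ends and, by Lemma~\ref{lem:served_choice_relevant}, lie in the finite menu $\EndAt(a,t)$. We assume the actor has at least two free hours, so that the reduced set is non-empty; this is Mises's implicit $n\ge 2$, and it must be stated as a hypothesis, since the theorem is vacuous otherwise.

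The single-step claim then follows directly from Theorem~\ref{thm:DMU}. Writing $E=\MDL(a,t)$ and $E^*=\min_{\pref{a}{t}}\Served^{\setminus y}(a,t,g_\ell)$, we get $E^*\pref{a}{t}E$. In words, once the marginal hour is spent working, the new marginal disutility (the least urgent leisure-end still served) is strictly more urgent than the one just forfeited.

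For the ``each successive labor-hour'' clause, I will argue by induction on the number $k$ of hours withdrawn. Corollary~\ref{cor:dmu_structure} is the invariant: the reduced schedule is again a top-segment of $\Serviceable(a,t,g_\ell)$, and it inherits \axref{MU0}, \axref{MU1} and the technological-unit bijection because it is obtained by deleting one cell. Theorem~\ref{thm:DMU} therefore reapplies at every stage while at least two hours remain. Transitivity \axref{O3} then chains the steps into a strictly ascending sequence $E\prec E^*\prec E^{**}\prec\cdots$ of forfeited leisure-ends. Proposition~\ref{prop:disutility} can be cited alongside to interpret each forfeited leisure-end as something labor's product must outrank, but the monotonicity itself does not depend on it.

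The main obstacle is the induction step, not the single step. The reduced schedule must be treated as a legitimate allocation schedule to which the (MU)-axioms and Theorem~\ref{thm:DMU} apply again. But the (MU)-axioms are stated about the primitive $\Allot$ at a fixed $(a,t)$, not about a hypothetical schedule with $y$ deleted. I would first try to avoid changing models altogether. To do so, I would define the iterated reduced sets $\Served^{\setminus\{y_1,\dots,y_k\}}$ directly and reprove the deletion identity and the upward closure for them. That is exactly the argument of Theorem~\ref{thm:DMU} and Corollary~\ref{cor:dmu_structure} with a finite excluded set in place of a single unit, using \axref{MU0} to show that each excluded unit served only its own end.
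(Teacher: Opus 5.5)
Your proposal is correct and follows the paper's proof. Both reduce the claim to Theorem~\ref{thm:DMU} on the leisure-time good $g_\ell$, then iterate with Corollary~\ref{cor:dmu_structure} as the invariant to get the strictly ascending chain $E \prec E^* \prec E^{**} \prec \cdots$. Your handling of the induction step (reduced served sets with a finite set of withheld units, with \axref{MU0} ensuring each withheld unit strips only its own end) is exactly the generalization the paper's footnote calls mechanical; the only nuance is that the non-emptiness hypothesis should be two \emph{allotted} hours, i.e.\ two served leisure-ends, not merely two free hours.
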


\begin{proof}
The claim is Theorem~\ref{thm:DMU} applied to the leisure-time good
$g_\ell$, read with the stock falling; we spell out the iteration.
Take $g_\ell$ with $\UnitOf(u,g_\ell)$
for each unit $u$ of leisure available at $(a,t)$, allotted
across the actor's leisure-ends, giving the served set
$\Served(a,t,g_\ell)$.

By \axref{MU2} the leisure units are interchangeable, so
withdrawing one --- one unit of labor --- cancels the least urgent
cell of the schedule: the
marginal leisure-end
$E = \min\nolimits_{\pref{a}{t}}\Served(a,t,g_\ell)$, served by
the marginal unit $y$ alone, with the reduced set
$\Served^{\setminus y}(a,t,g_\ell)$ still non-empty.

\medskip

By Theorem~\ref{thm:DMU} applied to $g_\ell$, the reduced
marginal end
$E^* = \min\nolimits_{\pref{a}{t}}\Served^{\setminus y}(a,t,g_\ell)$
satisfies $E^* \pref{a}{t} E$.  A further unit of labor withdraws
another unit of leisure and --- again by \axref{MU2} --- gives up
the reduced stock's least-urgent served end, its marginal $E^*$,
strictly more urgent than the $E$ given up before.  The
opportunity cost of labor rises from each unit to the next.

\medskip

The argument iterates --- this is the general law of
Remark~\ref{rem:dmu_iterates}, here on leisure-time: by
Corollary~\ref{cor:dmu_structure} the
reduced set $\Served^{\setminus y}(a,t,g_\ell)$ is again a
top-segment of the $g_\ell$-serviceable ends --- a rational
allocation in its own right --- so Theorem~\ref{thm:DMU} applies
to it afresh, then to its own reduction, and so on (the
leisure-end menu stays finite throughout,
Remark~\ref{rem:finiteness}).  After $k$ withdrawals
the forfeited ends form a strictly ascending chain
$E \prec E^* \prec E^{**} \prec \cdots$ on the actor's scale.  This is what Mises means by
increasing marginal disutility of labor: ``the first unit of
leisure satisfies a desire more urgently felt than the
second one, the second one a more urgent desire than the
third one, and so on'' \citep[p.~134]{Mises1949}.
Reversing, the first unit of labor sacrifices the
\emph{least} urgent leisure-end; the second sacrifices a
\emph{more} urgent one; and so on.\footnote{Two scope notes.
First, the chain is comparative-static, exactly as in Mises's
``$n$ versus $n-1$ units'' framing: it compares leisure stocks
of successive sizes at one valuation moment.  Reading it as a
\emph{temporal} process --- labor extending over successive
moments $t < t' < \cdots$ --- additionally assumes that the
actor's leisure-scale is constant across those moments, an
assumption $\Tprx$ deliberately does not impose
(Remark~\ref{rem:constancy}).  Second, the $k$-step chain
formally requires the reduced served set of
Definition~\ref{def:reduced_served_set} generalized from one
withheld unit to a finite set of withheld units; the
generalization is mechanical, and only the consecutive-pair
instance is used above.}
\end{proof}

\begin{remark}[Scope of the labor theorem]\label{rem:labor_scope}
Theorem~\ref{thm:labor} inherits the hidden premises
\axref{MU0}--\axref{MU4} via its use of Theorem~\ref{thm:DMU} and
Corollary~\ref{cor:dmu_structure} --- \axref{MU4}, in particular,
enters only through the corollary's iteration step.  The
additional structure is minimal: one predicate
($\Labor$), the derived leisure-end classification
$\Leisure$ it induces, and the identification of
leisure-\emph{time} as the good $g_\ell$ serving those ends.  The result confirms Mises' methodological claim
that the disutility of labor ``does not need to be corrected
or complemented by an additional statement'' beyond the
ordinary law of marginal utility \citep[p.~137]{Mises1949}.
\end{remark}

\begin{crusoe}[disutility of labor: berry-picking]
Following \citet[ch.~1, pp.~47--49]{Rothbard2009},
suppose Crusoe finds himself on the island able to pick edible
berries by hand at a constant rate of 20 per hour; with 24
hours in the day his choice is how many to allocate to labor
(berry-picking, producing the end $\mathsf{Eat}$) versus
leisure (any non-labor action producing leisure-ends
$\mathsf{L}_1, \mathsf{L}_2, \ldots$).

\medskip
Treat leisure-hours as units of a homogeneous good $g_\ell$;
his most urgent leisure-end $\mathsf{L}_1$ is, say, rest after
exertion, $\mathsf{L}_2$ swimming, $\mathsf{L}_3$ exploring the
island, and so on.  By Theorem~\ref{thm:DMU} applied to
$g_\ell$, the first hour of leisure serves $\mathsf{L}_1$, the
second a less urgent leisure-end, and so on down his scale.

\medskip
Reversing the perspective gives Theorem~\ref{thm:labor}: the
first labor-hour gives up the \emph{least} urgent
leisure-end on his scale; the second labor-hour gives up a
\emph{more} urgent leisure-end; and so on.  Berries arrive at a
constant 20 per hour, so the labor's \emph{physical} marginal
product is constant; but the marginal \emph{utility} of those
berries falls as the stock grows (Theorem~\ref{thm:DMU}), while
the marginal disutility of labor rises (Theorem~\ref{thm:labor})
--- both blades of the scissors move.

\medskip
Crusoe stops working at the hour where the next leisure-hour
would serve a more urgent end than the next 20 berries can
serve --- i.e., when the marginal disutility of labor first
exceeds the marginal utility of its product.  In the language of
Definitions~\ref{def:value} and~\ref{def:mdl} this is one comparison
on the single scale: writing $g_b$ for the berries, another labor-hour
lowers $\MU(a,t,g_b)$, the marginal utility of the product
(Theorem~\ref{thm:DMU} on $g_b$), and raises $\MDL(a,t)$, the
marginal disutility of labor (Theorem~\ref{thm:labor}); Crusoe works while
$\MU(a,t,g_b) \pref{a}{t} \MDL(a,t)$ and stops at the labor-hour where $\MDL(a,t)$ first
outranks $\MU(a,t,g_b)$.\footnote{Two clocks, and $t$ wears both: at $t$ Crusoe \emph{values}
and \emph{acts}.  The equilibrium concerns the valuing --- the scale
$\pref{a}{t}$ and the allocation schedule $\Allot(a,t,\cdot)$, which
can hold the \emph{whole} $24$-hour plan because \axref{C1} caps
\emph{performance}, not the schedule's size.  Acting is the other
role: \axref{C1} executes one allotment at $t$, the rest pending for
later moments (by \axref{P5}, distinct moments).  So ``the $n$-th
labor-hour'' is a cell of the \emph{plan}; the rule ranks plans
($n$ versus $n\pm1$ labor-hours) at the one valuing $t$ ---
comparative-static.  Carrying a plan out is the temporal process
$t < t' < \cdots$, which would further need the scale held constant
(the proof's first scope note).}
Rothbard frames
the equilibrium condition directly: ``a man will stop work
when the marginal disutility of labor is greater than the
marginal utility of the increased goods provided by the
effort''~\citep[ch.~1, p.~46]{Rothbard2009}.  The allocation is
determined entirely by Crusoe's value scale; no numerical
leisure/labor exchange rate appears.
\end{crusoe}

\section{Future research}
\label{sec:outlook}

The base system $\Tprx$ is sufficient for everything proved in
this paper but it is not yet sufficient to express the full
conceptual apparatus of \emph{Human Action}.  Future research
extends $\Tprx$ along two complementary axes.

\medskip
\noindent\textbf{Extensions.}
$\Lprx$ can be enriched with five further layers: production
structure (a $\Result$ predicate, a $\Consumable$ predicate,
a higher-order goods hierarchy\footnote{Garrison's
capital-structure framework~\citep{Garrison2001} frames the same
hierarchy in terms of \emph{stages of production}, with earlier
stages further from final consumption.}), ownership (an
$\Owns$ predicate), exchange (an $\Exch$ predicate with
ownership-prerequisite, ownership-transfer, bilateral
distinctness, and symmetry axioms), monetary calculation
(a $\Money$ predicate), and a transition map $\tau$
sending each global state of the economy to its
$\tau$-successor.  The five layers are organized in two
independent branches that reconnect only through dynamics.
A second batch of Misesian theorems then becomes derivable:
mutual gain from voluntary exchange (purely ordinal,
\emph{ex ante}, no cardinal utility), the logical possibility
of profit and loss in an enriched dynamic system, the
subjectivity of production valuation
(via the higher-order-goods hierarchy), individual imputation
requires competing uses of factors, and a Coase-style
characterization of complete property rights as the absence of
an externality residual.  The base system $\Tprx$ of the
present paper is the substrate on which all of this can be
built.

\medskip
\noindent\textbf{Two impossibilities.}
With the language $\Lprx$ established (the present paper) and the
production enrichment available (future research), the policy-maker can
be introduced as a distinguished actor and the Economic
Calculation Problem \textsc{calc}\footnote{``\textsc{calc}'' is
short for Mises's \emph{economic calculation}: the formal
decision problem that asks, of an arbitrary praxeological economy
and a designated policy objective, whether some allocation in the
economy achieves the objective.} can be stated formally: does there
exist a feasible allocation achieving a designated target?
First, a faithfulness check:
Mises' classical pricing argument can be recovered as a theorem
of $\Tprx$.  Two structurally symmetric actors --- symmetric in
their availability, use, and \emph{preference} structure --- are
interchangeable by an automorphism of every $\Tprx$-model, so
no formula of $\Lprx$ expresses social optimality of an
allocation, and adding a price primitive to the language is
exactly what makes such a formula well-formed.  Second, a fact
about the expressive power of the model class: once the production enrichment is in place, $\Tprx$ contains a
praxeological economy $\Emw$ for every Turing machine $M$ and
input $w$, and consequently \textsc{calc} is undecidable as a
uniform decision problem over the model class.\footnote{Neither
\textsc{calc} undecidability nor any other claim of the present
work is intended to rule out the metaphysical possibility of an
omniscient planner.  An entity with full insight into every
human valuation and full predictive access to the future ---
e.g.\ a deity --- is by hypothesis not a praxeological
policy-maker subject to the kinds of structural and
informational limits that Mises and Hayek wrote about, and
$\Tprx$ does not attempt to model such an entity.  The
undecidability of \textsc{calc} is a statement about
algorithms running on actual physical devices and about
formulas expressible in the finite first-order language
$\Lprx$; both are bounded in ways that an omniscient observer
is not.  Mises and Hayek wrote about flesh-and-blood policy
agents, and the present work stays within the same scope.}
Neither observation is presented as superseding Mises' or
Hayek's classical arguments; both sit alongside them as
mathematical statements about the model class of the
axiomatization established here.

\medskip
\noindent
Future research building on the present paper will work
in the language $\Lprx$ introduced in
Section~\ref{sec:language}, against the model class
delineated by $\Tprx$, and using theorems of the kind proved
in Section~\ref{sec:derived_base}.  Readers may take the
formal apparatus of the present paper as background and
may, if so inclined, consult the \Lean{} companion of
Appendix~\ref{sec:lean} to verify that the apparatus is
consistent.

\section*{Acknowledgments}
\addcontentsline{toc}{section}{Acknowledgments}

The authors used large language model assistance during the
preparation of this paper.  Specifically: for bibliographic
search and identification of secondary sources; for textual
editing and revision passes; and for code
generation in the \Lean{} companion of Appendix~\ref{sec:lean}
and in utility scripts used during revision.  All
mathematical content --- the axiom system $\Tprx$, the
language $\Lprx$, every definition, theorem, and proof, and
the overall argument structure --- was developed and verified
by the authors.  All model-assisted output (text, code, and
bibliographic suggestions) was reviewed, edited, and verified
by the authors prior to inclusion.  The \Lean{} companion in
particular is machine-checked by the type-checker
independently of any language model.

\appendix

\section{Machine verification of the axiom system}
\label{sec:lean}

A common reaction to any new axiom system is to ask whether the
axioms are mutually consistent.  In informal Austrian economics this
question is rarely raised because the underlying claims are stated
verbally and the inferences from them feel compelling.  Once the
claims are stated as formal axioms, however, consistency becomes a
concrete and answerable question --- and not merely answerable in
principle.  It is answerable by a machine.

Accompanying this paper is a self-contained \Lean{} source file,
\texttt{Praxeology.lean}, which encodes the core praxeological
signature of $\Lprx$ as a \Lean{}
\emph{type class} --- a bundle of the sorts, relations, and
axioms packaged as one definition.\footnote{The file is distributed at the
dedicated public repository
\href{https://github.com/rafkom72/praxeology-lean}{\texttt{github.com/rafkom72/praxeology-lean}}.
No \Lean{} installation is required to inspect it: the file can be
pasted directly into the online \Lean{} playground at
\href{https://live.lean-lang.org}{\texttt{live.lean-lang.org}} and
will compile in roughly thirty seconds.  Local installation of
\Lean{} with the VS~Code extension provides an interactive proof
state for readers who wish to step through the deductions.}
The five sorts (Actor, Action, End, Thing, Time) become \Lean{}
types and all six primitive relations become predicates.  The
axioms are organized into two base-theory classes mirroring the
paper's layers.  The first (the action core) carries the
time-order axioms \axref{T1}--\axref{T4}, the incidence axioms
\axref{P1}--\axref{P5}, and the action choice axiom \axref{C1},
with the exact statements they have in this paper.  The second
adds the valuational primitive $\pref{a}{t}$ with the grounding
axiom \axref{O0} and the order axioms \axref{O1}--\axref{O4}, the
remaining time axioms \axref{T0}, \axref{T5}, \axref{T6},
free-good exclusion \axref{P6}, and the scarcity axiom \axref{S1}
--- the full base theory.  Anyone who
reads \Lean{} syntax can verify, by reading the class definitions
alone, that the encoded axioms are the same axioms the paper
analyzes.

The file then does two things.  First, it proves the paper's
foundational results by formal deduction inside \Lean{}: the
asymmetry of demonstrated preference in its
definitional-clause form (Theorem~\ref{thm:asymm}), the
existence of opportunity cost (Theorem~\ref{thm:opp_cost}), the
chosen-end maximality of Proposition~\ref{prop:chosen_max},
and --- in the (MU)-layer --- the diminishing-marginal-utility
theorem and its structure-preservation corollary in exactly the
forms of Section~\ref{subsec:derived_dmu}.  \Lean{}'s kernel
accepts these
proofs only if they are valid in its underlying type theory; in
practice, every inferential step is checked symbol by symbol.

Second --- and this is the more important guarantee for an axiom
system this novel --- the file constructs concrete
\emph{instances} (actual models) of these classes.  The first is a finite,
three-period structure --- the snapshot $\{0,1,2\}$ of the
$\mathbb{N}$-time Crusoe model of Section~\ref{subsec:crusoe},
taken on its own: the actor
(Crusoe), five actions
(\textsf{Forage}, \textsf{BuildNet}, \textsf{ShoreFish},
\textsf{BuildBoat}, \textsf{DeepSeaFish}), four ends
(\textsf{Subsist}, \textsf{Capital}, \textsf{ShoreCatch},
\textsf{DeepCatch}), six things
(\textsf{Wood}, \textsf{Net}, \textsf{Boat},
\textsf{Plant}, \textsf{Fish}, \textsf{Tuna}), the three time
points ($t_0 < t_1 < t_2$) of the snapshot, and the
corresponding predicates; \Lean{} mechanically verifies every axiom
of the action core on it by exhaustive case
analysis over the finitely many parameter combinations.

The full base theory needs more: axiom \axref{T6} gives every
moment a successor, so the full theory has \emph{no finite models
at all}.  The companion therefore constructs a second instance
with $\mathsf{Times} := \mathbb{N}$ --- the state diagram of
Section~\ref{subsec:crusoe} walked indefinitely, with a
non-greedy history (deep-sea fishing on even days, shore fishing
on odd days, so that \axref{P6} always has its unrealised-employment
witness) and preference orders that co-vary with the history (so
the grounding axiom \axref{O0} holds at every moment; the
preference reversal between odd and even days is the
constancy-free variation of Remark~\ref{rem:constancy},
machine-checked).  When the
type-checker accepts these two base-theory instance blocks, what
it has produced is a \emph{constructive consistency proof of the
full base theory $\Tprx$} --- \axref{T0}--\axref{T6},
\axref{P1}--\axref{P6}, \axref{C1}, \axref{O0}--\axref{O4},
\axref{S1}: a model exists,
and the proof that it is a model has been checked by the kernel.
What remains for the reader is only to confirm that the \Lean{} fields
transcribe the axioms of $\Lprx$ (Section~\ref{sec:language}).

A reader who suspects the axiom system is subtly inconsistent --- so
that a contradiction could be derived from it, after which any
purported theorem would be vacuously valid --- now has the
concrete artefact to inspect and the means to refute the
suspicion.  If the axioms were inconsistent, no model could exist;
\Lean{} would refuse to construct the Crusoe instances.  It does not
refuse.

The (MU)-enrichment is a further class on top of the base theory:
it reuses the base scale of values $\pref{a}{t}$, inherits the order
axioms \axref{O2}--\axref{O4}, and encodes \axref{MU0}--\axref{MU4}
(with \axref{MU4} in its relativized form).  A third instance ---
the two-good water/fish schedule of
Section~\ref{subsec:derived_dmu}, over $\mathsf{Times} := \mathbb{N}$
so that it satisfies \axref{T6} --- is a model of this combined
class, so a \emph{single} structure witnesses the base theory and
the (MU)-enrichment jointly, with the diminishing-marginal-utility
theorem and its structure-preservation corollary proved in the forms
of Section~\ref{subsec:derived_dmu}.  The companion does \emph{not}
formalize the further enrichment layers --- ownership, exchange,
money, the transition map --- nor the policy-maker construction,
the Mises Inexpressibility theorem, or the undecidability results;
those are directions for future work.

The economic claims of this paper are not made to depend on \Lean{}.
A reader who never opens the file loses nothing, since every
relevant proof is also given in standard mathematical prose.
The companion serves a different purpose: it changes the cost of
checking the framework from \emph{trust the authors} to
\emph{run the type-checker}.  In a debate where the validity of an
axiomatic argument has historically been adjudicated by the
authority of the argument's proponent, that shift in cost is the
methodological contribution we want to highlight.  We hope it is
useful both for skeptical readers --- who can satisfy themselves
that the axioms are consistent without taking our word for it ---
and for sympathetic readers building on the framework, who now have
a foundation that has been checked symbol by symbol.

A small \texttt{mypy} companion ships alongside the \Lean{} file and
illustrates the underlying logical principle for readers more
familiar with statically typed programming languages: under the
Curry--Howard correspondence, type errors and proof errors are the
same kind of object.  A correct proof type-checks, while a
would-be proof of a false claim does not --- the same mechanism by
which \Lean{} validates the praxeological theorems above, in a
vocabulary some readers may know better.

\section{Proofs of Theorem \ref{thm:DMU} and Corollary \ref{cor:dmu_structure}}
\label{app:dmu_proof}

This appendix supplies the full proofs of Theorem~\ref{thm:DMU}
and Corollary~\ref{cor:dmu_structure} stated in
Section~\ref{subsec:derived_dmu}.  Both theorems are
machine-verified in \Lean{} (Appendix~\ref{sec:lean}, as \texttt{DMU}
and \texttt{DMU\_structure}); the proofs given here are the
human-readable counterparts of those checked deductions --- the same
step-by-step reasoning, on the same hypotheses and axioms, that the
type-checker certifies --- set out for the convenience of a reader who
would rather follow the argument in prose than in \Lean{}.
Both follow the Hilbert-style
format used throughout the paper: formal steps interspersed with
verbal commentary in italics.  The reduced-served predicate
$\Served^{\setminus y}$ is given by
Definition~\ref{def:reduced_served_set}.

\begin{proof}[Proof of Theorem~\ref{thm:DMU} (Diminishing marginal utility)]
Fix $a, t, g, y, E$ as in the theorem statement.  Its two
hypotheses unfold into three working clauses.  The first,
``the marginal unit $y$ exists'' (Definition~\ref{def:marginal_end}),
is the conjunction of (i) uniqueness of $y$ on $E$ --- to no unit
other than $y$ is $E$ allotted --- and (ii) marginality of $E$ in
the full served set.  The second, ``$\Served^{\setminus y} \neq
\emptyset$'', is (iii) non-emptiness of the reduced supply (some end
other than $E$ is served by $g$).  We argue from (i)--(iii).

\medskip

\noindent\textbf{Step 1 (The reduced served set sits above $E$).}
By marginality (ii), every $F$ with $\Served(a,t,g,F)$ and
$F \neq E$ satisfies $F \pref{a}{t} E$.  By uniqueness (i), any
$F$ with $\Served^{\setminus y}(a,t,g,F)$ also has
$\Served(a,t,g,F)$ (just drop the $x \neq y$ restriction); and
$E$ itself fails $\Served^{\setminus y}$ (no unit other than
$y$ serves $E$).  So every $F$ with
$\Served^{\setminus y}(a,t,g,F)$ satisfies $F \neq E$, and
hence by marginality $F \pref{a}{t} E$.

This is the central observation: removing $y$'s cell from the
schedule strips the bottom of the served set (since $E$ was
the marginal end and uniquely served by $y$), leaving exactly
the upper part.  Every end remaining in the reduced served set
is above $E$ by the marginality of $E$.

\medskip

\noindent\textbf{Step 2 (The marginal $E^*$ of the reduced
served set exists).}
By non-emptiness (iii), pick $F_0 \neq E$ with
$\Served(a,t,g,F_0)$, witnessed by some unit $x_0$ with
$\UnitOf(x_0,g) \land \Allot(a,t,x_0,F_0)$.  If $x_0 = y$,
then $\Allot(a,t,y,F_0)$ together with the theorem premise
$\Allot(a,t,y,E)$ would force $F_0 = E$ by \axref{MU0}
functionality --- contradiction.  So $x_0 \neq y$, hence
$\Served^{\setminus y}(a,t,g,F_0)$: the reduced served set is
non-empty.  By Lemma~\ref{lem:served_choice_relevant} it is a
subset of $\EndAt(a,t)$, which is finite
(Remark~\ref{rem:finiteness}) and strictly totally ordered by
$\pref{a}{t}$; a non-empty finite subset of a strict total
order has a unique minimal element, which is the reduced
marginal end $E^*$ of Definition~\ref{def:marginal_end}.

The existence of the previously-marginal end is not an extra
assumption: it is manufactured from non-emptiness by the
functionality of the schedule.  This is where \axref{MU0}
enters the theorem's deduction directly.

\medskip

\noindent\textbf{Step 3 (DMU conclusion).}
$E^*$ lies in the reduced served set, so by Step~1,
$E^* \pref{a}{t} E$.

Step~2 supplies the marginal $E^*$; Step~1 places it strictly
above $E$.  This is the classical DMU statement.  The result is
purely ordinal: no cardinal utility function is invoked.
\end{proof}

\begin{proof}[Proof of Corollary~\ref{cor:dmu_structure} (Structure preservation)]
Take any $F' \in \Served^{\setminus y}$ and any choice-relevant
$G$ with $\Serviceable(a,t,g,G)$ and $G \pref{a}{t} F'$.  By
\axref{MU4} in the full served set,
$\Served(a,t,g,G)$ holds, so some unit $z$ of $g$ is allotted
to $G$.  If $z = y$, then $\Allot(a,t,y,G)$ combined with
$\Allot(a,t,y,E)$ forces $G = E$ by \axref{MU0} functionality;
but by Step~1 of the theorem proof, $F' \pref{a}{t} E$, so
$G \pref{a}{t} F' \pref{a}{t} E$, contradicting $G = E$ under
\axref{O3} and \axref{O4}.  Hence $z \neq y$, and so $G \in
\Served^{\setminus y}$.  This establishes \axref{MU4} for the
reduced allotment: the reduced served set is also a top-segment
of the $g$-serviceable choice-relevant ends under $\pref{a}{t}$.

Removing the bottom of a top-segment yields a smaller top-segment.
\axref{MU4} in the full served set guarantees the top-segment shape;
\axref{MU0} functionality applied to $y$ ensures that $y$'s
removal only strips $E$ (not any higher end); marginality of $E$
ensures the strip happens at the bottom rather than the middle.
This is the formal counterpart of Mises's implicit claim that
the $n$-to-$n-1$ transition is structurally clean: no
Swiss-cheese appears in the reduced view.
\end{proof}

\section{A first-order logic primer}\label{app:fol}

This appendix collects the logical vocabulary the paper uses and
explains how the apparatus relates to ordinary set theory.  None
of it is needed to follow the economic argument; it is a reference
for readers who want the terms spelled out, and it makes precise
two places where the logic does real work --- the consistency
claim of Appendix~\ref{sec:lean} and the finiteness convention of
Remark~\ref{rem:finiteness}.

\subsection*{The framework in brief}

A \emph{many-sorted first-order language} (or \emph{signature})
fixes a finite list of \emph{sorts} and a list of \emph{primitive
relation symbols}, each with a fixed number and kind of argument.
For $\Lprx$ the sorts are the five of Section~\ref{sec:language}
($\mathsf{Actors}$, $\mathsf{Actions}$, $\mathsf{Ends}$,
$\mathsf{Things}$, $\mathsf{Times}$) and the primitive relations
are the six $\Acts$, $\Avail$, $\EndOf$, $\Use$, $\succ$, $<$.
The language has no function symbols, so its \emph{terms} are just
variables.  \emph{Atomic formulas} are a primitive relation or an
equality applied to terms (e.g.\ $\Acts(a,\alpha,t)$ or $x = y$);
\emph{formulas} are built from these with the connectives
$\land, \lor, \lnot, \rightarrow$ and the quantifiers
$\forall, \exists$ ranging over a sort; a \emph{sentence} is a
formula with no free variables.  A \emph{theory} is a set of
sentences, its \emph{axioms}; $\Tprx$ is one such theory.

On the semantic side --- the structures, and what it is for a
formula to hold in one --- the apparatus is, at bottom, elementary
set theory, and translates back to it with nothing added and
nothing lost.  A \emph{sort} is
a set; a \emph{primitive relation} is a subset of a Cartesian
product of those sets (for instance $\Acts$ is a set of triples
$(a,\alpha,t) \in \mathsf{Actors} \times \mathsf{Actions} \times
\mathsf{Times}$); a \emph{structure} (or \emph{interpretation}) is
a choice of a nonempty set for each sort together with such a
subset for each primitive relation; an \emph{axiom} is a
constraint the sets and subsets must meet; and
``$S \models \varphi$'' (read ``$S$ \emph{satisfies} $\varphi$'')
means $\varphi$ is true of the sets-and-relations $S$, under
Tarski's recursive definition of truth.  A structure is a
\emph{model} of a theory $T$ when it satisfies every axiom of $T$.
The framework introduces no objects beyond these sets and subsets;
what it adds is a disciplined language for writing the constraints
down and a precise notion of what follows from them.

This dictionary extends to the logical operations.  A formula
$\varphi(x_1,\dots,x_n)$ denotes its \emph{extension}, the subset
$\{\bar a : \varphi(\bar a)\}$ of the relevant product of domains,
and on extensions the connectives are the Boolean set operations:
conjunction is intersection, disjunction is union, negation is
complement, $\top$ the whole product and $\bot$ the empty set.  The
quantifiers are the one step beyond, and they are not Boolean.
Picture a relation $R \subseteq X \times Y$ as a region in the
plane.  Then $\exists y\,R(x,y)$ --- ``\emph{some} $y$ works for
this $x$'' --- is the \emph{shadow} the region casts on the
$X$-axis: its
\emph{projection},\footnote{The algebraic study of this view ---
Boolean algebra enriched with projection (the quantifier operation)
and diagonal (equality) operators --- is \emph{cylindric algebra}
(Tarski) and \emph{polyadic algebra} (Halmos).} keeping each $x$
for which at least one point $(x,y)$ lies in $R$ and forgetting
which $y$.  Dually, $\forall y\,R(x,y)$ --- ``\emph{every} $y$
works'' --- keeps an $x$ only when the whole vertical line above it
lies in $R$.  Projection cannot be built from $\cap, \cup$, and
complement; that is precisely what lifts first-order logic above
Boolean set algebra.  A structure is
thus a family of subsets of the products of its domains, closed
under intersection, union, complement, and projection --- together
with the diagonals $\{(a,a)\}$ that interpret equality --- and the
axioms of $\Tprx$ are constraints on that family.  Projection is
also where the metatheory re-enters: $\exists$ over an infinite
domain need not attain a witness, which is exactly the point behind
the finiteness convention of Remark~\ref{rem:finiteness} (a least
served end exists only because the menu is finite).

That last notion takes two forms, which first-order logic proves
equivalent.  \emph{Logical consequence}, $T \models \varphi$,
means every model of $T$ satisfies $\varphi$; \emph{provability},
$T \vdash \varphi$, means $\varphi$ has a finite formal proof from
$T$ in a standard deductive calculus.  Soundness
($\vdash \Rightarrow \models$) and G\"odel's completeness theorem
($\models \Rightarrow \vdash$) together give
$T \models \varphi \iff T \vdash \varphi$: ``true in every model''
and ``has a finite, mechanically checkable proof'' coincide.  This
equivalence is what the \Lean{} companion of Appendix~\ref{sec:lean}
trades on.  Two further facts are used below: a theory is
\emph{consistent} when it has a model (equivalently, proves no
contradiction), and \emph{compactness} says that if every finite
subset of a theory has a model then the whole theory does.

\subsection*{First-order logic and set theory}

The semantics just sketched is set-theoretic through and through.
It is tempting to conclude that the apparatus as a whole ``is just
set theory'' --- but that over-reaches the semantic picture, and
the relationship is more layered.  The layers matter at the two
points flagged above; three are worth separating.

\begin{enumerate}[label=(\arabic*),leftmargin=2.2em,itemsep=2pt]
\item \emph{Syntax and proof.}  The formulas of $\Lprx$ are finite
strings, and a proof is a finite tree of inference steps ---
\emph{finitary} objects, codeable as natural numbers with the
syntactic operations (``is a formula,'' ``is a proof,''
substitution) primitive recursive.  Reasoning about them needs only
a weak, finitary metatheory, not the infinitary set theory the
semantics of~(3) rests on.\footnote{If one takes set theory as a
universal background, finite strings are of course coded as sets as
well (a string is a function from a finite ordinal into the
alphabet); the point is that the syntax does not \emph{require} that
infinitary apparatus --- Hilbert's finitist standpoint --- which is
what makes proof-checking a decidable, terminating task, whereas
truth in all models is not even semidecidable in general.}  Checking
a proof is a finite, mechanical task; this is exactly what makes the
machine verification of Appendix~\ref{sec:lean} possible.
\item \emph{The object theory.}  $\Tprx$ is a particular set of
$\Lprx$-sentences --- the constraints this paper writes down and
reasons within.
\item \emph{The metatheory.}  To say what a \emph{model} is, and to
prove results \emph{about} $\Tprx$ (that it is consistent, that no
theory captures finiteness, and so on), one reasons in a background
theory.  The standard choice is set theory: models are
set-theoretic structures and ``$S \models \varphi$'' is defined
set-theoretically.  This is the sense in which set theory sits
``underneath'' the enterprise.
\end{enumerate}

Two clarifications keep the layers straight.  First, \emph{set
theory is itself a first-order theory}: \textsc{zfc}\footnote{Zermelo-Fraenkel set theory with the axiom of choice} has one
signature (the membership relation $\in$) and a list of first-order
axioms.  First-order logic is thus the general framework and set
theory one theory expressed in it --- not the reverse; ``first-order
logic $=$ set theory'' inverts the containment.  Second, \emph{the
metatheory is a choice, not a fixture}: Appendix~\ref{sec:lean}
proves $\Tprx$ consistent by building a model as an object of
\Lean{}'s dependent type theory, not as a set.  There the sorts
become types, and ``is a model'' is settled by the type-checker.  It is
the same axiom system either way; only the background framework
used to check it has changed, and \Lean{}'s framework involves no sets
at all.  The economic conclusions of this paper come out the same
whichever framework a reader adopts.

The finiteness convention of Remark~\ref{rem:finiteness} sits in
the third layer.  We assume each actor's choice menu is finite ---
a condition on the \emph{models}, not an axiom of $\Tprx$.  It
cannot be an axiom: by compactness no first-order theory has
exactly the finite-menu structures as its models (adjoin constant
symbols forcing an arbitrarily large menu; every finite subset of
the enlarged theory is satisfiable, so by compactness the whole has
a model, one with an infinite menu).  This is the ordinary
situation, not a defect peculiar to praxeology: ``let $G$ be a
\emph{finite} group'' likewise restricts the models from outside
and is no axiom of group theory, which has infinite models too.
Finiteness is accordingly carried as an explicit hypothesis in the
results that use it (Definition~\ref{def:marginal_end},
Lemma~\ref{lem:served_choice_relevant}, Theorem~\ref{thm:DMU}),
exactly as one writes ``let $G$ be finite.''

\subsection*{Glossary}

\begin{description}[leftmargin=1.4em, style=nextline,
  font=\normalfont\bfseries, itemsep=2pt]
\item[Signature (language)] the list of sorts and primitive
  relation symbols with their arities; here $\Lprx$.
\item[Sort] a basic kind of object; in a structure, a set.  $\Lprx$
  has five: $\mathsf{Actors}$, $\mathsf{Actions}$,
  $\mathsf{Ends}$, $\mathsf{Things}$, $\mathsf{Times}$.
\item[Primitive relation] an undefined relation symbol; in a
  structure, a subset of a product of domains.
\item[Term] an expression denoting an object; in $\Lprx$, having no
  function symbols, simply a variable.
\item[Atomic formula] a primitive relation or an equality applied
  to terms.
\item[Formula] an expression built from atomic formulas with
  $\land, \lor, \lnot, \rightarrow, \forall, \exists$.
\item[Sentence] a formula with no free variables --- one with a
  definite truth value in each structure.
\item[Free / bound variable] a variable lying outside / inside the
  scope of a quantifier that binds it.
\item[Theory] a set of sentences, taken as axioms; e.g.\ $\Tprx$.
\item[Axiom] a sentence posited as given.
\item[Structure (interpretation)] a nonempty set (domain) for each
  sort together with a relation for each primitive symbol.
\item[Model (of $T$)] a structure satisfying every axiom of $T$,
  written $S \models T$.
\item[Satisfaction ($S \models \varphi$)] $\varphi$ is true in the
  structure $S$.
\item[Logical consequence ($T \models \varphi$)] every model of $T$
  satisfies $\varphi$.
\item[Provability ($T \vdash \varphi$)] $\varphi$ has a finite
  formal proof from $T$.
\item[Soundness] everything provable is true in every model
  ($\vdash \Rightarrow \models$).
\item[Completeness (G\"odel)] everything true in every model is
  provable ($\models \Rightarrow \vdash$); for first-order logic the
  two notions coincide.
\item[Consistency] $T$ proves no contradiction --- equivalently,
  $T$ has a model (by the completeness theorem).
\item[Compactness] if every finite subset of $T$ has a model, then
  $T$ has a model.
\item[Categoricity] all models of $T$ are isomorphic; a first-order
  theory with an infinite model is never categorical in this absolute
  sense --- by the L\"owenheim--Skolem theorems it has models in
  different infinite cardinalities, which cannot be isomorphic ---
  though it may still be \emph{$\kappa$-categorical} (all its models
  of some fixed infinite cardinality $\kappa$ isomorphic).  Hence
  $\Tprx$ is not categorical (see the footnote to \axref{T6}).
\item[First- vs.\ second-order] quantifiers ranging over elements
  of the domains (first-order) versus over relations or subsets of
  them (second-order); finiteness is second-order, not first-order.
\item[Metatheory] the background theory in which structures, truth,
  and results \emph{about} a theory are framed --- set theory by
  default, \Lean{}'s dependent type theory in Appendix~\ref{sec:lean}.
\end{description}


\end{document}